\begin{document}


\title{On the Complexity of Hazard-Free Formulas}

\author{Leah {London Arazi}}
\orcid{0009-0006-3867-5478}
\affiliation{
  \institution{Tel Aviv University}
  \city{Tel Aviv}
  \country{Israel}
}
\email{leahl@mail.tau.ac.il}
\authornote{This research was co-funded by the European Union (ERC, EACTP, 101142020), the Israel Science Foundation (grant number 514/20) and the Len Blavatnik and the Blavatnik Family Foundation. Views and opinions expressed are however those of the author(s) only and do not necessarily reflect those of the European Union or the European Research Council Executive Agency. Neither the European Union nor the granting authority can be held responsible for them.}

\author{Amir {Shpilka}}
\orcid{0000-0003-2384-425X}
\affiliation{
  \institution{Tel Aviv University}
  \city{Tel Aviv}
  \country{Israel}
}
\email{shpilka@tauex.tau.ac.il}
\authornotemark[1]

\renewcommand{\shortauthors}{Leah London Arazi and Amir Shpilka}

\begin{abstract}
This paper studies the \emph{hazard-free formula complexity} of Boolean functions.

Our first result shows that unate functions are the only Boolean functions for which the monotone formula complexity of the hazard-derivative equals the hazard-free formula complexity of the function itself. Consequently, they are the only functions for which the hazard-derivative approach of Ikenmeyer et al. (J. ACM, 2019) yields optimal bounds. 

Our second result proves that the hazard-free formula complexity of a uniformly random Boolean function is at most $2^{(1+o(1))n}$. Prior to this, no better upper bound than $O(3^n)$ was known. 
Notably, unlike in the general case of Boolean circuits and formulas, where the typical complexity is derived from that of the multiplexer function with $n$-bit selector, the hazard-free formula complexity of a random function is smaller than the optimal hazard-free formula for the multiplexer by an exponential factor in $n$.

We provide two proofs of this fact. The first is direct, bounding the number of prime implicants of a random Boolean function and using this bound to construct a DNF of the claimed size. The second introduces a new and independently interesting result: a weak converse to the hazard-derivative lower bound method, which gives an upper bound on the hazard-free complexity of a function in terms of the monotone complexity of a subset of its hazard-derivatives. 

Additionally, we explore the hazard-free formula complexity of block composition of Boolean functions and obtain a result in the hazard-free setting that is analogous to a result of Karchmer, Raz, and Wigderson (Computational Complexity, 1995) in the monotone setting. 
We show that our result implies a stronger lower bound on the hazard-free formula depth of the block composition of the set covering function with the multiplexer function than the bound obtained via the hazard-derivative method.
\end{abstract}

\begin{CCSXML}
<ccs2012>
   <concept>
       <concept_id>10003752.10003777.10003781</concept_id>
       <concept_desc>Theory of computation~Circuit complexity</concept_desc>
       <concept_significance>500</concept_significance>
       </concept>
   <concept>
       <concept_id>10003752.10003777.10003780</concept_id>
       <concept_desc>Theory of computation~Communication complexity</concept_desc>
       <concept_significance>300</concept_significance>
       </concept>
 </ccs2012>
\end{CCSXML}

\ccsdesc[500]{Theory of computation~Circuit complexity}
\ccsdesc[300]{Theory of computation~Communication complexity}

\keywords{Hazard-free computation, Boolean formulas, monotone formulas, Karchmer-Wigderson games, communication complexity, lower bounds}

\maketitle

\section{Introduction}  
\emph{Hazards} are a critical issue in the design of combinational circuits, which are key components in digital systems, realizing Boolean functions in the physical domain. A combinational circuit consists of gates connected by electrical wires. When setting voltage at the inputs, the signal propagates through the gates and wires until it reaches the output gate. This propagation involves inevitable signal \emph{delays}, which may vary on each path of the circuit. As a result, when the inputs change, some gates may respond faster than others, potentially leading to \emph{static hazards}. A static hazard occurs when a change in the input causes a brief, unintended fluctuation in the output, even though the output is expected to remain constant. Such hazards in combinational circuits can lead to unpredictable system behavior.

The study of hazards began in the late 1950s with the pioneering work of \cite{Huf57}, who suggested that it is possible to design circuits whose structure inherently prevents hazards without introducing delays. Huffman was the first to provide a hazard-free formula implementation for all Boolean functions, achieving a size of $O(\sqrt{n}\cdot 3^n)$. Later, additional works classified different types of hazards and suggested methods for identifying and eliminating them \cite{ung59, mcc62}. There is a broad body of literature on hazard-free computation, and we refer the interested reader to \cite{McC86, IKL+19} and references therein.

Recent research on hazard-free computation focuses on two main areas: electronic circuit design (e.g., \cite{FFL18, BLM20}) and computational complexity (e.g., \cite{IKL+19, KS20, Juk21, IK23}). While the motivation for the former is clear, the interest from the computational complexity community stems from the results of \cite{IKL+19}, which show that hazard-free formula complexity generalizes monotone formula complexity to all Boolean functions. This makes it an interesting restricted computational model of Boolean circuits and raises hope that hazard-free techniques may offer deeper insights into the general model.

Kleene's three-valued logic was introduced as an abstraction of static hazards \cite{yr64, Eic65}. This logic extends the standard Boolean values $\{0,1\}$ by introducing a third symbol, $\u$, representing an undefined or unstable value.

\begin{definition}[Kleene three-valued logic \cite{Kle52}]\label{def:k3}
Kleene's three-valued strong logic of indeterminacy extends the two-valued Boolean logic
by a third value $\u$. The Boolean values $\{0, 1\}$ are called \emph{``stable''} and $\u$ is called \emph{``unstable''}. 
The De Morgan gates are appropriately extended, as shown in \Cref{tbl:extended-gates}.
\end{definition}

\begin{table}[t]
\centering 
\caption{Extended De Morgan gates.}
\label{tbl:extended-gates}
\begin{tabular}[h]{c | c c c}
    or & $0$ & $\u$ & $1$ \\ 
    \hline
    $0$ & $0$ & $\u$ & $1$ \\
    $\u$ & $\u$ & $\u$ & $1$ \\
    $1$ & $1$ & $1$ & $1$ \\
\end{tabular}
\hspace{30pt}
\begin{tabular}[h]{ c | c  c  c }
    and & $0$ & $\u$ & $1$ \\ 
    \hline
    $0$ & $0$ & $0$ & $0$ \\
    $\u$ & $0$ & $\u$ & $\u$ \\
    $1$ & $0$ & $\u$ & $1$ \\
\end{tabular}
\hspace{30pt}
\begin{tabular}[h]{ c | c  c  c }
    not & $0$ & $\u$ & $1$ \\ 
    \hline
     & $1$ & $\u$ & $0$ \\
\end{tabular}
\end{table}

Now, any Boolean circuit or formula can be viewed as taking an input $\ternaryx{x}{n}$ and outputting a value in $\{0,\u,1\}$. The extended gates are monotone with respect to the following partial order:
\begin{definition}[$\precequ$ {\cite{Muk72}}]\label{def:bit-relations-precequ}
The relation $\precequ$ is a \emph{partial} order in which $\u \precequ 1$, $\u \precequ 0$ and $0,1$ are not comparable.
For $\ternaryx{x,y}{n}$, we denote  $x \precequ y$ if for every $i\in[n]$, $x_i \precequ y_i$.
\end{definition}

We present an abstraction of hazards using ternary algebra, following the definitions of \cite{FFL18,IKL+19}.

\begin{definition}[Resolution]\label{def:resolution} Let $\ternaryx{x}{n}$.
A resolution $\booleanx{y}{n}$ of $x$ is obtained by replacing every occurrence of $\u$ in $x$ by either $0$ or $1$. We denote by $R(x)$ the set of all resolutions of $x$, namely, $R(x):=\{\booleanx{y}{n} \mid x\precequ y\}$.
\end{definition}

\begin{definition}[Hazard]\label{def:hazard} 
Let $C$ be a Boolean circuit. $C$ has a \emph{hazard} at input $\ternaryx{x}{n}$ if and only if $C(x)=\u$ and there exists $\booleanx{b}{}$ such that for every $y \in R(x)$, it holds that $C(y)=b$.
\end{definition}

In words, there is a hazard at input $\ternaryx{x}{n}$ if the function is constant on every resolution of $x$ (i.e., regardless of how the unstable coordinates are resolved into stable values, the output remains unchanged), yet the circuit still outputs $\u$.
Put differently, a hazard-free circuit outputs $\u$ if and only if the output cannot be determined solely by the stable bits.

\begin{definition}[Hazard-free circuit]\label{def:hazard-free} 
Let $C$ be a Boolean circuit. $C$ is said to be \emph{hazard-free} if for every $\ternaryx{x}{n}$, $C$ does not have a hazard at $x$.
\end{definition}

By examining the truth tables in \Cref{tbl:extended-gates}, we observe that the extended De Morgan gates are hazard-free. However, circuits\footnote{An example of a Boolean circuit exhibiting hazards is shown in \cite[Figure 1]{IKL+19}.} and formulas composed of these gates do not necessarily retain the hazard-free property. Consider $F(x)=x\lor \neg x$, a Boolean formula over one bit. $F$ computes the constant Boolean function $1$, and by the truth tables in \Cref{tbl:extended-gates}, has a hazard at $x=\u$. 
Another less trivial example of a Boolean formula exhibiting hazards, is the formula in \Cref{fig:formula-with-hazard-example}. It is not difficult to verify that when evaluated on every resolution in $R(\u\u 00)$ the formula outputs $0$, but on $\u\u 00$, the formula outputs $\u$.

\begin{figure}
\centering
\begin{tikzpicture}
  [level distance=1cm, 
   level 1/.style={sibling distance=1.5cm},
   level 2/.style={sibling distance=2cm},
   level 3/.style={sibling distance=1cm},
   every node/.style={draw,circle,minimum size=8mm,fill=lightgray!50, inner sep=0pt},
   font=\small]

  \node{$\land$}
    child { node {$\neg x_1$} }
    child { node {$\land$}
      child { node {$\lor$}
        child { node {$x_3$} }
        child { node {$x_2$} }
      }
      child { node {$\lor$}
        child { node {$x_4$} }
        child { node {$\neg x_2$} }
      }
    };
\end{tikzpicture}
\caption{Formula with a hazard at $x_1=\u$, $x_2=\u$, $x_3=0$, $x_4=0$.}
\Description{An example of a formula that exhibits hazards. The depicted formula is $\neg x_1 \land ((x_3 \lor x_2) \land (x_4 \lor \neg x_2))$.}
\label{fig:formula-with-hazard-example}
\end{figure}

Not every ternary function $\ternaryf{f}{n}$ can be computed by a Boolean circuit. For instance, consider $f$ over one bit, defined such that $f(0)=f(1)=u$ and $f(u)=1$. The reason $f$ cannot be computed by a circuit is that every gate in \Cref{tbl:extended-gates} outputs a Boolean value when given a Boolean input and, as mentioned, is monotone with respect to $\precequ$. It can be proven by induction that circuits also preserve these properties.
Consequently, only a subset of all ternary functions can be computed by Boolean circuits. \cite{Muk72} refers to these functions as \emph{natural functions}.

\begin{definition}[Natural function]\label{def:natural-function} 
A function $\ternaryf{f}{n}$ is called a \emph{natural function} if it satisfies the following:
\begin{enumerate} 
    \item $f$ preserves stable values, i.e., for every Boolean input $\booleanx{x}{n}$ we have $\booleanx{f(x)}{}$.
    \item $f$ is monotone with respect to $\precequ$. Intuitively, replacing stable bits in the input with $\u$ can only cause the output to change to $\u$.
\end{enumerate}
\end{definition}

\begin{proposition}[{\cite[Theorem 3]{Muk72}}]\label{prop:natural-circuits}
A function $\ternaryf{f}{n}$ can be computed by a Boolean circuit if and only if $f$ is natural.
\end{proposition}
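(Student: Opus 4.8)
The plan is to prove both implications. The direction ``$f$ computable by a Boolean circuit $\Rightarrow$ $f$ natural'' is a routine structural induction on the circuit, processing gates in topological order. The base cases are input variables $x_i$, computing the $i$-th projection (clearly stable-preserving and $\precequ$-monotone), and the constants $0,1$. For the inductive step, if a gate applies one of the extended De~Morgan operations to subcircuits computing natural functions $g_1,g_2$, then the composed function still preserves stable values — on a Boolean input each $g_j$ is Boolean and the extended gates agree with the ordinary ones on $\{0,1\}$ — and is still $\precequ$-monotone, since each $g_j$ is $\precequ$-monotone, the extended gates are $\precequ$-monotone (as recalled just before the proposition), and monotone maps compose. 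Hence the output gate computes a natural function.

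For the converse I would, given a natural $f$, write down a De~Morgan formula for $f$ directly from its table over $\{0,u,1\}^n$, built from two kinds of terms. For $p\in\{0,u,1\}^n$ put $C_p(x):=\bigwedge_{i:\,p_i=1}x_i\ \wedge\ \bigwedge_{i:\,p_i=0}\bar x_i$ (coordinates with $p_i=u$ omitted); a short case check gives $C_p(x)=1$ iff $p\precequ x$, $C_p(x)=0$ iff $p$ and $x$ carry opposite stable values in some coordinate, and $C_p(x)=u$ otherwise. For $x^*\in\{0,u,1\}^n$ with at least one $u$-coordinate put $I_{x^*}(x):=\bigwedge_{i:\,x^*_i=1}x_i\ \wedge\ \bigwedge_{i:\,x^*_i=0}\bar x_i\ \wedge\ \bigwedge_{i:\,x^*_i=u}(x_i\wedge\bar x_i)$; here the self-contradictory factors $x_i\wedge\bar x_i$ force $I_{x^*}$ to never output $1$, and one checks $I_{x^*}(x)=u$ iff $x\precequ x^*$ and $I_{x^*}(x)=0$ otherwise. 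Since $f$ preserves stable values, $f(x^*)=u$ forces $x^*$ to have a $u$-coordinate, so the following formula is well-defined:
\[
C(x)\ :=\ \bigvee_{p\,:\,f(p)=1}C_p(x)\ \ \vee\ \ \bigvee_{x^*\,:\,f(x^*)=u}I_{x^*}(x).
\]

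To verify $C\equiv f$ I would split on the value of $f(x)$. If $f(x)=1$, then $C_x(x)=1$ (as $x\precequ x$), so $C(x)=1$. If $f(x)=0$, the crucial point is that \emph{whenever $f(p)=1$ and $f(q)=0$, the vectors $p$ and $q$ carry opposite stable values in some coordinate}: otherwise $p,q$ would have a common $\precequ$-upper bound $r$ — take $r_i$ to be whichever of $p_i,q_i$ is stable (well-defined, since by assumption they are never opposite stable values), and $u$ if both are $u$ — and $\precequ$-monotonicity would force $f(r)$ to be both $1$ and $0$. Thus every $C_p(x)$ with $f(p)=1$ equals $0$; and every $I_{x^*}(x)$ with $f(x^*)=u$ equals $0$, since $I_{x^*}(x)\ne 0$ would give $x\precequ x^*$ and hence $f(x)\precequ f(x^*)=u$ by monotonicity, contradicting $f(x)=0$. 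So $C(x)=0$. Finally, if $f(x)=u$, then no $C_p(x)$ with $f(p)=1$ can equal $1$ (that would force $f(x)=1$), so the first disjunction contributes only values in $\{0,u\}$, while the term $I_x(x)$ of the second disjunction equals $u$; hence $C(x)=u$.

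The step I expect to be the real obstacle is this last case, which is precisely why a plain (ternary) DNF over the $1$-inputs fails: on an ``unstable'' input the formula must output $u$ rather than the stable value that the Boolean part of $f$ might otherwise produce, and it must do so without perturbing the $0$- and $1$-inputs. That is exactly the job of the injector terms $I_{x^*}$, and checking that they are simultaneously inert on every stably-valued input and collectively pin the output to $u$ on every unstable input — for which the common-upper-bound lemma above and the fact that $\{y:f(y)=u\}$ is a $\precequ$-down-set are the load-bearing facts — is the technical heart of the argument.
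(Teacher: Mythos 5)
Your proof is correct and complete. Be aware that the paper itself offers no proof of this proposition: it is imported from Mukaidono \cite{Muk72}, with only the easy direction gestured at in the sentence preceding the statement (``the extended De Morgan gates are monotone with respect to $\precequ$ and preserve stable values''). Your forward direction is exactly that routine induction. For the converse, your construction --- a ternary DNF over the $1$-points of $f$ augmented with the ``injector'' terms $I_{x^*}$ for the $u$-points --- is sound: the two load-bearing facts you isolate (that a $1$-point and a $0$-point of a $\precequ$-monotone function must disagree in some stable coordinate, proved via the common upper bound $r$, and that $f^{-1}(u)$ is closed downward under $\precequ$) are precisely what is needed, and your case analysis on $f(x)\in\{0,1,u\}$ checks out, including the observation that stable-value preservation guarantees the term $I_x$ exists whenever $f(x)=u$. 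One boundary case deserves a remark: if $f$ is a constant function, the corresponding conjunction or disjunction in $C$ is empty, and in fact no De Morgan circuit built only from input leaves can compute a constant natural function (every gate outputs $u$ on the all-$u$ input), so the proposition as literally stated needs constants admitted as gates; for non-constant natural $f$, every term in your formula is a nonempty conjunction and the argument goes through verbatim.
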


An equivalent way of defining a hazard-free circuit is by the ternary function it computes. A circuit is considered hazard-free if it calculates the following natural function:

\begin{definition}[The hazard-free extension of $f$]\label{def:hazard-free-extension} 
Let $\booleanf{f}{n}$ be a Boolean function. The \emph{hazard-free extension of $f$}, denoted $\ternaryf{\tilde{f}}{n}$, is defined as follows:
\[
    \tilde{f}(x)= 
    \begin{cases}
      0, & \text{if $f(y)=0$ for all $y\in R(x)$},\\
      1, & \text{if $f(y)=1$ for all $y\in R(x)$},\\
      \u, & otherwise.
    \end{cases}
\]
\end{definition}
\cite{IKL+19} showed a profound correspondence between monotone complexity and hazard-free complexity. Recall that a Boolean function is monotone if changing the value of an input coordinate from $0$ to $1$ cannot change the value of the function from $1$ to $0$. Monotone functions can be computed by monotone circuits, which are circuits that do not use negation gates. Using the  hazard-derivatives method, which we discuss next, they proved that for monotone functions, the hazard-free complexity and the monotone complexity are identical.

\begin{theorem}[Hazard-free and monotone circuit complexity are equivalent {\cite[Theorem 1.3]{IKL+19}}]\label{thm:monotone is hazard}
Let $\booleanf{f}{n}$ be a monotone Boolean function, then:
\[
\size_C^\u(f)=\size_C^+(f),
\]
where $\size_C^{+}(f)$ and $\size_{C}^{\u}(f)$ denote the minimal size of a monotone Boolean circuit  and a hazard-free circuit computing $f$, respectively.
\end{theorem}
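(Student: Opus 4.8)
The plan is to prove the two inequalities separately. The inequality $\size_C^u(f)\le\size_C^+(f)$ in fact holds for every Boolean function (it is only non-trivial when $f$ is monotone, since otherwise $\size_C^+(f)$ is infinite): I claim that \emph{any} monotone circuit --- one built only from $\wedge$, $\vee$ and the constants --- is automatically hazard-free. Fix $x\in\{0,u,1\}^n$ and let $x^-,x^+\in\{0,1\}^n$ be the resolutions of $x$ obtained by setting every $u$ to $0$ and to $1$, respectively; they are the $\precequ$-least and $\precequ$-greatest resolutions, so by monotonicity of $f$ in the usual Boolean order $f$ is constant on $R(x)$ iff $f(x^-)=f(x^+)$, and then the common value is $f(x^-)$. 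A straightforward induction on the circuit structure shows that for every gate $g$ and every $x$ one has $g(x)=1\iff g(x^-)=1$ and $g(x)=0\iff g(x^+)=0$: the base cases (inputs and constants) are immediate, and for a $\vee$-gate $g=g_1\vee g_2$ one uses that over Kleene logic $g(x)=1$ iff some $g_i(x)=1$ while $g(x)=0$ iff both $g_i(x)=0$ (dually for $\wedge$-gates), so the induction hypothesis pushes the claim through. Applying this to the output gate shows $C$ outputs a stable bit $b$ exactly when it is constant $=b$ on $R(x)$; hence $C$ has no hazard and computes $\tilde f$, so an optimal monotone circuit for $f$ is already a hazard-free circuit for $f$ of the same size.

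For the reverse inequality $\size_C^+(f)\le\size_C^u(f)$, let $C$ be an optimal hazard-free circuit for $f$; as noted in the excerpt this forces $C$ to compute $\tilde f$ on all of $\{0,u,1\}^n$. If $f(0^n)=1$ then monotonicity gives $f\equiv 1$ and the equality is trivial, so assume $f(0^n)=0$. I build from $C$ a monotone circuit of size at most $|C|$ computing $f$, via the hazard-derivative at $0^n$. Feed $C$ the ternary input $x^{(z)}$ that is $u$ on the coordinates with $z_i=1$ and $0$ elsewhere, where $z\in\{0,1\}^n$ is now the input to the new circuit. Since $x^{(z)}\precequ 0^n$ and every sub-circuit of $C$ computes a natural function, each gate $g$ outputs either its value $g(0^n)\in\{0,1\}$ at the all-zeros input --- a fixed constant, obtained once by evaluating $C(0^n)$ --- or $u$. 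Let $g^u(z)\in\{0,1\}$ indicate that gate $g$ outputs $u$ on $x^{(z)}$. Naturalness makes each $g^u$ monotone in $z$, and the Kleene truth tables yield, for every gate, a recurrence computing $g^u$ from the $g_i^u$ of its inputs using \emph{at most one} monotone gate (or just a wire, for a NOT gate): e.g.\ a $\vee$-gate with $g_1(0^n)=g_2(0^n)=0$ has $g^u=g_1^u\vee g_2^u$, one with $g_1(0^n)=1$ has $g^u=g_1^u\wedge g_2^u$ if also $g_2(0^n)=1$ and $g^u=g_1^u$ if $g_2(0^n)=0$, and dually for $\wedge$. Stacking these recurrences gives a monotone circuit $D$ with $|D|\le|C|$ whose output is $C^u$, the indicator that $\tilde f(x^{(z)})=u$. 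But $R(x^{(z)})=\{y:y\le z\}$, and since $f$ is monotone with $f(0^n)=0$ it is non-constant on this subcube exactly when $f(z)=1$; hence $D$ computes $f$, giving $\size_C^+(f)\le|C|=\size_C^u(f)$.

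The crux of the argument is this hazard-derivative step: checking that for \emph{every} gate type and \emph{every} combination of the (constant) values $g_1(0^n),g_2(0^n)$ the $u$-indicator $g^u$ is produced by a single monotone gate. This is a finite case analysis over the $3\times 3$ Kleene tables, but it is the only place where the structure of Kleene logic is used --- and it is exactly why negations are harmless \emph{here}, since $\lnot$ fixes $u$ --- and getting the split right (in particular, which of $g_1^u,g_2^u$ to keep when $g_i(0^n)$ already pins the gate's value) is where care is needed. I would also note that the same transformation gives the sharper statement $\size_C^+(\mathrm{d}_a f)\le\size_C^u(f)$ for the hazard-derivative of $f$ at an arbitrary point $a$, of which the theorem is the special case $a=0^n$ for monotone $f$.
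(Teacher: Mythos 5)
Your proof is correct, and it is essentially the argument of \cite{IKL+19} that the paper cites rather than reproves: direction one is the observation that monotone circuits are automatically hazard-free (your $x^-/x^+$ induction is sound), and direction two is exactly the hazard-derivative simulation at $0^n$, whose gate-by-gate case analysis you carry out correctly, together with the identity $\dfunc{f}{0^n}=f$ for monotone $f$ with $f(0^n)=0$. The only route the paper itself develops for this equivalence is the formula analogue via the hazard-free Karchmer--Wigderson game (\Cref{thm:kw-hazard-free-monotone} and \Cref{cor:derivative-lower-bound}); your direct circuit-level construction is the right tool for the circuit version, where no KW-type characterization is available.
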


In fact, \cite{Juk21} further proved that an optimal hazard-free circuit for a monotone function must be a monotone circuit. As superpolynomial lower bounds for monotone circuits \cite{Raz85, AB87, Tar88} are known, \cite{IKL+19} used \Cref{thm:monotone is hazard} to prove lower bounds on the hazard-free circuit complexity of monotone functions whose Boolean circuit complexity is only polynomial. 

While \Cref{thm:monotone is hazard} only concerns monotone functions, \cite{IKL+19} found a striking connection between hazard-free complexity and monotone complexity for all Boolean functions.  That is, the hazard-free complexity of every Boolean function $\booleanf{f}{n}$ is bounded from below by the monotone complexity of the \emph{hazard-derivative} of $f$, defined as follows:

\begin{restatable}[The hazard-derivative of a natural function {\cite[Definition 4.3]{IKL+19}}]{definition}{defhazardderivative}\label{def:hazard-derivative} 
Let $\ternaryx{f}{n}$ be a natural ternary function. The \emph{hazard-derivative} of $f$, denoted $\booleanf{\df{f}}{2n}$, is defined as follows:
\[
    \forall\booleanx{x,y}{n}, \quad \df{f}(x;y)=\begin{cases}
      0, & \text{if $f(x+\u y)=f(x)$},\\
      1, & \text{if $f(x+\u y)=\u$}.\\
    \end{cases}
\]
Where for every $i\in[n]$: 
\[
    (x+\u y)_{i}=\begin{cases}
      x_{i}, & \text{if $y_{i}=0$},\\
      \u, & \text{if $y_{i}=1$}.\\
    \end{cases}
\]
For a Boolean function $\booleanf{f}{n}$, we define $\df{f}:=\df{\tilde{f}}$. 
\end{restatable}

The $+$ operator in \Cref{def:hazard-derivative} also serves as the hazard-free XOR operator (see \Cref{def:xor}).\footnote{When considering $\u y_{i}=\begin{cases}
      0, & \text{if $y_{i}=0$},\\
      \u, & \text{if $y_{i}=1$}.\\
    \end{cases}$}

For a fixed $\booleanx{x}{n}$, $\df{f}(x;y)$ indicates whether ``perturbing'' $x$ according to $\u y$ causes $\tilde{f}$ to output an unstable value.
It is not difficult to see that $\dfunc{f}{x}$ is a monotone Boolean function, as if we perturb more bits in $x$, we are more likely to have two resolutions on which $f$ outputs different values. 

\begin{lemma}[The hazard-derivative is monotone {\cite[Lemma 4.6]{IKL+19}}]\label{lem:derivative-monotone}
Let $\booleanf{f}{n}$ be a Boolean function and $\booleanx{x}{n}$ be a fixed input, then $\dfunc{f}{x}$ is a \emph{monotone} Boolean function.
\end{lemma}

Every hazard-free circuit (or formula) for $f$ can be tweaked to yield a monotone circuit (or formula) for $\df{f}(x;\cdot)$ resulting in the following Theorem:\footnote{\cite[Theorem 4.9]{IKL+19} only speaks about circuit size, but the same proof holds for formulas as well.} 
\begin{restatable}[The hazard-derivative lower bound {\cite[Theorem 4.9]{IKL+19}}]{theorem}{thmderivativelowerbound}\label{thm:derivative-lower-bound}
Let $\booleanf{f}{n}$ be a Boolean function and $\booleanx{x}{n}$ be a fixed Boolean input, then: 
\[
\size_{C}^{+}(\dfunc{f}{x})  \le \size_{C}^{\u}(f) \quad \text{and}\quad
\size_{F}^{+}(\dfunc{f}{x})  \le \size_{F}^{\u}(f),
\]
where $\size_F^{+}(f)$ and $\size_{F}^{\u}(f)$ denote the minimal size of a monotone Boolean formula  and a hazard-free formula computing $f$, respectively.
\end{restatable}

\cite{IKL+19}, and later \cite{Juk21}, provided examples of non-monotone Boolean functions that have polynomial-sized Boolean circuits but whose hazard-derivatives require large monotone circuit complexity, thereby establishing strong lower bounds on the hazard-free circuit complexity of these functions. We note that until \cite{IKL+19}, hazard-free lower bounds were proved only for restricted models such as hazard-free depth $2$ circuits \cite{Eic65, ND95}. Therefore, their work motivates the study of the hazard-free model as a bridge between monotone complexity and Boolean circuit complexity.

Additionally, \cite[Theorem 1.3]{IKL+19} proved that the hazard-derivative method yields \emph{tight} bounds for monotone Boolean functions. That is, for a monotone function $\size_{C}^{+}(\dfunc{f}{x})=\size_{C}^{\u}(f)$ and $\size_{F}^{+}(\dfunc{f}{x})=\size_{F}^{\u}(f)$ when $x=\bar{0}$. This raises the following intriguing question.

\begin{question}\label{que:monotone-gap-criterion}
Is there a criterion that determines whether the hazard-derivative method yields an \emph{exact} lower bound? If so, is this criterion easy to verify?
\end{question}

\subsection{The monotone gap} 
\cite{IKL+19, Juk21} obtained exponential lower bounds on the hazard-free computation of many Boolean functions using \Cref{thm:derivative-lower-bound}. However, there are Boolean functions for which the hazard-derivative method fails to provide optimal lower bounds on hazard-free complexity due to the low monotone complexity of the hazard-derivatives. \cite{IK23} highlights this issue and refers to it as the \emph{monotone barrier}. They define breaking the monotone barrier as achieving stronger lower bounds than those obtained by the hazard-derivative method. 

As an example of a result that breaks the monotone barrier, \cite{IK23} note that the hazard-derivatives of the parity function on $n$ variables, $\xor_n$, are simply OR functions on $n$ literals, which have trivial formulas of size $n$. On the other hand, it is known that the De Morgan formula complexity of $\xor_n$ is $\Omega(n^2)$, while it is not hard to see that any formula for $\xor_n$ is hazard-free.

To quantify the gap between the hazard-free complexity of a given Boolean function and the monotone complexity of its hazard-derivatives, we define a measure, similar the one defined in \cite[Section $7$]{Juk21}, which we refer to as the \emph{monotone gap}.

\begin{definition}[The monotone gap]\label{def:the-monotone-gap}
Let $\booleanf{f}{n}$ be a Boolean function. The \emph{monotone gap of $f$} is defined as:
\[
\monogap(f):=\frac{ \size_F^\u(f)}{\max\limits_{\booleanx{x}{n}} \size_F^+(\dfunc{f}{x})}.
\]
\end{definition}

For example, the monotone gap of $\xor_n$ is $\Omega(n)$. As another example, in \Cref{subsec:andreev}, we prove that the monotone gap of Andreev's function \cite{And87} is $\Omega\left(\frac{n^2}{\log n}\right)$. In an attempt to understand the power of the hazard-derivative method (or its weaknesses), we wish to answer the following:
\begin{question}\label{que:monotone-gap}
Let $\booleanf{f}{n}$ be a Boolean function. How large can $\monogap(f)$ be? What is $\monogap(f)$ for a typical Boolean function?
\end{question}

\subsection{The hazard-free KW game} 
\cite{IK23} studied a generalization of the classic Karchmer-Wigderson game \cite{KW90} in the hazard-free setting (see \Cref{section:communication-complexity} for background on KW games). 

\begin{restatable}[Hazard-free KW game {\cite[Definition 5]{IK23}}]{definition}{defhazardfreekwgame}\label{def:hazard-free-kw-game}
Let $\booleanf{f}{n}$ be a Boolean function. The \emph{hazard-free Karchmer-Wigderson game} of $f$, denoted $\kwgame_{f}^{\u}$, is defined as follows: Alice receives $\ternaryx{x}{n}$ such that $\tilde{f}(x)=1$, Bob receives $\ternaryx{y}{n}$ such that $\tilde{f}(y)=0$, and their goal is to output a coordinate $i\in[n]$ such that $x_{i} \neq y_{i}$ and $x_{i},y_{i} \neq \u$.
\end{restatable}

As $\kwgame^{\u}_f\subseteq \tilde{f}^{-1}(1) \times \tilde{f}^{-1}(0) \times [n]$, and every $x\in \tilde{f}^{-1}(1)$ and $y\in \tilde{f}^{-1}(0)$ are valid inputs\footnote{There must exist a coordinate $i\in[n]$ such that $x_i\neq y_i$ and $x_i,y_i$ are stable. Otherwise, $x,y$ have a common resolution, in contradiction to \Cref{def:hazard-free-extension}.}, $\kwgame^{\u}_f$ has a potentially larger set of inputs than the classic game. As a result, the communication matrix may be more complex. Similarly to the classic KW game, the hazard-free KW game preserves a tight connection between the underlying communication problem complexity and the optimal \emph{hazard-free} formula depth and size. 

\begin{theorem}[{\cite[Theorem 7]{IK23}}]\label{thm:kw-hazard-free}
Let $\booleanf{f}{n}$ be a Boolean function. Then,
\[
    \text{$\depth_{F}^{\u}(f)=CC(\kwgame_{f}^{\u})$ and $\size_{F}^{\u}(f)=\monorect(\kwgame_{f}^{\u})$},
\]
where $\depth_{F}^{\u}(f)$ is the depth of the shallowest hazard-free formula computing $f$, $CC(\mathcal{R})$ is the communication complexity of the relation $\mathcal{R}$, and $\monorect(\mathcal{R})$ is the minimal number of leaves in a protocol that solves $\mathcal{R}$.
\end{theorem}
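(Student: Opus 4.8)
The plan is to prove the two equalities via the two directions of the Karchmer--Wigderson correspondence \cite{KW90}, adapted to Kleene's three-valued logic. First I would normalize: the De Morgan identities $\neg(a\wedge b)=\neg a\vee\neg b$, $\neg(a\vee b)=\neg a\wedge\neg b$ and $\neg\neg a=a$ all hold verbatim over $\{0,u,1\}$ (check the tables in \Cref{tbl:extended-gates}), so every hazard-free formula may be rewritten with all negations pushed to the leaves without changing the Kleene function it computes, its depth, or its number of leaves; hence we may assume formulas are built from $\wedge,\vee$ gates (fan-in two) over literals $x_i,\neg x_i$. For the direction $\depth_F^u(f)\ge CC(\kwgame_f^u)$ (and the size analogue), given a hazard-free formula $F$ for $f$, so that $F$ computes $\tilde f$ on all ternary inputs, Alice (holding $x$ with $\tilde f(x)=1$) and Bob (holding $y$ with $\tilde f(y)=0$) walk from the root towards a leaf, maintaining that the current subformula $g$ has $g(x)=1$ and $g(y)=0$; this holds at the root since $F(x)=\tilde f(x)=1$ and $F(y)=\tilde f(y)=0$. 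At a $\vee$-gate, Kleene-OR outputs $1$ only if some argument is $1$, so Alice names a child that is $1$ on $x$; both children are $0$ on $y$ since OR is $0$ only when both arguments are $0$, so the invariant persists. Dually, at a $\wedge$-gate Bob names a child that is $0$ on $y$, both children being $1$ on $x$. At a leaf labeled $\ell\in\{x_i,\neg x_i\}$ we have $\ell(x)=1\ne 0=\ell(y)$, forcing $x_i\ne y_i$ with both stable, so outputting $i$ solves $\kwgame_f^u$. The protocol tree is the formula tree relabeled, giving $CC(\kwgame_f^u)\le\depth_F^u(f)$ and $\monorect(\kwgame_f^u)\le\size_F^u(f)$.

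For the reverse direction, given a protocol for $\kwgame_f^u$, build a formula on the same tree: a $\vee$-gate where Alice speaks, a $\wedge$-gate where Bob speaks. Each leaf $v$ carries a combinatorial rectangle $A_v\times B_v\subseteq\tilde f^{-1}(1)\times\tilde f^{-1}(0)$ together with an output coordinate $i$ valid for every pair in it; fixing one $x_0\in A_v$ and setting $b=(x_0)_i$ forces $x_i=b$ for all $x\in A_v$ and $y_i=\bar b$ for all $y\in B_v$ (both stable, since $x_i\ne y_i$ for all pairs), so label $v$ with $\ell_v=x_i$ if $b=1$ and $\ell_v=\neg x_i$ if $b=0$. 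One then shows by induction on the tree that for every node $v$ the subformula $F_v$ satisfies $F_v(x)=1$ for all $x\in A_v$ and $F_v(y)=0$ for all $y\in B_v$: the base case is the leaf labeling above, and the inductive step uses the Kleene identities $1\vee z=1$, $0\vee 0=0$, $0\wedge z=0$, $1\wedge 1=1$ together with the partition $A_v=A_{v_1}\sqcup A_{v_2}$, $B_v=B_{v_1}=B_{v_2}$ at Alice-nodes (dually at Bob-nodes). At the root, where $A=\tilde f^{-1}(1)$ and $B=\tilde f^{-1}(0)$, this gives $F(x)=1$ on all of $\tilde f^{-1}(1)$ and $F(x)=0$ on all of $\tilde f^{-1}(0)$; in particular $F$ computes $f$ on Boolean inputs.

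It remains to upgrade ``$F$ agrees with $\tilde f$ wherever $\tilde f$ is stable'' to ``$F=\tilde f$'', i.e.\ to verify hazard-freeness. Suppose $\tilde f(x)=u$ but $F(x)=b$ is stable. By \Cref{prop:natural-circuits} the function computed by $F$ is natural, hence $\precequ$-monotone; since $b$ is maximal in $\precequ$ and $x\precequ y$ for every $y\in R(x)$, this forces $F(y)=b$ for all resolutions $y$ of $x$, whence $f(y)=b$ for all $y\in R(x)$, contradicting $\tilde f(x)=u$. So $F$ is hazard-free for $f$, and as its tree is the protocol tree we obtain $\depth_F^u(f)\le CC(\kwgame_f^u)$ and $\size_F^u(f)\le\monorect(\kwgame_f^u)$, finishing the proof. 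The step I expect to be the main obstacle, and the one that genuinely departs from the classical argument, is precisely this last one: in the Boolean case correctness on $f^{-1}(0)\cup f^{-1}(1)$ already means ``$F$ computes $f$'', whereas here one must additionally certify that $F$ outputs $u$ on every unstable input of $\tilde f$, which is where both the enlarged (ternary) input set of $\kwgame_f^u$ and the naturalness of circuit-computed functions are essential; a secondary but necessary chore is confirming that all the De Morgan, double-negation, and OR/AND absorption identities invoked above hold exactly in Kleene's three-valued logic rather than only on $\{0,1\}$.
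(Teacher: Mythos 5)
The paper does not actually prove this theorem --- it is quoted verbatim from \cite{IK23} --- so there is no internal proof to compare against; your argument is correct and is essentially the standard Karchmer--Wigderson translation adapted to Kleene logic (including the key extra step of using $\precequ$-monotonicity of the circuit-computed natural function to certify the output is $u$ on unstable inputs), which is also how the cited source argues. The only points you gloss over are routine: pruning protocol leaves whose rectangle has an empty side so that the literal labeling is well defined, and the degenerate case of constant $f$.
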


Like the monotone version of the KW game, \cite{IK23} proved that the communication complexity of $\kwgame_{f}^{\u}$ does not decrease even if it is played only on the \emph{prime implicants} of $f$ and the \emph{prime implicates} of $f$. 
\begin{theorem}[{\cite[Theorem 10]{IK23}}]\label{thm:kw-hazard-free-implicants}
For any Boolean function $\booleanf{f}{n}$, the communication complexity of $\kwgame_{f}^{\u}$ remains unchanged even if we restrict Alice's input to the prime implicants of $f$ and Bob's input to the prime implicates of $f$.
\end{theorem}
The notion of prime implicants and prime implicates is well established in Boolean algebra. Following \cite{IK23}, we provide analogous definitions using ternary algebra as previously defined in \cite{KS20}[p.3] and \cite{IK23}[p.7].

\begin{definition}[Implicants and implicates of a Boolean function]\label{def:implicant}
Let $\booleanf{f}{n}$ be a Boolean function. We define the implicants of $f$, denoted by $I^{(f)}_1$, and the implicates of $f$, denoted by $I^{(f)}_0$, as follows: 
\[
I^{(f)}_1:=\tilde{f}^{-1}(1), \quad I^{(f)}_0:=\tilde{f}^{-1}(0).
\]
\end{definition}

\begin{definition}[Prime implicants and prime implicates of a Boolean function]\label{def:prime-implicant}
Let $\booleanf{f}{n}$ be a Boolean function.
We say $\ternaryx{x}{n}$ is a \emph{prime implicant} of $f$ if the following holds:
\begin{itemize}
    \item $x$ is an implicant of $f$.
    \item For every stable bit $\booleanx{x_i}{}$ in x, we have $\tilde{f}(\replace{x}{i}{\u})=\u$, where $\replace{x}{i}{b}$ stands for replacing the value of the $i$'th coordinate in $x$ with $b\in\{0,\u, 1\}$.
\end{itemize}
A \emph{prime implicate} is defined analogously. We denote the set of prime implicants of $f$ by $\implicants{f}$ and the set of prime implicates of $f$ by $\implicates{f}$.
\end{definition}

For the remainder of the paper, we abuse notation and use $\kwgame^{\u}_f$ to denote the restricted game. We next define the communication matrix of the $\kwgame_f^\u$ game.

\begin{definition}
    [Communication matrix of $\kwgame_f^\u$]\label{def:communication-matrix-kwu}
Let $\mathcal{X}:=(p_1^{(1)},\dots,p_1^{(k)})$ and $\mathcal{Y}:=(p_0^{(1)},\dots,p_0^{(\ell)})$ be an arbitrary ordering of the elements in $\implicants{{f}}$ and $\implicates{f}$, respectively. We call $\mathcal{X}$ and $\mathcal{Y}$ the row labels and column labels, respectively. The communication matrix of $\kwgame_f^\u$, denoted $\matkwu{f}{}$, is defined such that every entry of $\matkwu{f}{}$ contains the set of all stable coordinates in which $p_1^{(i)}$ and $p_0^{(j)}$ differ:
\[
    \forall i\in [k], j\in [\ell], \quad (\matkwu{f}{})_{i,j}=\{z\in [n]: (p_1^{(i)} + p_0^{(j)})_{z}=1\}.
\]

For $\ternaryx{y}{n}$, we denote by $\replaceall{y}{s}{a}$ the result of replacing all occurrences of $\ternaryx{s}{}$ in $y$ with $\ternaryx{a}{}$. We then use $\replaceall{(p_1^{(i)} + p_0^{(j)})}{\u}{0}:=\{z\in [n]: (p_1^{(i)} + p_0^{(j)})_{z}=1\}$ as an abbreviated notation. 
\end{definition}

Another interesting observation is that, for a monotone function, the hazard-free KW game captures the monotone formula complexity.

\begin{theorem}[{\cite[Theorem 11]{IK23}}]\label{thm:kw-hazard-free-monotone}
Let $\booleanf{f}{n}$ be a monotone Boolean function. Then, $\kwgame_{f}^{\u}$ and $\kwgame_{f}^{+}$ are equivalent. 
\end{theorem}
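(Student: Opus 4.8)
The plan is to prove the equivalence by exhibiting two ``local-encoding'' reductions, one in each direction, each of which turns a protocol for one game into a protocol for the other with the same depth and no more leaves. Via \Cref{thm:kw-hazard-free} and its classical monotone analog ($\depth_F^+(f)=CC(\kwgame_f^+)$, $\size_F^+(f)=\monorect(\kwgame_f^+)$), this also gives $\depth_F^u(f)=\depth_F^+(f)$ and $\size_F^u(f)=\size_F^+(f)$ — the formula analog of \Cref{thm:monotone is hazard}. The starting point is a description of $\tilde f$ for monotone $f$: for $\ternaryx{x}{n}$, let $\lfloor x\rfloor,\lceil x\rceil\in R(x)$ be the resolutions obtained by replacing every $u$ by $0$, respectively by $1$; these are the coordinatewise least and greatest elements of $R(x)$. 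Monotonicity of $f$ gives $f(\lfloor x\rfloor)\le f(y)\le f(\lceil x\rceil)$ for all $y\in R(x)$, so by \Cref{def:hazard-free-extension} we get $\tilde f(x)=1$ iff $f(\lfloor x\rfloor)=1$, and $\tilde f(x)=0$ iff $f(\lceil x\rceil)=0$.

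First I would reduce $\kwgame_f^+$ to $\kwgame_f^u$. Encode Alice's input $a\in f^{-1}(1)$ as $\hat a$ with $\hat a_i=1$ when $a_i=1$ and $\hat a_i=u$ when $a_i=0$, and Bob's input $b\in f^{-1}(0)$ as $\hat b$ with $\hat b_i=0$ when $b_i=0$ and $\hat b_i=u$ when $b_i=1$. Then $\lfloor \hat a\rfloor=a$ and $\lceil \hat b\rceil=b$, so by the structural fact $\tilde f(\hat a)=1$ and $\tilde f(\hat b)=0$, i.e.\ $(\hat a,\hat b)$ is a legal input pair for $\kwgame_f^u$. Moreover a coordinate $i$ is a valid answer of $\kwgame_f^u$ on $(\hat a,\hat b)$ precisely when $\hat a_i$ and $\hat b_i$ are both stable (and then automatically distinct), i.e.\ precisely when $a_i=1$ and $b_i=0$ — which is exactly the set of valid answers of $\kwgame_f^+$ on $(a,b)$. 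Hence precomposing any protocol for $\kwgame_f^u$ with these two local encodings solves $\kwgame_f^+$; as the protocol tree is merely restricted to a sub-rectangle of inputs, pruning unreachable leaves leaves its leaf-count no larger. This yields $CC(\kwgame_f^+)\le CC(\kwgame_f^u)$ and $\monorect(\kwgame_f^+)\le\monorect(\kwgame_f^u)$.

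Next I would reduce $\kwgame_f^u$ to $\kwgame_f^+$, using the structural fact in the other direction: on a $\kwgame_f^u$ input $(x,y)$, Alice moves to $a:=\lfloor x\rfloor$ and Bob to $b:=\lceil y\rceil$, which satisfy $f(a)=1$ and $f(b)=0$, hence form a legal input of $\kwgame_f^+$. Run any protocol for $\kwgame_f^+$ to obtain $i$ with $a_i=1$ and $b_i=0$; since $a_i=1$ forces $x_i=1$ and $b_i=0$ forces $y_i=0$, coordinate $i$ is stable on both sides with $x_i\neq y_i$, a valid answer of $\kwgame_f^u$. This reuses the protocol tree verbatim, so $CC(\kwgame_f^u)\le CC(\kwgame_f^+)$ and $\monorect(\kwgame_f^u)\le\monorect(\kwgame_f^+)$. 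Combining the two directions gives the claimed equivalence.

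I do not expect a serious obstacle here: the whole argument is an unwinding of the definitions of $R(x)$, $\tilde f$, and the two games, once the monotone description of $\tilde f$ is established (which is the only place monotonicity is used). The single point that needs care is fixing the meaning of ``equivalent'' and verifying that \emph{both} complexity measures are preserved; the second reduction preserves depth and size trivially since it reuses the tree, and the first reduction is size-preserving thanks to the routine observation that restricting a communication protocol to a sub-rectangle of its input domain can only decrease the number of reachable leaves.
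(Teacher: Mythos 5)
Your proof is correct: both reductions are sound, the encodings $\hat a,\hat b$ and the decodings $\lfloor x\rfloor,\lceil y\rceil$ do exactly what you claim (your ``structural fact'' is the paper's \Cref{lem:implicants-implicates-monotone}, with $\lfloor x\rfloor,\lceil x\rceil$ playing the role of $\check{x},\hat{x}$ from \Cref{def:max-and-min-res}), and you correctly interpret ``equivalent'' as preserving both $CC$ and $\monorect$. The paper itself only cites this result from \cite{IK23} without reproving it, and your argument is essentially the standard proof given there.
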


Consequently, the size and depth of an optimal hazard-free formula are identical to those of its monotone counterpart. This provides a proof of \Cref{thm:monotone is hazard} for formulas instead of circuits. 

By using communication complexity methods to prove lower bounds on hazard-free formulas, \cite[Theorems 19, 23]{IK23} determined the exact hazard-free formula size of the multiplexer function.

\begin{definition}[The multiplexer function \cite{IK23}]\label{def:mux}
The function $\booleanf{\mux_n}{n+2^n}$, called the multiplexer function, is defined by
\[
\mux_n(s,x)=x_{\text{bin}(s)},
\]
where $\booleanx{s}{n}$, $\booleanx{x}{2^n}$ and $\text{bin}(s)$ is the natural number represented by $s$ in binary basis. 
\end{definition}

\begin{theorem}[Hazard-free formula size for the multiplexer function {\cite[Exact Bounds]{IK23}}]\label{thm:mux-exact-bound}
\[
\size^\u_F(\mux_n)=2\cdot 3^n-1.
\]
\end{theorem}
This result is much better than what can be achieved using the hazard-derivative method since the hazard-derivatives of the multiplexer function have quasi-linear monotone formula complexity (see \cite[Proposition 1]{IK23}). 

\subsection{Universal upper bounds}
An immediate consequence of \Cref{thm:mux-exact-bound} is that any Boolean function $f$ can be implemented by a hazard-free formula of size $2\cdot 3^n-1$. Indeed, it holds that:
\[
f(x) = \mux_n(x, f(0,0,\ldots,0), \ldots,f(1,1,\ldots,1)),
\] 
therefore, the hazard-free formula for the multiplexer, when hardwired the truth table of $f$, gives a hazard-free formula for $f$. This result is the first to break the  $O(\sqrt{n}\cdot 3^n)$  upper bound of Huffman \cite{Huf57}. 

We note that this state of knowledge is much poorer compared to what is known for hazard-free circuits. Analogous to \cite{Sha49, Mul56}, Jukna \cite[Section 7]{Juk21} proved that any Boolean function has a hazard-free circuit of size $O(2^n/n)$, applying the following:
\begin{proposition}[{\cite[Proposition 3]{Juk21}}]\label{prop:hazard-free-consensus-construction}
Let $C_0, C_1$ be arbitrary hazard-free circuits over $x_1,\dots,x_{n-1}$, and $x_n$ a new variable. Then, the circuit
\begin{equation}\label{eq:jukna-recursion}
    C := (\neg x_n\land C_0) \lor (x_n\land C_1) \lor (C_0 \land C_1)
\end{equation}
is hazard-free.
\end{proposition}
Therefore, naturally, we can construct a hazard-free circuit for $\booleanf{f}{n}$ by using hazard-free implementations $C_0$ and $C_1$ for $f_0(x_1,\dots,x_{n-1}):= f(x_1,\dots,x_{n-1}, 0)$ and $f_1(x_1,\dots,x_{n-1}):= f(x_1,\dots,x_{n-1}, 1)$, respectively. However, in the case of formulas, the size of the formula obtained by \eqref{eq:jukna-recursion} is $O(3^n)$, since it is composed of 3 hazard-free formulas over $n-1$ bits.\footnote{Note that the extra $(C_0 \land C_1)$ term is necessary. Say that for some $\ternaryx{x}{n-1}$ we have $\tilde{f}(x,\u)=1$. Hence, by definition, we also have $\tilde{f}_0(x)=1,\tilde{f}_1(x)=1$, and so $C_0(x)=1,C_1(x)=1$. But, $(\neg x_n\land C_0) \lor (x_n\land C_1)$ will evaluate to $\u$.}

This result matches the lower bound obtained via the same counting arguments used in the Boolean setting. In particular, for a typical Boolean function (i.e., one sampled uniformly at random from all Boolean functions on $n$ bits), we have $\size_C^{\u}(f)=\Theta(\size_C(f))$.

On the other hand, counting arguments give a lower bound of order $\Omega(2^n/\log n)$ for the size of the hazard-free formula of a random Boolean function \cite{RS42,Sha49}, whereas \Cref{thm:mux-exact-bound} only implies an $O(3^n)$ upper bound. Thus, there is an exponential gap between the two bounds. This naturally leads to the following question. 

\begin{question}\label{que:universal-ub}
    Denote with $\size^\u_F(n)$ the maximal hazard-free complexity of an $n$-bit Boolean function. What is the asymptotic value of $\size^\u_F(n)$? What is $\size^\u_F(f)$ of a typical Boolean function?
\end{question}

\subsection{The hazard-free KRW conjecture}
Karchmer, Raz and Wigderson \cite{KRW95} introduced the KRW conjecture as an approach for establishing super-polynomial lower bounds for De Morgan formulas. Proving this conjecture would imply $\text{P}\not\subseteq \text{NC}^1$. A detailed discussion can be found in \cite{Mei20}. 

To state the conjecture, we define the block-composition of two Boolean functions $\booleanf{f}{n}$ and $\booleanf{g}{m}$ as
\[
f\diamond g(X):=f(g(X_1),\dots,g(X_n)),
\]
where $X$ is a $n\times m$ Boolean matrix, and $\booleanx{X_i}{m}$ is the $i$'th row of $X$. For a Boolean function $f$, we denote with $\depth_F(f)$ the depth of the shallowest formula computing $f$.

\begin{conjecture}[The KRW conjecture \cite{KRW95}]\label{conj:krw}
Let $\booleanf{f}{n}$ and $\booleanf{g}{m}$ be non-constant Boolean functions. Then it holds that:\footnote{We use $\approx$ instead of $\geq$ as we can allow a less exact inequality, see \cite{KRW95, GMWW17}.} 
\[
\depth_F(f\diamond g)\approx \depth_F(f)+\depth_F(g).
\]
Following \Cref{thm:kw-classic}, the conjecture can be rephrased as:
\[
CC(\kwgame_{f\diamond g})\approx CC(\kwgame_f)+CC(\kwgame_g).
\]
\end{conjecture}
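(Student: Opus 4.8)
The plan is to attack \Cref{conj:krw} through its communication reformulation: by \Cref{thm:kw-classic} it suffices to prove $CC(\kwgame_{f\diamond g})\ge CC(\kwgame_f)+CC(\kwgame_g)$, up to the additive slack indicated by ``$\approx$'' in the statement. Fix legal inputs to the composed game: Alice holds an $n\times m$ matrix $X$ with $f(g(X_1),\dots,g(X_n))=1$, Bob holds $Y$ with $f(g(Y_1),\dots,g(Y_n))=0$, and they must output an entry $(i,j)$ with $X_{ij}\neq Y_{ij}$. Setting $a_i=g(X_i)$ and $b_i=g(Y_i)$ gives $f(a)=1$, $f(b)=0$, so some row $i$ has $a_i\neq b_i$, and on any such row some column $j$ has $X_{ij}\neq Y_{ij}$. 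The additive ``$+$'' encodes the belief that every protocol must effectively first solve $\kwgame_f$ on the pair $(a,b)$ to pin down a good row $i$, and only then solve $\kwgame_g$ on $(X_i,Y_i)$ inside that block, with no sharing of communication between the two subtasks. The whole proof thus reduces to making this two-phase picture rigorous.

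The first step is to prove the bound for the easier inner problem in which $g$ is replaced by the \emph{universal relation} $U_m$ (Alice and Bob receive arbitrary distinct $m$-bit strings and must output a differing coordinate), i.e.\ that the composition of $\kwgame_f$ with $U_m$ has communication complexity at least $CC(\kwgame_f)+m-O(1)$. Here a round-elimination / information-bottleneck argument is natural: track along the transcript the set of rows still treated as ``live'' together with how much has been revealed about each live block; argue that shrinking the candidate set to a single row already forces the transcript to encode a solution of $\kwgame_f$, and that only afterwards can any one block be narrowed to a single coordinate, costing a further $m-O(1)$ bits. Combining this with the protocol-tree formulation behind \Cref{thm:kw-hazard-free} and \Cref{thm:kw-hazard-free-monotone}, and with the universal-relation composition bounds of Edmonds--Impagliazzo--Rudich--Sgall and Hastad--Wigderson together with the refinements surveyed in \cite{Mei20} and \cite{GMWW17}, should settle this case.

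The second step is to transfer from $U_m$ to an arbitrary non-constant $g$. The idea is that $\kwgame_g$ already ``contains'' a universal relation that is hard for the same reason: restricting the composed game to a suitable subcube of inputs turns it into the composition of $\kwgame_f$ with a universal relation, while still forcing the protocol to pay the full cost $CC(\kwgame_g)$ on the chosen block. An averaging argument over such restrictions, together with a direct-sum-type lemma for the KW relation, should convert this into the clean additive bound (with the $\approx$ slack, cf.\ the footnote to \Cref{conj:krw}); running the same scheme for the monotone KW game would reprove the monotone composition theorem of Karchmer--Raz--Wigderson that this paper mirrors in the hazard-free setting.

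The main obstacle --- and the reason \Cref{conj:krw} is still a conjecture --- is precisely this second step, and more fundamentally the possibility that a clever protocol extracts \emph{partial} information about many blocks $g(X_i)$ at once and amortizes it, so that the ``find the row'' and ``find the column'' phases genuinely overlap and the true cost is sub-additive. Ruling this out for \emph{every} inner function $g$ is exactly what is not known: current techniques only yield near-additive bounds or handle structured inner functions (as in the Dinur--Meir line of work), and a full proof would, as the excerpt notes, give $\text{P}\not\subseteq\text{NC}^1$. Accordingly, the realistic deliverable of this plan is the universal-relation and monotone special cases plus the best known near-additive version, not a resolution of \Cref{conj:krw} itself.
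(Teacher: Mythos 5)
There is no proof of this statement to compare against: \Cref{conj:krw} is stated in the paper exactly as what it is, a long-standing open conjecture of Karchmer, Raz and Wigderson, quoted as background and motivation for the hazard-free analogues (\Cref{conj:hazard-free-krw}, \Cref{conj:hazard-free-strong-krw}); the paper itself notes that proving it would give $\mathrm{P}\not\subseteq\mathrm{NC}^1$. Your proposal, by its own final paragraph, is a research plan rather than a proof, and the gap it leaves open is not a technicality but the entire content of the conjecture. Concretely: your ``step two'' --- transferring hardness from the universal relation to an arbitrary non-constant inner function $g$, via restrictions plus an averaging and direct-sum argument --- is precisely what is not known how to do; the danger you name yourself (a protocol extracting and amortizing partial information about many blocks $g(X_i)$ simultaneously, so the row-finding and column-finding phases overlap) is exactly the obstruction that has blocked all attempts, and no known direct-sum lemma for KW relations rules it out for general $g$.

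Even ``step one'' is not established by what you wrote. The two-phase bookkeeping you sketch (the transcript must first ``encode a solution of $\kwgame_f$'' before any block can be narrowed) is the naive intuition whose formalization is the hard part; the actual known results on composition with the universal relation (Edmonds--Impagliazzo--Rudich--Sgall, H{\aa}stad--Wigderson, and the Gavinsky--Meir--Weinstein--Wigderson refinement) use considerably more delicate arguments and are cited by you rather than derived, so your plan at best reduces the conjecture to known partial results plus an unsolved step. What \emph{can} be salvaged along the lines you mention --- the monotone case for inner functions amenable to lifting, and $\xor$-composition à la Dinur--Meir --- is exactly the state of the art the paper already invokes (\Cref{thm:parity-composition}, and the discussion around \Cref{conj:monotone-krw}), not a proof of \Cref{conj:krw}. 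So the verdict is that the proposal contains a genuine, and in fact well-known, gap: it does not prove the statement, and correctly could not, since the statement is an open conjecture.
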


Roughly, the conjecture states that the smallest depth formula for computing $f\diamond g$ is obtained from the composition of the shallowest formula for $f$ and the shallowest formula for $g$.

We next state an analogous conjecture in the hazard-free setting. 
\begin{restatable}[Hazard-Free KRW conjecture]{conjecture}{conjhazardfreekrw}\label{conj:hazard-free-krw}
Let $\booleanf{f}{n}$ and $\booleanf{g}{m}$ be non-constant Boolean functions. Then the following holds:
\[
\depth_F^\u(f\diamond g)\approx \depth_F^\u(f)+\depth_F^\u(g).
\]
\end{restatable}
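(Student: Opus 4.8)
\Cref{conj:hazard-free-krw} is open, so the realistic aim is partial progress mirroring the monotone line of work \cite{KRW95,dRMN+20}. I would split the statement into the (easy) upper bound and the (hard) lower bound, and --- as is standard for KRW-type results --- establish the lower bound unconditionally only in a setting where the inner function admits a lifting-based lower bound, in the spirit of \cite[Theorem 3.1]{dRMN+20}.

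For the upper bound $\depth_F^u(f\diamond g)\le \depth_F^u(f)+\depth_F^u(g)$, the first step is to show that the hazard-free extension commutes with block composition:
$$
\widetilde{f\diamond g}(X)=\tilde f\bigl(\tilde g(X_1),\dots,\tilde g(X_n)\bigr)
$$
for every ternary matrix $X$. This is a short resolution argument: for each row $i$ and each $Y_i\in R(X_i)$ one has $g(Y_i)\in R\bigl(\tilde g(X_i)\bigr)$, and conversely every value of $R\bigl(\tilde g(X_i)\bigr)$ is attained by some $Y_i\in R(X_i)$ (the latter precisely because $\tilde g(X_i)=u$ means $g$ is non-constant on $R(X_i)$); plugging this into \Cref{def:hazard-free-extension} applied to $f\diamond g$ and to $\tilde f$ yields the identity. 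Consequently, replacing the $i$-th input of an optimal hazard-free formula for $f$ by an optimal hazard-free formula for $g$ on block $i$ gives a hazard-free formula for $f\diamond g$ of depth $\depth_F^u(f)+\depth_F^u(g)$ (negated leaves are handled by a single NOT gate, which preserves hazard-freeness and does not add depth in the De~Morgan convention).

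For the lower bound I would pass to the hazard-free KW game through \Cref{thm:kw-hazard-free}, so that $\depth_F^u(f\diamond g)=CC(\kwgame_{f\diamond g}^u)$, and prove the hazard-free analogue of \Cref{lem:direct-sum-monotone}:
$$
\kwgame_f^u\otimes\kwgame_g^u\ \le\ \kwgame_{f\diamond g}^u .
$$
The reduction follows the monotone template. Given an input to the left-hand game, Alice and Bob each build a ternary matrix ($X$ and $Y$) whose row values under $\tilde g$ are dictated by their $\kwgame_f^u$-inputs: a Boolean gadget block forcing $g$ to the required stable value in each row where that value is stable; a fixed ``universally unstable'' ternary gadget block forcing $\tilde g$ to $u$ in the remaining rows (such a block exists since $g$ is non-constant); and their own $\kwgame_g^u$-inputs in one designated block. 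By the composition identity above, $(X,Y)$ is then a legal $\kwgame_{f\diamond g}^u$-instance, and any output coordinate of a protocol for it maps back --- exactly as in the proof of \Cref{lem:direct-sum-monotone} --- to a solution of both sub-games. Combined with a direct-sum (or composition) lower bound $CC(\mathcal R_1\otimes\mathcal R_2)\ge CC(\mathcal R_1)+CC(\mathcal R_2)$ for the relations in play, this gives $\depth_F^u(f\diamond g)\ge \depth_F^u(f)+\depth_F^u(g)$.

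The main obstacle is that no direct-sum theorem is available for the communication complexity of arbitrary relations (it can in fact fail), so the reduction by itself does not settle \Cref{conj:hazard-free-krw}; an unconditional consequence is obtained only when $CC(\kwgame_g^u)$ can be lower-bounded via a lifting theorem. The concrete target --- and, I expect, where most of the technical care goes --- is $f=$ the set-covering function and $g=\mux_n$: here one must (i) verify that the reduction is sound in the hazard-free game, whose input space $\tilde f^{-1}(1)\times\tilde f^{-1}(0)$ is strictly larger than in the monotone case, so that the gadget blocks realize exactly the intended $\tilde g$-values and no spurious solution coordinate can be returned, and (ii) feed in a lifting-based lower bound on the multiplexer side. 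The resulting lower bound on the hazard-free formula depth of the set-covering function block-composed with $\mux_n$ breaks the monotone barrier, since by \Cref{prop:mux-derivatives} the monotone complexity of the relevant hazard-derivatives is only quasilinear.
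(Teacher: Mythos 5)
First, note that \Cref{conj:hazard-free-krw} is a \emph{conjecture}: the paper does not prove it and explicitly lists it among the open problems in \Cref{sec:open}, so you are right that only partial progress is on the table. The pieces of your plan that can be carried out are essentially the ones the paper carries out in \Cref{sec:composition-hazard-free}. Your upper-bound half is correct: the identity $\widetilde{f\diamond g}=\tilde f\diamond\tilde g$ is exactly \Cref{lem:hazard-free-extension-composition}, proved by the same resolution argument you sketch, and composing optimal hazard-free formulas then gives $\depth_F^u(f\diamond g)\le\depth_F^u(f)+\depth_F^u(g)$.

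The genuine gap is in your lower-bound reduction $\kwgame_f^u\otimes\kwgame_g^u\le\kwgame_{f\diamond g}^u$ for an \emph{arbitrary} outer function $f$. The paper only establishes $\kwgame_f^{+}\otimes\kwgame_g^{u}\le\kwgame_{f\diamond g}^{u}$ for \emph{monotone} $f$ (\Cref{lemma:reduction-of-direct-sum-to-comp}), and the restriction is not cosmetic. For non-monotone $f$, a valid answer to $\kwgame_f^u$ can be a coordinate $i$ with $(p_f^{(1)})_i=0$ and $(p_f^{(0)})_i=1$. In row $i$ Alice must place a block with $\tilde g$-value $0$ and Bob a block with $\tilde g$-value $1$; since neither player holds the other's $\kwgame_g^u$-input, these must be fixed, input-independent gadgets $q_0\in\implicates{g}$ and $q_1\in\implicants{g}$. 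A protocol for $\kwgame_{f\diamond g}^u$ may legitimately answer with a coordinate inside such a row, where $q_0$ and $q_1$ differ on a stable coordinate, and this reveals nothing about the actual instance $(p_g^{(1)},p_g^{(0)})$ of $\kwgame_g^u$. Your alternative of placing ``their own $\kwgame_g^u$-inputs in one designated block'' fails for a different reason: the players would have to agree in advance on a row $i$ that is a valid $\kwgame_f^u$-answer, which is precisely the problem they are trying to solve. Monotonicity of $f$ is what removes the obstruction: by \Cref{cor:prime-monotone}, every stable-and-different coordinate of $(p_f^{(1)},p_f^{(0)})$ then has Alice at $1$ and Bob at $0$, so both players put their genuine $g$-inputs in every row the protocol can return. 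Finally, for your concrete target (set covering composed with the multiplexer) the outer function \emph{is} monotone, so the paper's reduction applies; but the paper then avoids lifting altogether, instead reducing a \emph{function} to each factor ($I_{k,n}\le\kwgame^{+}_{MC_{k,n}}$ and $\text{subcube-intersect}_m\le\kwgame^{u}_{\mux_m}$) and using additivity of $\log\rank$ under direct sum (\Cref{prop:lower-bound-direct-sum}), which yields \Cref{prop:comp-monotone-mux} unconditionally.
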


Similarly to \cite{DM18, dRMN+20}, we state a stronger version of the conjecture on the formula size.

\begin{restatable}[Hazard-free strong KRW conjecture]{conjecture}{conjhazardfreestrongkrw}\label{conj:hazard-free-strong-krw}
Let $\booleanf{f}{n}$ and $\booleanf{g}{m}$ be non-constant Boolean functions. Then the following holds:
\[
\size_F^\u(f\diamond g)\approx \size_F^\u(f)
\cdot \size_F^\u(g).
\]
\end{restatable}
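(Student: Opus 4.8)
The plan is to split the $\approx$ into its two directions; the upper bound is routine and the lower bound is the crux. For the upper bound $\size_F^u(f\diamond g)\le\size_F^u(f)\cdot\size_F^u(g)$, I would first record that the hazard-free extension commutes with block composition, i.e. $\widetilde{f\diamond g}(X)=\tilde f\big(\tilde g(X_1),\dots,\tilde g(X_n)\big)$ for every ternary matrix $X$. This holds because the rows of $X$ use disjoint sets of variables, so $R(X)=R(X_1)\times\cdots\times R(X_n)$, and hence the set of Boolean vectors $\big(g(Y_1),\dots,g(Y_n)\big)$ realized over all $Y\in R(X)$ is exactly $R\big(\tilde g(X_1),\dots,\tilde g(X_n)\big)$; applying $f$ and unfolding \Cref{def:hazard-free-extension} gives the identity. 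Since a formula is hazard-free for $h$ precisely when its Kleene evaluation computes $\tilde h$, substituting an optimal hazard-free formula for $g$ into each of the $\size_F^u(f)$ leaves of an optimal hazard-free formula for $f$ yields a formula whose Kleene evaluation is $\tilde f\big(\tilde g(X_1),\dots,\tilde g(X_n)\big)=\widetilde{f\diamond g}(X)$, i.e. a hazard-free formula for $f\diamond g$, with exactly $\size_F^u(f)\cdot\size_F^u(g)$ leaves.

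For the lower bound I would pass to the communication world via \Cref{thm:kw-hazard-free}, reducing the target to proving $\monorect(\kwgame_{f\diamond g}^u)\ge\monorect(\kwgame_f^u)\cdot\monorect(\kwgame_g^u)$ (up to the slack permitted by $\approx$); the reverse inequality is the composition bound above. I would then proceed in two stages. Stage one is the hazard-free counterpart of \Cref{lem:direct-sum-monotone}: a reduction $\kwgame_f^u\otimes\kwgame_g^u\le\kwgame_{f\diamond g}^u$ exhibiting a direct-sum structure inside the composed game. Using the composition identity, a ternary $1$-input of $f\diamond g$ can be written as a ternary $1$-input $a$ of $f$ together with, for each $i\in[n]$, a ternary input $X_i$ of $g$ with $\tilde g(X_i)=a_i$, and dually a $0$-input arises from a ternary $0$-input $b$ of $f$ and rows $Y_i$ with $\tilde g(Y_i)=b_i$; any protocol for $\kwgame_{f\diamond g}^u$ run on such a pair $(X,Y)$ is forced to produce a stable coordinate where $a$ and $b$ differ (that is $\kwgame_f^u(a,b)$) and then, on that row, a stable coordinate where $X_i$ and $Y_i$ differ (that is $\kwgame_g^u(X_i,Y_i)$). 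As in \cite{KRW95}, the subtlety is that the players cannot place arbitrary $1$-inputs of $g$ in all rows — the rows with $a_i=0$ must carry $0$-inputs of $g$ — so the embedded copy of $\kwgame_g^u$ has to sit on a row guaranteed to be ``active'' for both sides; isolating such a row is where the implicant/implicate structure of the outer function enters, and the ternary inputs make this bookkeeping somewhat heavier than in the monotone case.

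Stage two is a direct-sum theorem for protocol size, $\monorect(\mathcal R\otimes\mathcal S)\ge\monorect(\mathcal R)\cdot\monorect(\mathcal S)$ (exactly, or up to a small multiplicative loss). Combined with stage one and the fact that $\monorect$ is monotone under reductions of relations — so $\monorect(\kwgame_{f\diamond g}^u)\ge\monorect(\kwgame_f^u\otimes\kwgame_g^u)$ — this closes the argument. I expect this stage to be the main obstacle. The quantity $\monorect$ is essentially the partition number of the communication matrix, and partition number is not known to be multiplicative under direct sum, so one cannot hope to get this for free; one would instead need either an approximate multiplicativity statement tailored to Karchmer--Wigderson-type relations, in the spirit of \cite{DM18,dRMN+20} (via relative-rectangle or entropy/information arguments run against the structured outer relation $\kwgame_f^u$), or a restriction to inner functions $g$ for which $\monorect(\kwgame_g^u)$ is certified by a lifting theorem, which would only yield \Cref{conj:hazard-free-strong-krw} in those special cases.

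Finally, the same template proves the depth version \Cref{conj:hazard-free-krw} with $CC(\cdot)$ in place of $\monorect(\cdot)$: stage one is unchanged, and stage two becomes the deterministic direct-sum question for communication complexity, which is more tractable than the size version; so \Cref{conj:hazard-free-strong-krw} is the strictly more demanding of the two, and a natural first milestone toward it is the special case $g=\mux_m$, where \Cref{thm:mux-exact-bound} pins down $\monorect(\kwgame_g^u)$ exactly and one must ``only'' prove the direct-sum bound against this single, well-understood inner relation.
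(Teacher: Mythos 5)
The statement you are asked to prove is not a theorem of the paper: it is stated as a \emph{conjecture} (the hazard-free analogue of the strong KRW conjecture), listed among the open problems, and the paper offers no proof of it. Your proposal, to its credit, is honest about exactly where a proof would break down, but that means it is an attack plan rather than a proof, and it should not be presented as establishing the statement.

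Concretely, two gaps remain. First, your ``stage two'' — multiplicativity of $\monorect$ (the protocol partition number) under direct sum, or any approximate substitute for it against $\kwgame_f^u$ — is the entire content of the conjecture and is open even in the Boolean and monotone settings; you acknowledge this yourself, so nothing is proved beyond the (correct and routine) upper bound $\size_F^u(f\diamond g)\le\size_F^u(f)\cdot\size_F^u(g)$, which follows from \Cref{lem:hazard-free-extension-composition} by substituting an optimal hazard-free formula for $g$ into the leaves of one for $f$. Second, even your ``stage one'' reduction $\kwgame_f^u\otimes\kwgame_g^u\le\kwgame_{f\diamond g}^u$ is not available for general $f$: the paper proves such a reduction only for \emph{monotone} outer functions (\Cref{lemma:reduction-of-direct-sum-to-comp}), and the proof there uses \Cref{cor:prime-monotone} to guarantee that every stable coordinate of a prime implicant of $f$ equals $1$, so that the same $1$-input of $g$ can be planted in every active row. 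For a non-unate $f$, \Cref{prop:prime-implicants-composition} forces rows $i$ with $(p_f)_i=0$ to carry prime \emph{implicates} of $g$, so the embedding must mix $0$- and $1$-inputs of $g$ across rows and the output coordinate may land in a row where the roles of Alice's and Bob's $g$-inputs are reversed; your sketch waves at this (``the bookkeeping is somewhat heavier'') but does not resolve it. The closest results the paper actually establishes are \Cref{prop:lower-bound-direct-sum} and \Cref{prop:comp-monotone-mux}, which give rank-based \emph{depth} lower bounds for monotone $f$ composed with a specific inner function, far short of the size statement conjectured here.
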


The KRW conjecture has been extensively studied through various special cases and simplified communication problems, as intermediate steps toward proving \Cref{conj:krw}. The monotone version of the KRW conjecture is most related to the hazard-free setting due to the connection to the hazard-derivative, which is a monotone function.
\begin{conjecture}[Monotone KRW conjecture \cite{KRW95}] \label{conj:monotone-krw}
Let $\booleanf{f}{n}$ and $\booleanf{g}{m}$ be monotone Boolean functions. Then the following holds:
\[
\depth_F^+(f\diamond g) \approx \depth_F^+(f)+\depth_F^+(g).
\]
\end{conjecture}
In \cite[Theorem 3.1]{dRMN+20}, \Cref{conj:monotone-krw} was proved for monotone inner functions $g$ whose depth complexity can be lower-bounded by a lifting theorem. In an earlier work, \cite{KRW95} observed that the monotone KW game can be played on the prime implicants and implicates of the monotone function, aligning well with the results concerning the hazard-free KW game presented in \Cref{thm:kw-hazard-free-monotone}. Notably, the hazard-free KRW conjecture generalizes the monotone KRW conjecture, as proving the former for every $f$ and $g$ would also imply the latter.

\subsection{Our results}
\paragraph{Unateness as a criterion.}
We provide a complete answer to \Cref{que:monotone-gap-criterion} by proving that the hazard-derivative method yields exact lower bounds only for unate Boolean functions:
\begin{restatable}[Unateness as a criterion]{theorem}{thmkwfnotbreaksiff}\label{thm:kw-f-not-breaks-if}
Let $\booleanf{f}{n}$ be a non-constant Boolean function. Then, there exists $\booleanx{x}{n}$ such that:
\[
    \size_{F}^{+}(\dfunc{f}{x}) = \size_{F}^{\u}(f),
\]
if and only if $f$ is a unate function.
\end{restatable}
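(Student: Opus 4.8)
The plan is to prove the two implications of \Cref{thm:kw-f-not-breaks-if} separately. Throughout, write $x^{(y)}$ for the ternary point obtained from $x$ by replacing its $j$-th coordinate with $u$ for every $j$ with $y_j=1$, so that the hazard-derivative of $f$ at a Boolean point $x$ is $\dfunc{f}{x}(y)=[\tilde{f}(x^{(y)})=u]$. The ``if'' part I reduce to the known monotone tightness; the ``only if'' part I obtain from a gate-by-gate analysis of the passage from a hazard-free formula to a monotone formula for its hazard-derivative.

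\textbf{Unate $\Rightarrow$ equality at some $x$.} Suppose $f$ is unate with polarity vector $p\in\{0,1\}^n$, so that $g(y):=f(y\oplus p)$ is monotone; let $\sigma$ be the coordinate flip on $\{j:p_j=1\}$. I claim $x=p$ works, using the monotone case $\size_{F}^{+}(\dfunc{g}{\bar{0}})=\size_{F}^{u}(g)$ from \cite{IKL+19} (stated above). Two invariances suffice. First, $\size_{F}^{u}(f)=\size_{F}^{u}(g)$: swapping the literals $x_j\leftrightarrow\neg x_j$ for all $j$ with $p_j=1$ is a leaf-count-preserving bijection on formulas that turns a hazard-free formula for $f$ into one computing $\tilde{g}$ (as $\tilde{g}=\tilde{f}\circ\sigma$, since $\sigma$ maps the resolutions of $z$ bijectively onto those of $\sigma(z)$), and any formula computing a hazard-free extension is hazard-free. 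Second, $\dfunc{f}{p}=\dfunc{g}{\bar{0}}$ as Boolean functions of $y$, because $\sigma(\bar{0}^{(y)})=p^{(y)}$, whence $\tilde{g}(\bar{0}^{(y)})=u$ iff $\tilde{f}(p^{(y)})=u$. Chaining $\size_{F}^{+}(\dfunc{f}{p})=\size_{F}^{+}(\dfunc{g}{\bar{0}})=\size_{F}^{u}(g)=\size_{F}^{u}(f)$ completes this direction.

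\textbf{Equality at some $x$ $\Rightarrow$ unate.} I prove the contrapositive, in the sharper form: if $\size_{F}^{+}(\dfunc{f}{x})=\size_{F}^{u}(f)$ for some $x$, then $f$ is unate. Fix an optimal hazard-free formula $F$ for $f$, so it has $\size_{F}^{u}(f)$ leaves. \emph{(i)~$F$ has no constant leaves:} as $f$ is non-constant, a constant leaf feeding a gate $c\star h$ can be deleted by replacing that subformula with a constant or with $h$ (via $0\vee h\equiv h$, $1\vee h\equiv 1$, $0\wedge h\equiv 0$, $1\wedge h\equiv h$ in Kleene logic), strictly decreasing the leaf count while computing the same ternary function, hence still a hazard-free formula for $f$ --- contradicting optimality. \emph{(ii)~Equality forces $F$ to be gate-homogeneous at $x$:} recall the hazard-derivative calculus behind \Cref{thm:derivative-lower-bound}. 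A literal leaf $x_j$ or $\neg x_j$ has derivative $y_j$; for subformulas $g,h$ taking equal value $v$ at $x$, the gate $g\wedge h$ has derivative $\dfunc{g}{x}\vee\dfunc{h}{x}$ if $v=1$ and $\dfunc{g}{x}\wedge\dfunc{h}{x}$ if $v=0$ (dually for $\vee$); but a \emph{mixed} gate --- children valued $0$ and $1$ at $x$ --- has derivative equal to that of one of its children alone. So propagating derivatives bottom-up along $F$ produces a monotone formula for $\dfunc{f}{x}$ with at most $\size_{F}^{u}(f)$ leaves, and \emph{strictly} fewer as soon as $F$ has even one mixed gate at $x$, since then an entire subtree of $F$ is discarded. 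The hypothesis rules this out, so $F$ has no mixed gate at $x$. \emph{(iii)~Conclusion:} propagating values downward from the root of $F$ (value $f(x)$) through the gate-homogeneous formula, every leaf of $F$ evaluates to $f(x)$ at $x$; but a literal $x_j$ does so only if $x_j=f(x)$ and $\neg x_j$ only if $x_j=1-f(x)$, so no variable occurs in both polarities in $F$. Hence $F$ is a unate formula, so $f$ is unate.

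\textbf{Main obstacle.} The crux is step (ii) of the converse: recognizing and verifying that the standard construction turning a hazard-free formula into a monotone formula for its hazard-derivative is \emph{strictly} lossy at every mixed gate, so that tightness of the derivative bound pins the optimal hazard-free formula down to a gate-homogeneous --- hence unate --- shape. The other ingredients (eliminating constant leaves, the downward value propagation, the two invariances in the forward direction) are routine; the only care needed is a correct accounting of the derivative calculus --- equivalently, of the formula version of \Cref{thm:derivative-lower-bound} --- and the fact that an optimal hazard-free formula may be assumed constant-free.
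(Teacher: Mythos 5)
Your proof is correct, but it takes a genuinely different route from the paper's. The paper argues entirely inside the Karchmer--Wigderson framework: it characterizes $\implicants{\dfunc{f}{x}}$ and $\implicates{\dfunc{f}{x}}$ in terms of $\implicants{f}$ and $\implicates{f}|_x$, shows via a simplification lemma that the submatrix of $\matkwu{f}{}$ with columns $\implicates{f}|_x$ has the same protocol complexity as $\matkwu{\dfunc{f}{x}}{}$, gets the forward direction from the observation that a unate $f$ admits some $x'$ with $\implicates{f}|_{x'}=\implicates{f}$, and for the converse exhibits, for non-unate $f$ and every $x$, a $d$-uniform monochromatic rectangle that any protocol for $\kwgame_f^u$ must place outside the columns $\implicates{f}|_x$, so that restricting the protocol to those columns drops at least one leaf. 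You instead stay at the level of formulas: the forward direction is a polarity-flip reduction to the known monotone tightness (both invariances you invoke are exact in Kleene logic, so this is fine), and the converse extracts structure from an optimal hazard-free formula by noting that the gate-by-gate construction behind \Cref{thm:derivative-lower-bound} is strictly lossy at every gate whose children take different Boolean values at $x$; your case analysis of the gate derivatives is correct, and the ``topmost mixed gate survives, hence at least one leaf is discarded'' accounting goes through. Your approach is more self-contained and additionally shows that an optimal hazard-free formula for a unate function may be taken to be a unate formula; the paper's matrix decomposition is heavier but is reused quantitatively later (e.g.\ in \Cref{thm:derivative-upper-bound} and \Cref{prop:lower-bound-partition-x}), which a purely formula-level argument does not provide.

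One step you must make explicit: the paper's De Morgan formulas allow internal $\neg$ gates, and your step (iii) --- propagating the value $f(x)$ downward to every leaf --- fails verbatim in their presence, since a negation flips the value (and your step (ii) tacitly treats all leaves as literals). You should first normalize $F$ so that negations occur only at the leaves; this is harmless because De Morgan's laws and double negation hold exactly as identities of ternary functions in Kleene logic, so the normalization preserves the computed ternary function (hence hazard-freeness and optimality) and the leaf count. With that normalization stated, steps (i)--(iii) are sound.
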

A unate Boolean function is defined as follows:
\begin{definition}[Unate function in $x_i$]\label{def:unate-function-xi}
A Boolean function $\booleanf{f}{n}$ is \emph{positive unate} in $x_i$ if for all $\booleanx{x}{n}$ the following holds:
\[
f(\replace{x}{i}{0}) \le f(\replace{x}{i}{1}),
\]
Similarly, a function is \emph{negative unate} in $x_i$ if for all $\booleanx{x}{n}$ the following holds:
\[
f(\replace{x}{i}{0}) \ge f(\replace{x}{i}{1}).
\]
We say $f$ is \emph{unate in $x_i$} if $f$ is either positive or negative unate in $x_i$.
\end{definition}

\begin{definition}[Unate function]\label{def:unate-function}
A Boolean function $\booleanf{f}{n}$ is \emph{unate} if it is unate in $x_1,\dots,x_n$.
\end{definition}

Monotone functions are a subset of unate functions, which  increase or decrease monotonically in each variable. Given the truth table of a Boolean function as input, we can efficiently determine whether it is unate or not by inspecting the truth table. We prove this result by analyzing the hazard-free KW game of the function and its hazard-derivatives.

\paragraph{A universal formula upper bound for most Boolean functions.} 
We prove that the hazard-free formula size of a random Boolean function is $2^{(1+o(1))n}$, thereby partially answering \Cref{que:universal-ub}. This gives an exponential improvement over the best known $O(3^n)$ upper bound of \cite{IK23}. It is important to note that the upper bound of \cite{IK23} holds for all functions, so it is conceivable that some functions require hazard-free formula size larger than $2^{(1+o(1))n}$. Even if this is the case, then neither counting arguments nor the hazard-derivative method, which can only provide a lower bound of at most $2^n$, will be able to prove it.

\begin{restatable}[A universal upper bound for most Boolean functions]{theorem}{universalub}
    \label{thm:universal-ub}
For a random Boolean function $\booleanf{f}{n}$, we get that with high probability:
\[
\frac{2^n}{\log n}\leq \size_{F}^\u(f)  \leq 2^n\cdot n^{(1-o(1))\log n}.
\]
\end{restatable}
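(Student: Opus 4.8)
The plan is to prove the two inequalities separately. The lower bound $\size_F^u(f)\ge 2^n/\log n$ is the standard counting estimate already noted in the introduction: a De~Morgan formula with at most $s$ leaves is specified by a binary tree shape, an $\wedge/\vee$ label at each of its fewer than $s$ internal gates, and one of $2n$ literals at each leaf, so there are at most $(cn)^s$ of them for an absolute constant $c$, and each computes a single Boolean function; since there are $2^{2^n}$ Boolean functions on $n$ variables, all but a $2^{-\Omega(2^n)}$ fraction of them require formula size larger than the claimed bound, so in particular the lower bound holds for a random $f$ with probability $1-o(1)$. The remaining, matching-in-the-exponent upper bound I would obtain from the \emph{complete sum} (Blake canonical form) of $f$.

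The first step of the upper bound is the observation that the DNF $\bigvee_{p}\bigl(\bigwedge_{\ell\in p}\ell\bigr)$, where $p$ ranges over \emph{all} prime implicants of $f$ and each is written as a conjunction of literals, is, read as a De~Morgan formula, hazard-free for $f$. I would verify this by evaluating the formula over $\{0,1,u\}$ at an arbitrary input $x$ and comparing with $\tilde f(x)$: identifying each prime implicant with the subcube of inputs it accepts, the formula outputs $1$ iff some prime implicant's subcube contains $R(x)$, equivalently iff $R(x)\subseteq f^{-1}(1)$; it outputs $0$ iff every prime implicant disagrees with some stable coordinate of $x$, equivalently iff $R(x)\cap f^{-1}(1)=\emptyset$; and it outputs $u$ otherwise. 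By \Cref{def:hazard-free-extension} this is precisely $\tilde f(x)$, so the formula is hazard-free and $\size_F^u(f)\le\sum_p|p|\le n\cdot N(f)$, where $N(f)$ is the number of prime implicants of $f$ and $|p|$ denotes the number of literals of $p$. (This is the classical fact that the Blake canonical form is hazard-free; the point to get right is that it holds in the multi-input-change model, not merely for single-bit changes. One cannot do better within DNFs, since every hazard-free DNF must contain all prime implicants.)

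The second step is a high-probability structural bound on the prime implicants of a random $f$. A prime implicant of dimension $d$ is in particular a $d$-dimensional subcube contained in $f^{-1}(1)$, which for a fixed subcube has probability $2^{-2^d}$; since there are at most $3^n$ subcubes altogether, a union bound shows that with probability $1-2^{-\Omega(n)}$ the function $f$ has no all-$1$ subcube of dimension exceeding $\log_2 n+1$. On this event every prime implicant has dimension at most $\log_2 n+1$, so $N(f)$ is at most the number of subcubes of dimension at most $\log_2 n+1$, namely $\sum_{d\le \log_2 n+1}\binom nd 2^{n-d}\le 2^n\cdot n\cdot\binom{n}{\log_2 n+1}\le 2^n\cdot n^{(1-o(1))\log_2 n}$, using $\binom{n}{\log_2 n+1}\le(en/\log_2 n)^{\log_2 n+1}=n^{\log_2 n-(1-o(1))\log_2\log_2 n}$. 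Combining with the first step gives $\size_F^u(f)\le n\cdot 2^n\cdot n^{(1-o(1))\log_2 n}=2^n\cdot n^{(1-o(1))\log_2 n}$ with probability $1-o(1)$, and intersecting with the lower-bound event proves \Cref{thm:universal-ub}.

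The probabilistic and counting steps are routine; the conceptual heart of the argument is the first step — recognising the complete sum as a hazard-free formula and seeing that it already beats the $O(3^n)$ multiplexer bound on a typical function. The genuine difficulty lies beyond this proof: the route is lossy, since bounding every prime-implicant length by $n$ is wasteful and a finer analysis of $\sum_p|p|$ should give $2^n\cdot n^{\Theta(\log\log n)}$; but pinning down the true value of $\size_F^u$ of a random function — which lies between $2^n/\log n$ and this bound, and which neither counting nor the hazard-derivative method can separate from $2^n$ — would require a much more refined understanding of the prime-implicant geometry of random functions than either the complete sum or the multiplexer construction provides.
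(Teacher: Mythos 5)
Your proposal is correct, and while the lower bound is the same Riordan--Shannon counting argument the paper invokes via \Cref{thm:almost-all-formulas-are-large}, your upper bound takes a genuinely different route. The paper works through hazard-derivatives: it first shows (\Cref{lem:random-der}) that with high probability every prime implicant of every derivative $\dfunc{f}{x}$ has weight at most $\log n$, so each derivative has a small monotone DNF, and then proves the general decomposition $\size_F^u(f)\le\sum_{x\in X_0}\size_F^+(\dfunc{f}{x})$ (\Cref{thm:derivative-upper-bound}) by partitioning the columns of $\matkwu{f}{}$ according to which $x$ each prime implicate derives and invoking \Cref{prop:submatrix-of-derive-p-equiv-derivative}; structurally this yields a top AND over $|X_0|\le 2^n$ branches. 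You instead write down the complete sum of $f$ itself, verify directly over Kleene logic that it computes $\tilde f$ (your three-case check is sound, including the $u$ case via $\precequ$-monotonicity of natural functions), and bound the number of prime implicants of a random $f$ by the absence of monochromatic subcubes of dimension exceeding $\log_2 n+1$. The underlying probabilistic fact is the same in both arguments -- random functions have no large monochromatic subcubes -- but your conversion into a formula is a single DNF rather than a KW-game decomposition, which makes the proof elementary and self-contained (no communication complexity needed). What the paper's longer route buys is the reusable \Cref{thm:derivative-upper-bound}, which also gives the near-matching lower bound of \Cref{thm:ub-tight} in terms of $\sum_x\size_F^+(\dfunc{f}{x})$ and the exact bounds for range functions in \Cref{prop:tight-lower-bound-range}; your DNF gives only the upper bound. (The only corner case in your argument, a constant $f$ whose complete sum is degenerate, occurs with probability $2^{1-2^n}$ and is irrelevant for a random function.)
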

Thus, we demonstrate that, unlike Boolean circuits, Boolean formulas and hazard-free circuits -- where the typical complexity matches that of the multiplexer function -- the hazard-free formula complexity is smaller than the optimal hazard-free formula for the multiplexer function by an exponential factor in $n$.

We provide two proofs for \Cref{thm:universal-ub} (see Sections \ref{subsec:random-function-dnf} and \ref{subsec:hazard-derivative-upper-bound}). The second relies on two other results that are interesting on their own. The first shows that, with high probability, the hazard-derivatives of a random Boolean function do not have too large monotone formulas. In particular, this provides a strong answer to \Cref{que:monotone-gap}, showing that the monotone gap can be exponential in the number of variables.

\begin{restatable}{proposition}{proprandomfunctiongap}\label{prop:random-function-gap}
Let $\booleanf{f}{n}$ be a random Boolean function. Then, with high probability,
\[
\monogap(f)=\omega\left(\frac{2^n}{ n^{\log n}}\right).
\]
\end{restatable}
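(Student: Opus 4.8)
By \Cref{def:the-monotone-gap} it is enough to show that, with high probability over a random $f$, the numerator satisfies $\size_{F}^{u}(f)\ge 2^{n}/\log n$ while the denominator satisfies $\size_{F}^{+}(\dfunc{f}{x})\le n^{(1-o(1))\log n}$ for \emph{every} $x$; dividing the first bound by the second then yields $\monogap(f)=\omega(2^{n}/n^{\log n})$. The lower bound on $\size_{F}^{u}(f)$ is the usual counting bound: there are at most $(O(n))^{s}$ distinct De Morgan formulas of size $s$, hence at most that many hazard-free formulas, so all but an $o(1)$ fraction of the $2^{2^{n}}$ Boolean functions have $\size_{F}^{u}(f)\ge 2^{n}/\log n$ (this is precisely the lower bound of \Cref{thm:universal-ub}). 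So everything rests on the upper bound for the hazard-derivatives, and the plan is to show that for a random $f$ those derivatives have only short minterms.

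Recall that $\dfunc{f}{x}(y)=1$ exactly when $f$ is non-constant on the subcube $Q_{x}(S):=\{w : w_{i}=x_{i}\text{ for all }i\notin S\}$, where $S=\{i : y_{i}=1\}$ and $Q_{x}(S)$ has dimension $|S|$; thus the minterms of the monotone function $\dfunc{f}{x}$ are the indicator vectors of the inclusion-minimal sets $S$ with $Q_{x}(S)$ non-constant. Set $d:=\lceil\log n\rceil+1$, so $2^{d}\ge 2n$. A fixed $d$-dimensional subcube is constant with probability $2\cdot 2^{-2^{d}}$, there are $\binom{n}{d}2^{n-d}\le n^{d}2^{n}$ of them, and a union bound gives
\[
\Pr\bigl[\text{some }d\text{-dimensional subcube is constant}\bigr]\ \le\ n^{d}2^{n}\cdot 2\cdot 2^{-2^{d}}\ \le\ 2^{-n+O(\log^{2}n)}\ =\ o(1).
\]
On the complementary event, every subcube of dimension at least $d$ is non-constant (a larger subcube contains a $d$-dimensional one), so for every $x$ every minterm of $\dfunc{f}{x}$ has size at most $d$ --- a minimal $S$ with $Q_{x}(S)$ non-constant and $|S|>d$ would contain a $d$-element subset $S'$ with $Q_{x}(S')$ already non-constant. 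Writing the monotone DNF $\dfunc{f}{x}(y)=\bigvee_{M}\bigwedge_{i\in M}y_{i}$ over the minterms of $\dfunc{f}{x}$ gives a monotone formula of size at most $\sum_{k\le d}k\binom{n}{k}\le d^{2}\binom{n}{d}$, so $\size_{F}^{+}(\dfunc{f}{x})\le d^{2}\binom{n}{d}$ for all $x$ simultaneously.

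It remains to estimate $d^{2}\binom{n}{d}$ with $d=\log n+O(1)$. From $\binom{n}{d}\le n^{d}/d!$, $n^{d}=n^{\log n+O(1)}$, and the Stirling bound $d!\ge (d/e)^{d}\ge ((\log n)/e)^{\log n}=n^{\log\log n-O(1)}$, one gets $d^{2}\binom{n}{d}\le n^{\log n-\log\log n+O(1)}=n^{(1-o(1))\log n}$; the decisive point is that the factor $1/d!$ saves $n^{\log\log n-O(1)}$ over the naive bound $n^{\log n}$, which is why $d$ must be as small as $\log n+O(1)$ and not, say, $2\log n$. Combining the two bounds, with high probability
\[
\monogap(f)\ \ge\ \frac{2^{n}/\log n}{n^{\log n-\log\log n+O(1)}}\ =\ \frac{2^{n}}{n^{\log n}}\cdot n^{\log\log n-O(1)}\ =\ \omega\!\left(\frac{2^{n}}{n^{\log n}}\right),
\]
since $n^{\log\log n}$ grows faster than every fixed power of $n$.

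The one genuinely delicate step is the exponent bookkeeping just above: the union bound over subcubes forces $d\ge\log n+1$ and the DNF bound forces the factor $\binom{n}{d}$, so it is exactly the Stirling saving $1/d!$ --- available only because $d=\log n+O(1)$ --- that turns an $O(2^{n}/n^{\log n})$ bound into a strict $\omega(\cdot)$. The remaining points are routine: a random $f$ is non-constant with high probability, whence $\dfunc{f}{x}\not\equiv 0$ and its DNF is well-defined and non-trivial, and intersecting the two high-probability events above costs nothing.
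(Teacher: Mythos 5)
Your proof is correct and follows essentially the same route as the paper: a counting lower bound of $2^n/\log n$ on $\size_F^u(f)$, combined with the observation that with high probability every hazard-derivative of a random $f$ has only minterms of weight $\log n + O(1)$, hence a monotone DNF of size $n^{(1-o(1))\log n}$, where the $(\log n)!$-type saving over $n^{\log n}$ supplies the $\omega(\cdot)$. The only (minor) difference is the probabilistic event: the paper union-bounds over candidate prime implicants of weight exceeding $\log n$ using the exact probability $(1/2)^{2^k-1}$ that a fixed weight-$k$ vector is a prime implicant of $\dfunc{f}{x}$, whereas you union-bound over constant subcubes of dimension $\lceil\log n\rceil+1$, which handles all base points $x$ in one stroke; both yield the same quantitative conclusion.
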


The second ingredient is an \emph{upper bound} using hazard-derivatives. For $b\in\{0,1\}$, we say that $p_{b}\in \implic{f}{b}$ \emph{derives} $x$ if $p_{b}\precequ x$. We denote the subset of $\implic{f}{b}$ containing the strings that derive $x$ by $\implic{f}{b}|_x$.

\begin{restatable}[Hazard-derivatives upper bound]{theorem}{thmderivativesupperbound}\label{thm:derivative-upper-bound}
Let $\booleanf{f}{n}$ and $b\in\{0,1\}$. Let $X_b\subseteq f^{-1}(b)$. Then, 
\[
\text{if}\quad \implic{f}{b}=\bigcup_{x\in X_b}\implic{f}{b}|_x,
\quad \text{then it holds that}\quad 
\size_{F}^\u(f)\le \sum_{x\in X_b}\size_{F}^{+}(\dfunc{f}{x}).
\]
\end{restatable}
We also provide an example in \Cref{subsection:tight-lower-bound-range}  that shows that \Cref{thm:derivative-upper-bound} yields tight bounds for range functions (functions that accept all inputs whose weight is within some range).

\paragraph{Composition of hazard-free functions.}
Following \cite{KRW95}, we establish a lower bound on the hazard-free complexity of block-composition of functions and demonstrate how it can be used to surpass the hazard-derivative method.
\begin{restatable}[Lower bound via direct sum]{proposition}{proplowerboundviadirectsum}\label{prop:lower-bound-direct-sum}
Let $\booleanf{f}{n}$ be a monotone Boolean function. Let $\booleanf{g}{m}$ a Boolean function. Let $\Phi:X\times Y \rightarrow \{0,1\}$ and $\Psi:X'\times Y'\rightarrow \{0,1\}$ be \emph{functions} such that $\Phi$ reduces to $\kwgame_f^{+}$ and $\Psi$ reduces to $\kwgame_g^\u$. Then, the following holds:
\[
CC(\kwgame^\u_{f\diamond g})\ge\log(\rank(M_\Phi))+\log(\rank(M_\Psi)).
\]
\end{restatable}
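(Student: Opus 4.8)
The plan is to follow the Karchmer--Raz--Wigderson template, with the hazard-free KW game in the role of the bottom game. Everything rests on a \emph{hazard-free direct sum reduction} playing the role of \Cref{lem:direct-sum-monotone}: for a monotone $f$ and a (non-constant) Boolean $g$,
$$
\kwgame_f^{+}\otimes\kwgame_g^{u}\ \le\ \kwgame_{f\diamond g}^{u}.
$$
Granting this, the proposition is immediate. Composing this reduction coordinatewise with the given reductions $\Phi\le\kwgame_f^{+}$ and $\Psi\le\kwgame_g^{u}$ yields $\Phi\otimes\Psi\le\kwgame_{f\diamond g}^{u}$, where $\Phi\otimes\Psi$ is again a \emph{function}; since rank is multiplicative under direct sum, $\rank(M_{\Phi\otimes\Psi})=\rank(M_\Phi)\cdot\rank(M_\Psi)$, and the rank lower bound for a relation admitting a reducing function gives $CC(\kwgame_{f\diamond g}^{u})\ge\log\rank(M_{\Phi\otimes\Psi})=\log\rank(M_\Phi)+\log\rank(M_\Psi)$.

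To build the reduction, let Alice hold $a\in f^{-1}(1)$ together with $x_2\in\{0,u,1\}^{m}$ satisfying $\tilde g(x_2)=1$, and let Bob hold $b\in f^{-1}(0)$ together with $y_2\in\{0,u,1\}^{m}$ satisfying $\tilde g(y_2)=0$; this is precisely an input pair for $\kwgame_f^{+}\otimes\kwgame_g^{u}$. Alice forms the $n\times m$ ternary matrix $X$ whose $i$th row equals $x_2$ when $a_i=1$ and equals the all-$u$ row $(u,\dots,u)$ when $a_i=0$; Bob forms $Y$ whose $i$th row equals $y_2$ when $b_i=0$ and equals $(u,\dots,u)$ when $b_i=1$. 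These are legal inputs for $\kwgame_{f\diamond g}^{u}$: for any resolution $X'$ of $X$, every row with $a_i=1$ is resolved inside $R(x_2)$, so $g(X'_i)=1$, and therefore the vector $(g(X'_1),\dots,g(X'_n))$ dominates $a$ coordinatewise; monotonicity of $f$ and $f(a)=1$ then force $f\diamond g(X')=1$, and since this holds for every resolution, $\widetilde{f\diamond g}(X)=1$. Symmetrically $\widetilde{f\diamond g}(Y)=0$. Now consider any legal output $(i^{*},j^{*})$ of $\kwgame_{f\diamond g}^{u}$ on $(X,Y)$, i.e. $X_{i^{*}j^{*}}\neq Y_{i^{*}j^{*}}$ with both entries stable. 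Since the all-$u$ row has no stable entry, stability of $X_{i^{*}j^{*}}$ forces $a_{i^{*}}=1$ and $X_{i^{*}}=x_2$, and stability of $Y_{i^{*}j^{*}}$ forces $b_{i^{*}}=0$ and $Y_{i^{*}}=y_2$. Hence $i^{*}$ is a legal answer for $\kwgame_f^{+}$ on $(a,b)$, and, comparing the $j^{*}$th coordinates, $(x_2)_{j^{*}}\neq(y_2)_{j^{*}}$ with both stable, so $j^{*}$ is a legal answer for $\kwgame_g^{u}$ on $(x_2,y_2)$. This completes the reduction.

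The one real obstacle is this direct sum reduction, and specifically the padding of the inactive rows. In the fully monotone setting of \Cref{lem:direct-sum-monotone}, a legal output of $\kwgame_{f\diamond g}^{+}$ has a fixed polarity ($X_{ij}=1$ and $Y_{ij}=0$), which by itself pins $a_{i}$ to $1$ and $b_{i}$ to $0$, so padding inactive rows by constants causes no harm. In the hazard-free game a legal output is merely a disagreeing pair of stable bits, so constant padding would admit outputs in rows with $a_{i}=0$ or $b_{i}=1$ that carry no information about the top game. Padding those rows with $(u,\dots,u)$ is exactly what destroys such spurious outputs while leaving $\widetilde{f\diamond g}(X)=1$ and $\widetilde{f\diamond g}(Y)=0$ intact; verifying these two identities via the coordinatewise-domination argument above (together with monotonicity of $f$) is the only point that requires care. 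The remaining ingredients — that a reduction from a function to a relation can only increase communication complexity, the log-rank lower bound for functions, and multiplicativity of rank under $\otimes$ — are standard and presumably already recorded in the communication-complexity preliminaries.
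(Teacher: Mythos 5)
Your proposal is correct and follows the same route as the paper: the proposition is reduced to the key lemma $\kwgame_f^{+}\otimes\kwgame_g^{u}\le\kwgame_{f\diamond g}^{u}$ (the paper's Lemma~\ref{lemma:reduction-of-direct-sum-to-comp}) plus the rank lower bound, and your padding construction --- copy the inner input into the rows where Alice's (resp.\ Bob's) top input is $1$ (resp.\ $0$) and fill the remaining rows with $u^m$ --- is exactly the map $\phi$ the paper uses. The one genuine difference is in how the reduction is verified. The paper plays the hazard-free game on prime implicants and implicates (Theorem~\ref{thm:kw-hazard-free-implicants}), and certifies that $\phi$ lands in $\implicants{f\diamond g}\times\implicates{f\diamond g}$ via the characterization of prime implicants of a composition (Proposition~\ref{prop:prime-implicants-composition}) and the identity $\widetilde{f\diamond g}=\tilde f\diamond\tilde g$ (Lemma~\ref{lem:hazard-free-extension-composition}). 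You instead take arbitrary $a\in f^{-1}(1)$, $b\in f^{-1}(0)$ and arbitrary ternary inner inputs, and check directly, by looking at resolutions of $X$ and $Y$ and invoking monotonicity of $f$, that $\widetilde{f\diamond g}(X)=1$ and $\widetilde{f\diamond g}(Y)=0$; the all-$u$ padding then kills spurious answers exactly as you say. Your version is more elementary and self-contained (it needs none of the composition-of-implicants machinery), while the paper's version yields the structural byproducts (Proposition~\ref{prop:prime-implicants-composition}, Lemma~\ref{lem:hazard-free-extension-composition}) that it reuses elsewhere; both are valid, and your diagnosis of why the $u$-padding, rather than constant padding, is the crux in the hazard-free setting is exactly right.
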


\Cref{prop:lower-bound-direct-sum} provides a step toward advancing hazard-free lower bound techniques, and we hope it encourages further research in this direction. As an application, we prove a lower bound on the depth of the composition of the set covering function and the multiplexer function (see \Cref{sec:composition-hazard-free} for definitions).

\begin{restatable}[Composition of a set covering and the multiplexer function]{proposition}{propcompmonotonemux}
    \label{prop-intro:comp-monotone-mux}
Let $\mux_m$ be the multiplexer function and $MC_{k,n}$ be the set covering function, such that $k=c\log n$ for some suitable constant $c>0$. Then,
\[
\depth_{F}^\u(MC_{k,n} \diamond \mux_m) \ge \log(3^m)+\log\binom{n}{c \log n} \ge\log(3)\cdot m +\Omega(c(\log n)^2).
\]
\end{restatable}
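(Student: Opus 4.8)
The plan is to instantiate the direct-sum lower bound of \Cref{prop:lower-bound-direct-sum} with the monotone outer function $f=MC_{k,n}$ and the inner function $g=\mux_m$, and to translate the resulting communication bound into a depth bound through the hazard-free Karchmer--Wigderson correspondence \Cref{thm:kw-hazard-free}, which gives $\depth_{F}^{u}(MC_{k,n}\diamond\mux_m)=CC(\kwgame^{u}_{MC_{k,n}\diamond\mux_m})$. Thus it suffices to produce a function $\Phi$ reducing to $\kwgame_{MC_{k,n}}^{+}$ with $\rank(M_\Phi)\ge\binom{n}{c\log n}$ and a function $\Psi$ reducing to $\kwgame_{\mux_m}^{u}$ with $\rank(M_\Psi)\ge 3^m$; \Cref{prop:lower-bound-direct-sum} then yields
$$
\depth_{F}^{u}(MC_{k,n}\diamond\mux_m)\ \ge\ \log(\rank(M_\Phi))+\log(\rank(M_\Psi))\ \ge\ \log(3^m)+\log\binom{n}{c\log n},
$$
which is the first claimed inequality. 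The second inequality is routine: $\log(3^m)=\log(3)\cdot m$, and $\binom{n}{c\log n}\ge(n/(c\log n))^{c\log n}$ gives $\log\binom{n}{c\log n}\ge c\log n\cdot(\log n-\log(c\log n))=\Omega(c(\log n)^2)$ for $n$ large enough. So the entire content is the construction of $\Phi$ and $\Psi$.

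For the outer function I would use the standard monotone Karchmer--Raz--Wigderson analysis of set covering, as recalled in \Cref{sec:composition-hazard-free}: the minterms of $MC_{k,n}$ are in bijection with the $k$-subsets of $[n]$, and pairing each such minterm with a suitable maxterm exposes a $\binom{n}{k}\times\binom{n}{k}$ identity submatrix inside the monotone KW matrix of $MC_{k,n}$. The induced ``equality on $k$-subsets'' function $\Phi$ therefore reduces to $\kwgame_{MC_{k,n}}^{+}$ and satisfies $\rank(M_\Phi)=\binom{n}{k}=\binom{n}{c\log n}$. This step is essentially classical and I do not expect it to be the difficulty.

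The inner function is the crux, and it is here that the ternary (hazard-free) setting is genuinely used. I would extract $\Psi$ from the hazard-free KW matrix of $\mux_m$ by following the ternary recursion behind the exact bound $\size_{F}^{u}(\mux_m)=2\cdot 3^m-1$ of \Cref{thm:mux-exact-bound}: conditioning on whether the first select bit is $0$, $1$, or $u$ splits $\kwgame^{u}_{\mux_m}$ into three nearly independent copies of $\kwgame^{u}_{\mux_{m-1}}$. To make this quantitative I would attach to each ternary address $t\in\{0,u,1\}^m$ an Alice-input $A(t)$ with $\widetilde{\mux}_m(A(t))=1$ (roughly, $A(t)$ fixes the select block to $t$ and forces every data bit addressed by a resolution of $t$ to $1$) and a symmetric Bob-input $B(t)$ with $\widetilde{\mux}_m(B(t))=0$, chosen so that on each pair $(A(t),B(t'))$ every correct protocol is forced to report a value determined by a prescribed function $\Psi(t,t')$, and so that, under a suitable ordering of $\{0,u,1\}^m$, the $3^m\times 3^m$ matrix $M_m:=M_\Psi$ is block lower-triangular with its three diagonal blocks each equal to $M_{m-1}$. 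This gives $\rank(M_m)\ge 3\cdot\rank(M_{m-1})$, hence $\rank(M_\Psi)\ge 3^m$ by induction (the base case $\mux_0$ being a single stable variable with a $1\times 1$ matrix). The main obstacle is exactly this design: arranging the inputs so that ``off-block-diagonal'' pairs $(A(t),B(t'))$ are forced onto the select block -- contributing nothing outside the diagonal blocks -- and, crucially, so that one really obtains \emph{rank} $3^m$ and not merely the weaker statement that $\approx 3^m$ leaves are needed. This is precisely the point at which the binary (Boolean) analogue would give only $2^m$, and is what makes the bound break the monotone barrier.
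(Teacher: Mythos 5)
Your high-level architecture is exactly the paper's: apply \Cref{prop:lower-bound-direct-sum} with outer function $MC_{k,n}$ and inner function $\mux_m$, translate via \Cref{thm:kw-hazard-free}, and supply a function $\Phi$ reducing to $\kwgame^{+}_{MC_{k,n}}$ of rank $\binom{n}{c\log n}$ and a function $\Psi$ reducing to $\kwgame^{u}_{\mux_m}$ of rank $3^m$; the closing arithmetic is also fine. However, neither ingredient is actually supplied correctly. For the outer function, the classical result is Razborov's reduction of the \emph{disjointness} function $I_{k,n}$ to $\kwgame^{+}_{MC_{k,n}}$ (\Cref{thm:disjointness-to-set-covering}), combined with $\rank(M_{I_{k,n}})=\binom{n}{k}$ (\Cref{thm:rank-of-disjointness}). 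Your ``equality on $k$-subsets'' is not that result, and exhibiting an identity submatrix of the KW matrix is a fooling-set statement about monochromatic rectangles, not a reduction of a \emph{function} in the sense \Cref{prop:lower-bound-direct-sum} needs: you would have to label edges by $\psi:[E]\to\{0,1\}$ so that for $S=T$ every valid KW answer is a $1$-edge while for $S\neq T$ every valid answer is a $0$-edge, and nothing in the classical analysis provides such a labelling. The bound survives only because disjointness happens to have the same rank.

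The more serious gap is the inner function, which you yourself flag as ``the main obstacle'' and then do not resolve. The paper does not build $\Psi$ from scratch; it quotes two results of Ikenmeyer--Komarath already stated in \Cref{sec:composition-hazard-free}: the function $\text{subcube-intersect}_m$ reduces to $\kwgame^{u}_{\mux_m}$ with no extra cost (\Cref{lem:reduction-sub-inter-mux}), and $\rank(M_{\text{subcube-intersect}_m})=3^m$ (\Cref{lem:rank-subcube-intersect}). Your sketched block-triangular recursion over $\{0,u,1\}^m$ is morally how one computes the rank of the subcube-intersect matrix (it is an $m$-fold tensor power of a full-rank $3\times 3$ matrix), so the intuition points in the right direction, but the hard part -- designing Alice/Bob inputs so that every correct protocol's answer determines $\Psi(t,t')$ for all ternary pairs -- is precisely the content of the quoted lemma and is left unproved in your proposal. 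As written, the proof is therefore incomplete at its crux; once you replace your two constructions by \Cref{thm:disjointness-to-set-covering} together with \Cref{thm:rank-of-disjointness}, and \Cref{lem:reduction-sub-inter-mux} together with \Cref{lem:rank-subcube-intersect}, the argument closes and coincides with the paper's.
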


\subsection{Organization of the paper}
In \Cref{section:preliminaries}, we present some basic definitions, as well as basic properties of the implicates and implicants of Boolean functions, with a focus on monotone and unate functions. Additionally, we provide background on communication complexity and prove the simplification lemma (\Cref{lem:simplification-lemma}). In \Cref{sec:kw-games-of-the-derivative}, we study the KW hazard-free game of the hazard-derivative.
In \Cref{sec:tightness-of-hdm}, we prove our first result (\Cref{thm:kw-f-not-breaks-if}), and in \Cref{subsec:andreev}, we present an example of an explicit function that exhibits a large monotone gap. \Cref{sec:universal} contains the proofs of the universal upper bound for most Boolean functions (\Cref{thm:universal-ub}), as well as 
the proof that a random Boolean function exhibits a large monotone gap (\Cref{prop:random-function-gap}), and the hazard-derivatives upper bound result (\Cref{thm:derivative-upper-bound}). In \Cref{sec:composition-hazard-free}, we study the composition of Boolean functions in the hazard-free setting and prove \Cref{prop:lower-bound-direct-sum}. We also demonstrate its applicability by proving \Cref{prop-intro:comp-monotone-mux}. 
We conclude by presenting some open problems in \Cref{sec:open}.

\section{Preliminaries}\label{section:preliminaries}
For an integer $n$, we set $[n]:=\{1,\ldots,n\}$. For $\booleanx{x}{n}$, we define $\neg x$ to be the Boolean input such that for every $i\in[n]$, it holds that $(\neg x)_i=\neg x_i$. For $\booleanx{x,y}{n}$, we denote $x \le y$ if for every $i\in[n]$, $x_i \le y_i$ (as integers).\footnote{Note the difference with \Cref{def:bit-relations-precequ} in which $0$ and $1$ are incomparable, whereas here $\u$ is not included in the ordering.} A Boolean function $\booleanf{f}{n}$ is \emph{monotone} if for every $\booleanx{x,y}{n}$ such that $x \le y$ it holds that $f(x) \le f(y)$.

For $s_1\in\{0,\u,1\}^n$ and $s_2\in\{0,\u,1\}^m$, we denote as $s_1\cdot s_2$ the concatenation of $s_1$ and $s_2$. For $\ternaryx{x}{n}$, we define the \emph{$b$-weight} of $x$ as $|x|_b:=|\{i\in[n]: x_{i}=b\}|$ for some $b\in\{0,\u,1\}$. 

Let $\ternaryf{g}{m}$ be a natural function, and let $\ternaryx{X}{nm}$ be a ternary $n\times m$ matrix. Then, we denote $g[X]:=(g(X_1),\dots, g(X_n))$ and $\df{g}[X;Y]:=(\df{g}(X_1;Y_1),\dots,\df{g}(X_n;Y_n))$, where $X_i$ is the $i$'th row of $X$. 

A (De Morgan) Boolean circuit is a directed acyclic graph in which nodes of in-degree $0$ are labeled with input variables, and each internal node is labeled with one of the Boolean operators $\{\land,\vee,\neg\}$, where the first two have fan-in $2$ and the third has fan-in $1$. Nodes of fan-out $0$ are called outputs. An output gate computes a Boolean function in the natural way. A (De Morgan) formula is a circuit whose graph is a tree. As in \Cref{def:hazard-free}, we say that a formula is hazard-free if it has no hazards.

A circuit (formula) is monotone if it does not involve $\neg$ gates. Each monotone circuit (formula) computes a monotone function, and each monotone function can be computed by a monotone circuit (formula). From now on, we speak only of De Morgan circuits and formulas, omitting the term ``De Morgan''.

The \emph{size} of a Boolean circuit $C$, denoted $\size(C)$, is defined as the number of vertices and edges in the underlying graph. The \emph{depth} of a Boolean circuit $C$, denoted $\depth(C)$, is defined as the length of the longest directed path from an input to an output gate. The size of a formula $F$, denoted $\size(F)$, is defined as the number of leaves in the underlying tree.

We denote the minimal size of a circuit computing $f$ by $\size_{C}(f)$, and the minimal depth of a formula computing $f$ by $\depth_{F}(f)$. We add a superscript from $\{\u,+\}$ to the notation whenever we work in the hazard-free ($\u$) or the monotone ($+$) setting, respectively.
For example, $\depth_{F}^{\u}(f)$ is the minimal depth of a hazard-free formula computing $f$.

Throughout the paper we use the hazard-free XOR operation, defined as follows:
\begin{definition}[XOR operation]\label{def:xor} We use \emph{$+$} to denote the $\xor$ operator in the hazard-free setting. Let $\ternaryx{b, s}{}$, then:
\[
    b+s:=
    \begin{cases}
      0, & \text{$b=s$ and $b,s \neq \u$},\\
      1, & \text{$b\neq s$ and $b,s \neq \u$},\\
      \u, & otherwise.
    \end{cases}
\]
Computing $x+y$ for $\ternaryx{x,y}{n}$ is done coordinatewise.
\end{definition}

\subsection{Implicants and implicates}\label{sec:implicants}
We prove the a useful property of $\implicants{f}$ and $\implicates{f}$ for a general Boolean function $f$. Later, we focus on monotone and unate functions. 

\begin{proposition}\label{prop:implicant-and-neg-implicate}
Let $\booleanf{f}{n}$ be a Boolean function. Let $p_1\in\implicants{f}$ and $p_0\in \implicates{f}$. Let $i,j\in[n]$ such that $(p_1)_i\neq \u$ and $(p_0)_j\neq \u$. Then,
\begin{enumerate}
    \item There exists $q_0\in\implicates{f}$ such that $i$ is the only different and stable coordinate of $q_0$ and $p_1$.\label{item:implicant-and-neg-implicate-1}
    \item There exists $q_1\in\implicants{f}$ such that $j$ is the only different and stable coordinate of $q_1$ and $p_0$.\label{item:implicant-and-neg-implicate-2}
\end{enumerate} 
\end{proposition}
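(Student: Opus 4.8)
The plan is to prove both items by a direct argument on prime implicants/implicates using the ternary characterization in Definitions~\ref{def:prime-implicant} and~\ref{def:implicant}. I will prove item~\ref{item:implicant-and-neg-implicate-1}; item~\ref{item:implicant-and-neg-implicate-2} follows by the symmetric argument (swapping the roles of $0$ and $1$, i.e.\ replacing $f$ with $\neg f$, which swaps $\implicants{f}$ and $\implicates{f}$ and preserves primality since $\widetilde{\neg f}=\neg\tilde f$). So fix $p_1\in\implicants{f}$ and $i\in[n]$ with $(p_1)_i\neq u$; without loss of generality say $(p_1)_i=1$. The target is a prime implicate $q_0$ agreeing with $p_1$ on all stable coordinates except $i$, where $q_0$ is stable and differs, i.e.\ $(q_0)_i=0$.

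The key observation is that since $p_1$ is a \emph{prime} implicant, flipping its $i$-th coordinate to $u$ makes $\tilde f$ unstable: $\tilde f(\replace{p_1}{i}{u})=u$. By Definition~\ref{def:hazard-free-extension}, $R(\replace{p_1}{i}{u})$ contains both a $0$-input and a $1$-input of $f$; and since $\replace{p_1}{i}{u}\succeq$ nothing helps directly, the cleanest route is: $\replace{(p_1)}{i}{u}$ has a resolution $y$ with $f(y)=0$, and $y$ must resolve the new $u$ at coordinate $i$ to $0$ (otherwise $y\in R(p_1)$, forcing $f(y)=1$ since $p_1$ is an implicant). Now set $x:=\replace{y}{i}{0}$; this $x$ is a $0$-input of $f$ that agrees with $p_1$ on every coordinate where $p_1$ is stable, except at $i$. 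So $x\in I_0^{(f)}$ (an implicate). Now ``shrink'' $x$ to a prime implicate $q_0$ by repeatedly replacing stable coordinates with $u$ as long as the result stays an implicate; the process terminates at some $q_0\in\implicates{f}$ with $q_0\precequ x$. It remains to argue that $q_0$ still has $(q_0)_i=0$ stable and still differs from $p_1$ there, and that $q_0$ introduces no \emph{new} stable coordinate disagreeing with $p_1$ — the latter is automatic since $q_0\precequ x$ means $q_0$'s stable coordinates are a subset of $x$'s, and on those $x$ agrees with $p_1$ except at $i$.

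The main obstacle is the claim that the shrinking process can be forced to keep coordinate $i$ fixed at $0$ — a priori the greedy shrink might $u$-ify coordinate $i$. To handle this I would shrink only the coordinates in $[n]\setminus\{i\}$: run the greedy ``$u$-ify a stable coordinate while remaining an implicate'' process over $[n]\setminus\{i\}$, obtaining some $q_0'$ with $(q_0')_i=0$. This $q_0'$ is an implicate in which \emph{every} stable coordinate other than $i$ is essential (replacing it by $u$ breaks the implicate property). If $i$ is also essential in $q_0'$ we are done: $q_0:=q_0'$ is a prime implicate, $(q_0)_i=0\ne 1=(p_1)_i$, and $q_0$ agrees with $p_1$ on all its other stable coordinates (they lie among $x$'s stable coordinates). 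If $i$ is \emph{not} essential in $q_0'$, then $\replace{q_0'}{i}{u}$ is still an implicate; but $\replace{q_0'}{i}{u}\precequ \replace{p_1}{i}{u}$ would make $\tilde f(\replace{p_1}{i}{u})=0$ (an implicate is a $0$-point of $\tilde f$, and $\tilde f$ is monotone w.r.t.\ $\precequ$ going from $0$ upward... ) — wait, monotonicity of natural functions goes the other way, so instead use: every resolution of $\replace{q_0'}{i}{u}$ is a resolution of $\replace{p_1}{i}{u}$, hence a resolution of $p_1$ once we fix coordinate $i$ appropriately, giving a $1$-input; combined with being an implicate this is a contradiction. Thus $i$ must be essential in $q_0'$, completing the proof. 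I expect writing this last contradiction carefully — tracking which resolutions of $\replace{p_1}{i}{u}$ are common to both $\replace{q_0'}{i}{u}$ and $R(p_1)$ — to be the fiddly part, but it is routine once one unwinds Definitions~\ref{def:resolution} and~\ref{def:hazard-free-extension}.
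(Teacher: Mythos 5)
Your proposal is correct and takes essentially the same route as the paper: use primality of $p_1$ to obtain a resolution $z$ of $\replace{p_1}{i}{u}$ with $f(z)=\neg (p_1)_i$, pass to a prime implicate $q_0\precequ z$, and use the fact that an implicant and an implicate cannot share a common resolution to force coordinate $i$ to be the unique stable disagreement. The only differences are cosmetic: the paper simply takes an arbitrary prime implicate deriving $z$ — the no-common-resolution argument already forces $(q_0)_i=\neg(p_1)_i$, so your constrained greedy shrinking and the essentiality case analysis are unnecessary — and in your final step the claim should be that $\replace{q_0'}{i}{u}$ and $p_1$ agree on all commonly stable coordinates and hence \emph{have a} common resolution (not that every resolution of one is a resolution of the other).
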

\begin{proof}
We prove \eqref{item:implicant-and-neg-implicate-1}, the proof of \eqref{item:implicant-and-neg-implicate-2} is interchangeable. Denote $b:=(p_1)_i\in \{0,1\}$. Consider $p_1'=\replace{p_1}{i}{\u}$. Since $p_1$ is a prime implicant, we have $\tilde{f}(p'_1)=\u$ and therefore there exists $z\in R(p_1')$ such that $f(z)=0$.
Let $q_0\in \implicates{f}$ such that $z\in R(q_0)$. 
As $p_1'\precequ z$ and $q_0\precequ z$, $z$ agrees with $p_1$ and $q_0$ on every stable coordinate except $i$. We must have $(q_0)_i=\neg (p_1)_i=\neg b$, as otherwise $p_1,q_0$ would have a common resolution. Therefore, the $i$'th coordinate is the only different and stable coordinate.
\end{proof}

\subsubsection{Monotone Boolean functions}\label{subsec:prime-implicants-monotone}
The implicants and implicates of a monotone Boolean function can be studied by considering only the maximal and minimal resolutions with respect to $\le$:
\begin{definition}[The maximal and the minimal resolutions w.r.t $\le$]\label{def:max-and-min-res}
Let $\ternaryx{x}{n}$. We call $\hat{x}:=\replaceall{x}{\u}{1}$ the \emph{maximal} resolution of $x$ with respect to $\le$ and $\check{x}:=\replaceall{x}{\u}{0}$ the \emph{minimal} resolution of $x$ with respect to $\le$.
\end{definition}

\begin{lemma}[Implicants and implicates of a monotone function]\label{lem:implicants-implicates-monotone}
Let $\booleanf{f}{n}$ be a monotone Boolean function. Let $\ternaryx{p}{n}$, then:
\begin{enumerate}
    \item $p\in I_1^{(f)}$ if and only if $f(\check{p})=1$.\label{item:implicants-implicates-monotone-1}
    \item $p\in I_0^{(f)}$ if and only if $f(\hat{p})=0$.\label{item:implicants-implicates-monotone-2}
\end{enumerate}
\end{lemma}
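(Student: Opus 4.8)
\textbf{Proof plan for Lemma~\ref{lem:implicants-implicates-monotone}.}
The plan is to prove part~\eqref{item:implicants-implicates-monotone-1}; part~\eqref{item:implicants-implicates-monotone-2} follows by the dual argument (or by applying part~\eqref{item:implicants-implicates-monotone-1} to the monotone function $x\mapsto \neg f(\neg x)$, after noting that negating all coordinates swaps $\hat p$ with the complement of $\check p$). Recall from \Cref{def:implicant} that $p\in I_1^{(f)}$ means $\tilde f(p)=1$, which by \Cref{def:hazard-free-extension} means $f(y)=1$ for \emph{every} $y\in R(p)$. So what must be shown is the equivalence: $f$ is identically $1$ on $R(p)$ if and only if $f(\check p)=1$.

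The direction $(\Rightarrow)$ is immediate: $\check p=\replaceall{p}{u}{0}\in R(p)$ by \Cref{def:resolution,def:max-and-min-res}, so if $f\equiv 1$ on $R(p)$ then in particular $f(\check p)=1$. For the direction $(\Leftarrow)$, suppose $f(\check p)=1$ and let $y\in R(p)$ be arbitrary. The key observation is that $\check p\le y$ as integer vectors: on every coordinate $i$ with $p_i\in\{0,1\}$ we have $y_i=p_i=(\check p)_i$, and on every coordinate with $p_i=u$ we have $(\check p)_i=0\le y_i$. Since $f$ is monotone with respect to $\le$ (as defined in \Cref{section:preliminaries}), $f(\check p)\le f(y)$, hence $f(y)=1$. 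As $y\in R(p)$ was arbitrary, $\tilde f(p)=1$, i.e.\ $p\in I_1^{(f)}$.

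For part~\eqref{item:implicants-implicates-monotone-2}, $p\in I_0^{(f)}$ means $f(y)=0$ for every $y\in R(p)$; the $(\Rightarrow)$ direction uses $\hat p=\replaceall{p}{u}{1}\in R(p)$, and the $(\Leftarrow)$ direction uses that $y\le\hat p$ for every $y\in R(p)$ together with monotonicity to conclude $f(y)\le f(\hat p)=0$. There is essentially no obstacle here: the only thing to be a little careful about is the mismatch between the two orderings in the paper — $\precequ$ from \Cref{def:bit-relations-precequ}, under which $0$ and $1$ are incomparable, versus the integer order $\le$ on $\{0,1\}^n$ used in the definition of monotonicity — but since every element of $R(p)$ is a Boolean (stable) vector, comparing $\check p$ (or $\hat p$) to it under $\le$ is legitimate and the argument goes through cleanly.
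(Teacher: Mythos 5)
Your proposal is correct and follows exactly the paper's argument: both directions rest on the facts that $\check p$ (resp.\ $\hat p$) is the $\le$-minimal (resp.\ $\le$-maximal) element of $R(p)$ and that $f$ is monotone with respect to $\le$, which is precisely the paper's one-line proof spelled out in more detail. No gaps; the remark about the two orderings is a sensible clarification but does not change the argument.
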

\begin{proof}
We prove \eqref{item:implicants-implicates-monotone-1}, the proof of \eqref{item:implicants-implicates-monotone-2} is analogous. $p\in I_{1}^{(f)}$ if and only if $\forall z\in R(p), f(z)=1$. Since $f$ is monotone w.r.t $\le$ and $\check{p}\in R(p)$ is minimal w.r.t $\le$, the later holds if and only if $f(\check{p})=1$. 
\end{proof}

\begin{lemma}[Prime implicants and prime implicates of a monotone function]\label{lem:prime-implicants-implicates-monotone}
Let $\booleanf{f}{n}$ be a monotone Boolean function. Let $\ternaryx{p}{n}$ and denote $S:=\{i\in [n]: p_i\neq \u\}$, then:
\begin{enumerate}
    \item $p \in \implicants{f}$ if and only if $f(\check{p})=1$ and for every $i\in S$ it holds that $f(\replace{\check{p}}{i}{0})=0$.\label{item:prime-implicants-implicates-monotone-1}
    \item $p \in \implicates{f}$ if and only if $f(\hat{p})=0$ and for every $i\in S$ it holds that $f(\replace{\hat{p}}{i}{1})=1$.\label{item:prime-implicants-implicates-monotone-2}
\end{enumerate}
\end{lemma}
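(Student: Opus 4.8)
I would prove only part~\eqref{item:prime-implicants-implicates-monotone-1}, since part~\eqref{item:prime-implicants-implicates-monotone-2} follows by the obvious duality (exchange the roles of $0$ and $1$, of $\hat{\cdot}$ and $\check{\cdot}$, and of implicant and implicate; formally, apply part~\eqref{item:prime-implicants-implicates-monotone-1} to the monotone function $g(x):=\neg f(\neg x)$). By \Cref{def:prime-implicant}, $p\in\implicants{f}$ iff $p\in I_1^{(f)}$ and for every $i\in S$ we have $\tilde f(\replace{p}{i}{u})=u$. By \Cref{lem:implicants-implicates-monotone}\eqref{item:implicants-implicates-monotone-1}, the condition $p\in I_1^{(f)}$ is exactly $f(\check p)=1$, so it remains to show that, assuming $p\in I_1^{(f)}$, the primality condition ``$\tilde f(\replace{p}{i}{u})=u$ for all $i\in S$'' is equivalent to ``$f(\replace{\check p}{i}{0})=0$ for all $i\in S$''. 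I would prove the equivalence coordinatewise, i.e. for a fixed $i\in S$.

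Fix $i\in S$, so $p_i\in\{0,1\}$. The first step is to observe that $\replace{p}{i}{u}\precequ p$, hence by naturality of $\tilde f$ (monotonicity with respect to $\precequ$, \Cref{def:natural-function}) together with $\tilde f(p)=1$, we get $\tilde f(\replace{p}{i}{u})\in\{u,1\}$. Consequently $\tilde f(\replace{p}{i}{u})=u$ if and only if $\replace{p}{i}{u}\notin I_1^{(f)}$. The second step applies \Cref{lem:implicants-implicates-monotone}\eqref{item:implicants-implicates-monotone-1} once more: $\replace{p}{i}{u}\in I_1^{(f)}$ if and only if $f\bigl(\widecheck{\replace{p}{i}{u}}\bigr)=1$, where $\widecheck{\cdot}$ denotes the minimal resolution of \Cref{def:max-and-min-res}. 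The third step is the bookkeeping identity $\widecheck{\replace{p}{i}{u}}=\replace{\check p}{i}{0}$: indeed, both vectors agree with $\check p$ on every coordinate $j\neq i$ (the operations $\replace{\cdot}{i}{u}$, $\replace{\cdot}{i}{0}$ do not touch $j$, and replacing $u$'s by $0$'s is the same on both), and both have value $0$ in coordinate $i$ — for $\widecheck{\replace{p}{i}{u}}$ because the $u$ there is resolved to $0$, and for $\replace{\check p}{i}{0}$ by definition — regardless of whether $p_i$ was $0$ or $1$.

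Chaining the three steps: for each $i\in S$,
\[
\tilde f(\replace{p}{i}{u})=u
\iff \replace{p}{i}{u}\notin I_1^{(f)}
\iff f\bigl(\replace{\check p}{i}{0}\bigr)\neq 1
\iff f\bigl(\replace{\check p}{i}{0}\bigr)=0 ,
\]
the last equivalence because $f$ is Boolean. Taking the conjunction over all $i\in S$ and combining with $p\in I_1^{(f)}\iff f(\check p)=1$ yields exactly the claimed characterization of $\implicants{f}$.

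I do not expect a genuine obstacle here; the only point requiring care is the coordinate identity $\widecheck{\replace{p}{i}{u}}=\replace{\check p}{i}{0}$, where one must check both cases $p_i=0$ and $p_i=1$ so as not to accidentally assume $p_i=1$. The conceptual content is entirely carried by \Cref{lem:implicants-implicates-monotone} and by naturality of $\tilde f$, which pins down $\tilde f(\replace{p}{i}{u})$ to the two values $\{u,1\}$ and thereby lets ``being unstable'' be detected by ``not being an implicant.''
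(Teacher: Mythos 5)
Your proposal is correct and follows essentially the same route as the paper: reduce via \Cref{def:prime-implicant} and \Cref{lem:implicants-implicates-monotone} to the coordinatewise equivalence, and settle it using the identity $\widecheck{\replace{p}{i}{u}}=\replace{\check p}{i}{0}$ (the paper phrases the two directions as a contradiction argument and a two-resolutions argument rather than your naturality-based equivalence chain, but the content is the same). The paper likewise dispatches part~\eqref{item:prime-implicants-implicates-monotone-2} by symmetry, so your duality remark matches it.
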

\begin{proof}
We prove \eqref{item:prime-implicants-implicates-monotone-1}, the proof of \eqref{item:prime-implicants-implicates-monotone-2} is similar. By \Cref{def:prime-implicant}, $p$ is a prime implicant of $f$ if and only if $p\in I_1^{(f)}$ and $\forall i\in S, \tilde{f}(\replace{p}{i}{\u})=\u$. By \Cref{lem:implicants-implicates-monotone}, we get $p\in I_1^{(f)}$ if and only if $f(\check{p})=1$, so it is left to prove $\forall i\in S, \tilde{f}(\replace{p}{i}{\u})=\u$ if and only if $\forall i\in S, f(\replace{\check{p}}{i}{0})=0$:

$\Rightarrow$ Let $i\in S$. Assume towards contradiction that $f(\replace{\check{p}}{i}{0})=1$. As $\replace{\check{p}}{i}{0}$ is the minimal resolution of $\replace{p}{i}{\u}$ w.r.t $\le$, it follows from 
\Cref{lem:prime-implicants-implicates-monotone} that $\tilde{f}(\replace{p}{i}{\u})=1$, a contradiction since $\tilde{f}(\replace{p}{i}{\u})=\u$.

$\Leftarrow$ Let $i\in S$. From our assumptions we have $f(\replace{\check{p}}{i}{0})=0$ and $f(\check{p})=1$. Since $\check{p}, \replace{\check{p}}{i}{0}\in R(\replace{p}{i}{\u})$ we get $\tilde{f}(\replace{p}{i}{\u})=\u$.
\end{proof}

\begin{corollary}\label{cor:prime-monotone}
Let $\booleanf{f}{n}$ be a monotone Boolean function, then:
\begin{enumerate}
    \item Every $\implicant \in \implicants{f}$ satisfies $\implicant\in\{1,\u\}^{n}$. \label{item:prime-monotone-1}
    \item Similarly, $\implicates{f} \subseteq  \{0,\u\}^{n}$. \label{item:prime-monotone-2}
\end{enumerate}
\end{corollary}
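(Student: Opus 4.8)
The plan is to derive both items directly from the characterization of prime implicants and prime implicates of monotone functions in \Cref{lem:prime-implicants-implicates-monotone}, combined with the trivial observation that replacing a coordinate that already holds its ``extreme'' value ($0$ for the minimal resolution, $1$ for the maximal one) by that same value does nothing.

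First I would prove \eqref{item:prime-monotone-1}. Fix $\implicant\in\implicants{f}$ and let $S:=\{i\in[n]:\implicant_i\neq u\}$ be its set of stable coordinates; it suffices to show $\implicant_i=1$ for every $i\in S$. Suppose toward contradiction that $\implicant_i=0$ for some $i\in S$. Since $\check{\implicant}=\replaceall{\implicant}{u}{0}$ agrees with $\implicant$ on every stable coordinate, we have $\check{\implicant}_i=0$, and therefore $\replace{\check{\implicant}}{i}{0}=\check{\implicant}$. By \Cref{lem:prime-implicants-implicates-monotone}\eqref{item:prime-implicants-implicates-monotone-1} we have $f(\check{\implicant})=1$, hence $f(\replace{\check{\implicant}}{i}{0})=f(\check{\implicant})=1$; but the same lemma also asserts $f(\replace{\check{\implicant}}{i}{0})=0$, a contradiction. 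So $\implicant_i=1$ for all $i\in S$, i.e., $\implicant\in\{1,u\}^n$. The proof of \eqref{item:prime-monotone-2} is the mirror image: for $\implicate\in\implicates{f}$ and a stable coordinate $i$, if $\implicate_i=1$ then $\hat{\implicate}_i=1$ and $\replace{\hat{\implicate}}{i}{1}=\hat{\implicate}$, so \Cref{lem:prime-implicants-implicates-monotone}\eqref{item:prime-implicants-implicates-monotone-2} would simultaneously force $f(\hat{\implicate})=0$ and $f(\replace{\hat{\implicate}}{i}{1})=f(\hat{\implicate})=1$, again a contradiction.

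There is essentially no real obstacle in this argument; the only point worth a sentence is the degenerate case $S=\emptyset$, i.e., $\implicant=u^n$, which can arise as a prime implicant or prime implicate only when $f$ is constant. In that case both claims hold vacuously, since $u^n\in\{1,u\}^n\cap\{0,u\}^n$.
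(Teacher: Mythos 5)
Your proof is correct and uses essentially the same argument as the paper: both exploit that if $(\implicant)_i=0$ for a stable coordinate $i$, then $\replace{\check{\implicant}}{i}{0}=\check{\implicant}$, which clashes with the characterization in \Cref{lem:prime-implicants-implicates-monotone} (the paper phrases the contradiction as $\replace{\implicant}{i}{u}$ still being an implicant, you phrase it as the two conditions of the lemma contradicting each other, but these are the same observation). Your aside about the degenerate case $S=\emptyset$ is a harmless bonus.
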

\begin{proof}
We prove \eqref{item:prime-monotone-1}, the proof of \eqref{item:prime-monotone-2} is similar. By \Cref{lem:prime-implicants-implicates-monotone}, if there exists a coordinate $i\in [n]$ such that $(p_1)_i=0$ then $\check{p_1}=\replace{\check{p_1}}{i}{0}$, a contradiction, since $f(\check{p_1})=1$ and $f(\replace{\check{p_1}}{i}{0})=0$.
\end{proof}

\subsubsection{Unate Boolean functions}
The next lemmata generalize \Cref{cor:prime-monotone}. 

\begin{restatable}[The prime implicants of a unate Boolean function in $x_i$]{lemma}{funateimplicants}\label{lem:f-unate-implicants}
Let $\booleanf{f}{n}$ be a Boolean function. Then:
\begin{enumerate}
    \item $f$ is \emph{positive} unate in variable $x_{i}$ if and only if there is no $\implicant \in \implicants{f}$ such that $(\implicant)_{i}=0$. \label{item:f-unate-implicants-1}
    \item $f$ is \emph{negative} unate in variable $x_{i}$ if and only if there is no $\implicant \in \implicants{f}$ such that $(\implicant)_{i}=1$.\label{item:f-unate-implicants-2}
\end{enumerate}    
\end{restatable}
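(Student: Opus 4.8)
The plan is to prove the equivalence by characterizing prime implicants of $f$ in terms of the hazard-free extension $\tilde{f}$ and the replace operation, exactly mirroring the monotone case (\Cref{lem:implicants-implicates-monotone} and \Cref{cor:prime-monotone}). By symmetry I would only prove item~\eqref{item:f-unate-implicants-1}, as item~\eqref{item:f-unate-implicants-2} follows by negating the $i$'th input (which turns a positive-unate variable into a negative-unate one and swaps the roles of $0$ and $1$ in that coordinate among prime implicants).

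First I would establish the forward direction of~\eqref{item:f-unate-implicants-1}. Assume $f$ is positive unate in $x_i$ and suppose toward contradiction there is $\implicant\in\implicants{f}$ with $\implicant_i = 0$. Since $\implicant$ is a prime implicant, $\tilde f(\implicant) = 1$, so $f(y) = 1$ for every $y \in R(\implicant)$; in particular every resolution $y$ has $y_i = 0$. Using positive unateness, for each such $y$ we get $f(\replace{y}{i}{1}) \ge f(y) = 1$. But the resolutions of $\replace{\implicant}{i}{u}$ are exactly $\{\,y : y\in R(\implicant)\,\}\cup\{\,\replace{y}{i}{1} : y\in R(\implicant)\,\}$ (resolving the new $u$ in coordinate $i$ to $0$ gives back resolutions of $\implicant$, and to $1$ gives the flipped ones), and $f$ is $1$ on all of them, so $\tilde f(\replace{\implicant}{i}{u}) = 1 \ne u$, contradicting primality of $\implicant$ (the $i$'th coordinate is stable in $\implicant$). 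Hence no such prime implicant exists.

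For the converse I would argue contrapositively: suppose $f$ is not positive unate in $x_i$, i.e. there exists $\booleanx{x}{n}$ with $f(\replace{x}{i}{0}) = 1$ and $f(\replace{x}{i}{1}) = 0$ (WLOG $x_i$ is irrelevant to the witness, so we may assume $x_i$ arbitrary and just read off these two values). Then $\replace{x}{i}{0}\in I_1^{(f)}$ since it is a Boolean point on which $f = 1$, and $R(\replace{x}{i}{0}) = \{\replace{x}{i}{0}\}$. Now shrink $\replace{x}{i}{0}$ to a prime implicant $\implicant$ by the standard greedy procedure: repeatedly replace a stable coordinate by $u$ as long as the result stays an implicant (i.e. $\tilde f$ still evaluates to $1$). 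The key point is that coordinate $i$ is \emph{never} eliminated in this process: whenever coordinate $i$ is still set to $0$ in the current ternary string $z \precequ \replace{x}{i}{0}$ with $\tilde f(z) = 1$, the point $\replace{\hat z}{i}{1}$ (take any resolution and flip $i$ to $1$) is a resolution of $\replace{z}{i}{u}$, and by hypothesis this point has $f$-value $0$ — here I would use that the witness $x$ agrees with $z$ on all stable coordinates other than $i$, which holds because $z \precequ \replace{x}{i}{0}$ and the only coordinate where we need freedom is $i$. Wait — this needs the witness to agree with $z$ off coordinate $i$; since $z\precequ\replace{x}{i}{0}$, every stable coordinate of $z$ except $i$ matches $x$, so $\replace{z}{i}{1}\precequ\replace{x}{i}{1}$ up to resolving remaining $u$'s, and $f(\replace{x}{i}{1}) = 0$ does not immediately give $\tilde f$-value $0$ on all those resolutions. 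To fix this cleanly, I would instead invoke \Cref{prop:implicant-and-neg-implicate}\eqref{item:implicant-and-neg-implicate-1}: once we have \emph{any} prime implicant $\implicant$ with $\implicant_i$ stable, there is a prime implicate $q_0$ differing from $\implicant$ only in the stable coordinate $i$; combined with the forward direction applied in both unate directions, stable coordinate $i$ of a prime implicant must take a fixed value, and the non-positive-unate witness forces that value to be $0$ for at least one prime implicant.

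The main obstacle I anticipate is exactly this bookkeeping in the converse: showing that the greedy reduction to a prime implicant can be carried out while \emph{keeping} coordinate $i$ stable and equal to $0$. The clean way is to argue that among all implicants $z \precequ \replace{x}{i}{0}$, none with $z_i = u$ can have $\tilde f(z) = 1$, because a resolution of $z$ obtained by flipping coordinate $i$ to $1$ while resolving all other $u$'s to match $x$ is a resolution of $\replace{x}{i}{1}$ — but $\tilde f(\replace{x}{i}{1})$ need not be $0$ unless $\replace{x}{i}{1}$ is itself Boolean, which it is, giving $f(\replace{x}{i}{1}) = 0$ directly. So in fact: for $z\precequ\replace{x}{i}{0}$ with $z_i = u$, the Boolean point obtained by resolving $z$'s remaining $u$'s arbitrarily and setting coordinate $i$ to $1$ is $\le$-irrelevant — it is simply \emph{some} Boolean point agreeing with $x$ on all coordinates where $z$ is stable, with a $1$ in coordinate $i$; but we only know $f = 0$ on the single point $\replace{x}{i}{1}$, not on all such points. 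The resolution: choose the non-unateness witness more carefully — since $f$ is not positive unate in $x_i$, pick $x$ so that additionally $x$ is a prime-implicant-derived point. Concretely, I would start from \emph{any} prime implicant $p^\star$, which (if the claim fails) we may take with $p^\star_i$ stable; if $p^\star_i = 1$ the forward direction of~\eqref{item:f-unate-implicants-2} applied with negative unateness gives a contradiction once we show $f$ must then be negative unate in $x_i$ — tying the two items together. This interdependence is the delicate part, and I would handle it by proving both items \eqref{item:f-unate-implicants-1} and~\eqref{item:f-unate-implicants-2} simultaneously, using the dictionary ``prime implicant has no stable $0$ in coordinate $i$'' $\iff$ ``no prime implicant has $i$-coordinate $0$'', plus the observation that every prime implicant has coordinate $i$ stable or $u$, and if some have it $0$ and others $1$ then $f$ fails to be unate in $x_i$ while if all stable occurrences agree then $f$ is unate in $x_i$ in the corresponding direction.
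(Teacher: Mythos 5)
Your forward direction is correct and is essentially the paper's argument. The genuine gap is in the converse, at exactly the point where you hesitate. The paper's proof is one line: given a witness $\booleanx{z}{n}$ with $f(\replace{z}{i}{0})=1$ and $f(\replace{z}{i}{1})=0$, take \emph{any} prime implicant $\implicant\in\implicants{f}$ with $\implicant\precequ\replace{z}{i}{0}$ (one exists since $\replace{z}{i}{0}\in f^{-1}(1)$). If $(\implicant)_i=u$, then every stable coordinate of $\implicant$ lies off position $i$ and agrees with $\replace{z}{i}{0}$, hence also with $\replace{z}{i}{1}$; so $\replace{z}{i}{1}\in R(\implicant)$, forcing $f(\replace{z}{i}{1})=1$, a contradiction. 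Hence $(\implicant)_i$ is stable and equals $0$. There is no need to control the greedy shrinking, to keep coordinate $i$ stable throughout, to invoke \Cref{prop:implicant-and-neg-implicate}, or to prove the two items simultaneously.

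Where you go wrong is the quantifier. You write that ``we only know $f=0$ on the single point $\replace{x}{i}{1}$, not on all such points'' and conclude the argument does not close. But to rule out $(\implicant)_i=u$ you do not need $\tilde f$ to vanish on a whole set of points; you need to exhibit a \emph{single} resolution of $\implicant$ on which $f=0$, because an implicant must have $f=1$ on \emph{all} of its resolutions. The point $\replace{z}{i}{1}$ is that single resolution. Your attempted repair in the last paragraph does not recover from this: the claim ``if all stable occurrences agree then $f$ is unate in $x_i$ in the corresponding direction'' is precisely the converse direction you are trying to prove, so as written that part of the argument is circular. (Your reduction of item~2 to item~1 by negating the $i$'th input is fine and matches the paper's ``proved analogously.'')
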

\begin{proof}
We prove \eqref{item:f-unate-implicants-1}, as \eqref{item:f-unate-implicants-2} can be proved analogously.

$\Rightarrow$ Assume towards contradiction that there exists $\implicant \in \implicants{f}$ such that $(\implicant)_{i}=0$. Since $f$ is positive unate in $x_{i}$, then for every $z\in R(\implicant)$ it holds that:
\[
    1 = f(z)\le f(\replace{z}{i}{1}).
\]
Hence, $\replace{\implicant}{i}{\u}\in I_{1}^{(f)}$, in contradiction to our assumption that $\implicant$ is a \emph{prime} implicant of $f$.

$\Leftarrow$ Assume towards contradiction $f$ is not positive unate in $x_{i}$. Therefore, there exists $\booleanx{z}{n}$ such that:
\[
    0 = f(\replace{z}{i}{1}) < f(\replace{z}{i}{0}) = 1.
\]
Hence, there exist $\implicant \in \implicants{f}$ and $\implicate \in \implicates{f}$ such that $\implicant \precequ \replace{z}{i}{0}$ and $\implicate \precequ \replace{z}{i}{1}$. We must have $(\implicant)_{i}=0$, otherwise, $(\implicant)_{i}=\u$ and then we get $f(\replace{z}{i}{1}) = f(\replace{z}{i}{0})=1$.
\end{proof}

\begin{restatable}[The prime implicates of a unate Boolean function in $x_i$]{lemma}{funateimplicates}\label{lem:f-unate-implicates}
Let $\booleanf{f}{n}$ be a Boolean function. Then:
\begin{enumerate}
    \item $f$ is \emph{positive} unate in variable $x_{i}$ if and only if there is no $\implicate \in \implicates{f}$ such that $(\implicate)_{i}=1$. \label{item:f-unate-implicates-1}
    \item $f$ is \emph{negative} unate in variable $x_{i}$ if and only if there is no $\implicate \in \implicates{f}$ such that $(\implicate)_{i}=0$.\label{item:f-unate-implicates-2}
\end{enumerate}    
\end{restatable}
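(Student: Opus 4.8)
The plan is to prove the first item in full and to obtain the second by the change of variable $x_i\mapsto\neg x_i$: replacing the input variable $x_i$ by $\neg x_i$ turns $f$ into a function $g$ that is positive unate in $x_i$ exactly when $f$ is negative unate in $x_i$, and whose prime implicates are those of $f$ with the $i$-th coordinate flipped (both facts follow from the definitions, since negating coordinate $i$ induces a bijection on resolutions that commutes with taking the hazard-free extension and with setting coordinate $i$ to $u$). So it suffices to treat item~1, which I would prove by establishing both implications of the biconditional in contrapositive form, around the slogan ``a prime implicate with $i$-th coordinate $1$ is exactly the obstruction to positive unateness in $x_i$''.

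\emph{A prime implicate with $i$-th coordinate $1$ forces $f$ not to be positive unate in $x_i$.} Let $p\in\implicates{f}$ with $p_i=1$. By \Cref{def:prime-implicant}, $\tilde f(\replace{p}{i}{u})=u$, so some $z\in R(\replace{p}{i}{u})$ has $f(z)=1$. Since $p\in I^{(f)}_0=\tilde f^{-1}(0)$, every resolution of $p$ maps to $0$ under $f$, and because $p_i=1$ the resolutions of $p$ are precisely the members of $R(\replace{p}{i}{u})$ with $i$-th coordinate $1$; hence $z_i=0$. Then $\replace{z}{i}{1}\in R(p)$, so $f(\replace{z}{i}{1})=0<1=f(z)=f(\replace{z}{i}{0})$, which contradicts positive unateness in $x_i$ (\Cref{def:unate-function-xi}).

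\emph{If $f$ is not positive unate in $x_i$, such a prime implicate exists.} Fix $\booleanx{x}{n}$ with $f(\replace{x}{i}{0})=1>0=f(\replace{x}{i}{1})$ and set $y:=\replace{x}{i}{1}$. Then $y$ is a Boolean implicate of $f$ (so $\tilde f(y)=f(y)=0$) with $y_i=1$, and $\replace{y}{i}{u}$ has the two resolutions $y$ and $\replace{x}{i}{0}$ carrying distinct $f$-values, so $\tilde f(\replace{y}{i}{u})=u$. Starting from $y$, repeatedly replace a stable coordinate by $u$ whenever the result remains an implicate; the number of $u$'s strictly increases, so the process terminates at a prime implicate $p$ with $p\precequ y$. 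The point is that coordinate $i$ is never chosen: at every stage the current vector $p'$ satisfies $\tilde f(\replace{p'}{i}{u})=u$, so un-stabilizing $i$ would not leave an implicate; and this invariant persists, since enlarging the set of unstable coordinates only enlarges the resolution set, and a function non-constant on $R(\replace{y}{i}{u})$ stays non-constant on every superset. Hence $p_i=y_i=1$, which completes item~1.

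Finally, I would record the slicker route available once \Cref{lem:f-unate-implicants} is in hand: a prime implicate of $f$ with $i$-th coordinate $1$ yields, by the second item of \Cref{prop:implicant-and-neg-implicate} applied with $j=i$, a prime implicant with $i$-th coordinate $0$, contradicting the first item of \Cref{lem:f-unate-implicants}; conversely, the first item of \Cref{prop:implicant-and-neg-implicate} sends a prime implicant with $i$-th coordinate $0$ to a prime implicate with $i$-th coordinate $1$. This gives item~1 immediately, and item~2 follows the same way from the negative-unate parts of the two statements. The one step in the self-contained argument that I would spell out most carefully is the claim that the greedy reduction to a prime implicate never touches coordinate $i$ -- i.e. the monotonicity of the property $\tilde f(\replace{p'}{i}{u})=u$ as further coordinates are un-stabilized -- which is where the proof could otherwise be glossed over.
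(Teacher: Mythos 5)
Your proof is correct and follows essentially the same route as the paper, which proves the companion statement \Cref{lem:f-unate-implicants} in full and simply declares this one ``similar.'' The only substantive difference is in the backward direction: you establish that the prime implicate below $y=\replace{x}{i}{1}$ keeps coordinate $i$ stable by tracking an invariant through the greedy reduction, whereas the paper's template just takes any prime implicate $\implicate\precequ\replace{x}{i}{1}$ and observes that $(\implicate)_{i}=u$ would make $\replace{x}{i}{0}$ a resolution of $\implicate$, forcing $f(\replace{x}{i}{0})=0$, a contradiction --- both arguments are sound, and your closing alternative via \Cref{prop:implicant-and-neg-implicate} is also valid.
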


The proof of \Cref{lem:f-unate-implicates} is similar.

\subsection{Communication complexity}\label{section:communication-complexity}
In this section, we provide some basic notation and definitions from communication complexity. We assume familiarity with the basic concepts.
See \cite{KN97,RY20} for textbooks on communication complexity. 

Let $X,Y,Z$ be a finite sets. For a relation $\mathcal{R} \subseteq X \times Y \times Z$, we denote by $CC(\mathcal{R})$ the minimal communication complexity required to solve $\mathcal{R}$. 
Any protocol $P$ that solves a relation $\mathcal{R} \subseteq X \times Y \times Z$ induces a partition of $X \times Y$ into $\mathcal{R}$-monochromatic rectangles. The number of rectangles, denoted $\monorect(P)$, is the number of the leaves in $P$. We denote by $\monorect({\mathcal{R}})$ the minimum of $\monorect({P})$ among all protocols $P$ that solve $\mathcal{R}$. 
We shall abuse notation and use $CC(M)$ or $\monorect(M)$ instead of $CC({\mathcal{R}})$ or $\monorect({\mathcal{R}})$ when $M$ is a communication matrix of $\mathcal{R}$.

Let $\mathcal{R} \subseteq X \times Y \times Z$ be a relation and let a protocol $P$ for $\mathcal{R}$. Say there exists an input $(x^*, y^*)$ for which there is only one valid output, $z^* \in Z$. Since $P$ solves $\mathcal{R}$,  $(x^*,y^*)$ reaches a leaf in the protocol tree that is labeled with $z^*$. The associated rectangle is referred as a \emph{$z^*$-uniform rectangle}:

\begin{definition}[$z^*$-uniform combinatorial rectangle]\label{def:d-colored-combinatorial-rec}
Let $\mathcal{R}\subseteq X\times Y \times Z$ be a communication problem. Let $\rec$ be a \emph{$\mathcal{R}$-monochromatic rectangle}. We call $\rec$ a \emph{$z^*$-uniform} rectangle if there exists an entry $(x^*,y^*)\in \rec$ such that $\{z\in Z: (x^*,y^*,z)\in \mathcal{R}\}=\{z^*\}$.
\end{definition}

Let $\mathcal{R}\subseteq X\times Y \times Z$ and $\mathcal{R}'\subseteq X'\times Y'\times Z'$ be relations. We say that $\mathcal{R}$ is \emph{reducible} to $\mathcal{R}'$, and denote $\mathcal{R}\le \mathcal{R}'$, if there exist functions $\phi_1:X\rightarrow X'$, $\phi_2:Y\rightarrow Y'$ and $\psi:Z'\rightarrow Z$ such that for every $x\in X,y\in Y$ and $z'\in Z'$:
\[
(\phi_1(x),\phi_2(y),z')\in \mathcal{R}'\Rightarrow(x,y,\psi(z'))\in R.
\]
It is clear that if $\mathcal{R}\le \mathcal{R}'$ then $CC(\mathcal{R})\le CC(\mathcal{R}')$.

We also need the notion of direct sum of two relations.

\begin{definition}[Direct sum of relations {\cite[Definition 2]{KRW95}}]\label{def:direct-sum-relations}
Given two relations $\mathcal{R} \subseteq X \times Y \times Z$ and $\mathcal{R}' \subseteq X' \times Y' \times Z'$, we define the \emph{direct sum} of $\mathcal{R}$ and $\mathcal{R}'$ as:
\[
\mathcal{R} \otimes \mathcal{R}' \subseteq (X \times X') \times (Y \times Y') \times (Z \times Z'),
\]
where $((x_1, x_2), (y_1, y_2), (z_1, z_2)) \in \mathcal{R} \otimes \mathcal{R}'$ if and only if $(x_1, y_1, z_1) \in \mathcal{R}$ and $(x_2, y_2, z_2) \in \mathcal{R}'$.
\end{definition}

It is clear that if $\mathcal{S}\le\mathcal{R}$ and $\mathcal{S}'\le\mathcal{R}'$, then
$\mathcal{S} \otimes \mathcal{S}'\le\mathcal{R}\otimes \mathcal{R}'$ and $CC(\mathcal{S} \otimes \mathcal{S}')\leq CC(\mathcal{R}\otimes \mathcal{R}')$. The rank lower bound \cite[Lemma 1.28]{KN97} also implies that $
CC(\mathcal{R} \otimes \mathcal{R}) \ge \log(\rank(M_{\mathcal{R}}))+\log(\rank(M_{\mathcal{R}'}))$, where $M_{\mathcal{R}}$ and $M_{\mathcal{R}'}$ are the communication matrices of $\mathcal{R}$ and $\mathcal{R}'$, respectively.

\subsubsection{Simplification lemma for communication matrices}\label{subsec:simplification-lem}
We next present a lemma that will be important in our analysis of the communication matrices of the hazard-derivatives. 

Let $Z$ be a finite set. Denote by $\mathbf{P}(Z)$ the power set of $Z$. Let $M\in \mathbf{P}(Z)^{k\times \ell}$ and $M'\in \mathbf{P}(Z)^{k'\times \ell}$ be matrices with entries in $\mathbf{P}(Z)$ such that $k\ge k'$. Additionally, let $X,Y$ be finite sets such that $\mathcal{X}, \mathcal{X}'\subseteq X$ are the row labels and $\mathcal{Y}, \mathcal{Y}'\subseteq Y$ are the column labels of $M$ and $M'$, respectively. Note that the row and column labels of $M$ and $M'$ are allowed to be different.
The matrices $M$ and $M'$ can be naturally associated with communication problems. 

\begin{definition}[A sub-row and a super-row]\label{def:sub-row-super-row}
Let $M$ and $M'$ be matrices as in the above setting. Let $i\in[k]$ and $i'\in[k']$. We say that a row $M'_{i'}$ is a \emph{sub-row} of $M_i$ if the following holds:
\[
\forall j\in [\ell], \quad M'_{i',j}\subseteq M_{i,j}.
\]
We say that $M_i$ is a \emph{super-row} in $M'$ if there exists some $i'\in[k']$ such that $M'_{i'}$ is a sub-row of $M_i$.
\end{definition}

\begin{lemma}[Communication matrix simplification]\label{lem:simplification-lemma}
Let $M$ and $M'$ be matrices as in the above setting. W.l.o.g, assume that the rows in $M$ and the rows in $M'$ are distinct. Moreover, assume:
\begin{enumerate}
    \item For every $i\in[k]$, $M_i$ is a super-row in $M'$. \label{item:1}
    \item For every $i'\in [k']$ there exists $i\in [k]$ such that $M'_{i'}=M_{i}$. \label{item:2}
\end{enumerate}
Then, 
\[
\monorect(M)=\monorect(M'), \quad CC(M)=CC(M').
\]
\end{lemma}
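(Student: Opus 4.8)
The plan is to show that $M$ and $M'$ define communication problems that reduce to one another in both directions; since a reduction never increases $CC$ (as already noted in the preliminaries) nor, as I will check, $\monorect$, this forces both measures to coincide. First I would set up the relations: let $\mathcal{R}$ be the problem in which Alice holds a row label of $M$, Bob a column label, and a legal answer on the cell in row $i$, column $j$ is any element of $M_{i,j}$; define $\mathcal{R}'$ from $M'$ in the same way, noting that $M$ and $M'$ share the same column index set $[\ell]$. Then I would establish $\mathcal{R}'\le\mathcal{R}$: by the hypothesis that every row of $M'$ equals some row of $M$, choose $\tau$ with $M'_{i'}=M_{\tau(i')}$, and reduce via $x'_{i'}\mapsto x_{\tau(i')}$ on Alice's side, the identity on Bob's side, and the identity on outputs — valid because $M'_{i',j}=M_{\tau(i'),j}$ for every $j$. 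Symmetrically I would establish $\mathcal{R}\le\mathcal{R}'$: by the hypothesis that every row $M_i$ is a super-row in $M'$, choose $\sigma$ with $M'_{\sigma(i),j}\subseteq M_{i,j}$ for all $j$, and reduce via $x_i\mapsto x'_{\sigma(i)}$, the identity on Bob's side and on outputs — valid because any $z\in M'_{\sigma(i),j}$ automatically lies in $M_{i,j}$. The map $\sigma$ need not be injective, which is harmless: a single $M'$-row lying below several $M$-rows supplies answers simultaneously legal for all of them.

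Next I would fill the one gap relative to what the preliminaries state, namely that $\mathcal{S}\le\mathcal{T}$ implies $\monorect(\mathcal{S})\le\monorect(\mathcal{T})$ and not merely $CC(\mathcal{S})\le CC(\mathcal{T})$. Given a protocol $P$ for $\mathcal{T}$ with reduction maps $\phi_1,\phi_2,\psi$, build $P'$ for $\mathcal{S}$ in which Alice first replaces her input $x$ by $\phi_1(x)$ and Bob replaces his input $y$ by $\phi_2(y)$, after which they run $P$ verbatim, the final output being $\psi$ applied to $P$'s output. Every node of $P'$ is owned by the same player as in $P$ and sends a message determined by that player's original input and the transcript, so $P'$ is a legal protocol that solves $\mathcal{S}$, and its protocol tree is literally the tree of $P$; hence $\monorect(P')=\monorect(P)$. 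Feeding the two reductions of the first paragraph through this observation (and its $CC$ counterpart) yields $\monorect(M)\le\monorect(M')\le\monorect(M)$ and likewise for $CC$, which is the claim. I would also remark that the distinctness of rows assumed in the lemma plays no role in the argument: all that is used is the existence of $\sigma$ and $\tau$, which is exactly what the two hypotheses provide.

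I do not expect a genuine obstacle here; the reductions are essentially dictated by the two hypotheses, and the only place that wants attention is the bookkeeping in the second paragraph — showing a reduction carries a protocol over leaf-for-leaf, so that the count $\monorect$ (not just the communication cost) is transported — together with keeping straight which hypothesis powers which direction: equality of rows gives the ``row-deletion'' reduction $\mathcal{R}'\le\mathcal{R}$, while the pointwise inclusion $M'_{\sigma(i),j}\subseteq M_{i,j}$ gives the ``row-domination'' reduction $\mathcal{R}\le\mathcal{R}'$.
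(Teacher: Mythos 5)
Your proposal is correct and follows essentially the same route as the paper: the reduction $\mathcal{R}\le\mathcal{R}'$ via the sub-row map $\sigma$ is exactly the paper's protocol translation (Alice rewrites her row label to a sub-row's label, Bob rewrites his column label, and they run the protocol for $M'$ leaf-for-leaf), and the reduction $\mathcal{R}'\le\mathcal{R}$ via $\tau$ is the paper's observation that assumption 2 makes $M'$ a submatrix of $M$. Your explicit check that a reduction transports a protocol with the same tree, hence preserving $\monorect$ and not just $CC$, is precisely the bookkeeping the paper carries out.
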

\begin{proof}
Let $P'$ be a protocol for $M'$. We construct a protocol $P$ for $M$ as follows: let $x\in \mathcal{X}$ be Alice's input and $y\in \mathcal{Y}$ be Bob's input. Say $x$ is the label for the $i$'th row of $M$, and $y$ is a label for the $j$'th column in $M$. From assumption \ref{item:1}, there exists $i'\in[k']$ such that $M'_{i'}$ is a sub-row of $M_{i}$. Say $x' \in \mathcal{X}'$ is the label for the row $M'_{i'}$. Note that Alice can translate $x$ to $x'$ without any knowledge of Bob's input. Bob, independently of Alice's input, translates $y$ to $y'\in \mathcal{Y}'$, the label of the $j$'th column in $M'$. Next, Alice and Bob run $P'(x',y')$ and output the same answer. From \Cref{def:sub-row-super-row} and the definition of a deterministic protocol, we get that $P'(x',y')\in M'_{i',j}\subseteq M_{i,j}$, hence $P$ returns a correct answer. Therefore, $\monorect(M')\ge \monorect(M)$ and $CC(M')\ge CC(M)$.

The claim follows, since by assumption \ref{item:2} we get $M'$ is a submatrix of $M$ and so $\monorect(M)\ge \monorect(M')$ and $CC(M)\ge CC(M')$ is immediate.
\end{proof}

\subsubsection{Karchmer-Wigderson games}\label{subsec:kw-games}
Karchmer and Wigderson \cite{KW90} showed an equivalence between the depth and size of a formula and the cost of an associated communication search problem, both in the general setting and the monotone setting.

\begin{definition}[KW game \cite{KW90}]\label{def:classic-kw-game}
Let $\booleanf{f}{n}$ be a Boolean function. The Karchmer-Wigderson game of $f$, denoted $\kwgame_{f}$, is the following communication problem: Alice receives $\booleanx{x}{n}$ such that $f(x)=1$, Bob receives $\booleanx{y}{n}$ such that $f(y)=0$ and their goal is to determine a coordinate $i\in[n]$ such that $x_{i} \neq y_{i}$.
\end{definition}

Note that for a Boolean function $\booleanf{f}{n}$, the communication problem can be defined as the relation $\kwgame_{f}\subseteq X\times Y\times Z$ where $X=f^{-1}(1), Y=f^{-1}(0)$ and $Z=[n]$. 

\begin{theorem}[Communication vs. formula \cite{KW90}]\label{thm:kw-classic}
Let $\booleanf{f}{n}$ be a Boolean function. Then,
\[
    \text{$\depth_{F}(f)=CC(\kwgame_{f})$ and $\size_{F}(f)=\monorect(\kwgame_{f})$} .
\]
\end{theorem}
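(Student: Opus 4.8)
The plan is to establish each of the two equalities via two matching inequalities: one obtained by simulating a formula by a communication protocol, the other by extracting a formula from a protocol; the same constructions handle depth and size simultaneously. Throughout I would work with De Morgan formulas in negation-normal form — all $\neg$-gates pushed down onto the input variables, double negations cancelled — so that the leaves are literals; this transformation changes neither the leaf count nor the $\wedge/\vee$-depth, and it is under this convention (as in \cite{KW90}) that the depth equality holds on the nose. I may also assume $f$ is non-constant, as otherwise $\kwgame_{f}$ has no valid inputs and both sides are trivial.

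First I would pass from a formula $F$ computing $\booleanf{f}{n}$ to a protocol for $\kwgame_{f}$. On Alice's input $x\in f^{-1}(1)$ and Bob's input $y\in f^{-1}(0)$, the players descend $F$ from the root maintaining the invariant that the current gate $g$ satisfies $g(x)=1$ and $g(y)=0$, which holds at the root since $f(x)=1$ and $f(y)=0$. At an $\vee$-gate $g=g_1\vee g_2$, some child satisfies $g_i(x)=1$; Alice, who knows $x$, names such an $i$ with one bit, and since $g(y)=0$ forces $g_1(y)=g_2(y)=0$ the invariant passes to $g_i$. Dually, at a $\wedge$-gate Bob names a child that is $0$ on $y$. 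On reaching a leaf labelled by a literal $\ell\in\{x_i,\neg x_i\}$, the invariant gives $\ell(x)=1\ne 0=\ell(y)$, hence $x_i\ne y_i$, so the players output $i$. Taking the protocol tree to be $F$ itself — with $\neg$-nodes merged into their leaves and each $\wedge/\vee$-gate owned by the appropriate player — it has $\size(F)$ leaves and depth $\depth_{F}(f)$, so $\monorect(\kwgame_{f})\le\size_{F}(f)$ and $CC(\kwgame_{f})\le\depth_{F}(f)$.

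Conversely, given a protocol $P$ for $\kwgame_{f}$, I would build a formula. After discarding any subtree that no input pair reaches (and contracting the resulting degree-one nodes) — which only decreases leaves and depth — every node $v$ of the protocol tree carries a nonempty combinatorial rectangle $X_v\times Y_v$ of inputs reaching it, with $X_{\mathrm{root}}=f^{-1}(1)$ and $Y_{\mathrm{root}}=f^{-1}(0)$. Proceeding from the leaves upward, I construct formulas $F_v$ over $x_1,\dots,x_n$ that are identically $1$ on $X_v$ and identically $0$ on $Y_v$. For a leaf $v$ labelled $i$, correctness of $P$ forces $x_i\ne y_i$ for all $(x,y)\in X_v\times Y_v$; as both sides are nonempty this pins $x_i$ to a fixed $b\in\{0,1\}$ throughout $X_v$ and to $\neg b$ throughout $Y_v$, so $F_v:=x_i$ if $b=1$ and $F_v:=\neg x_i$ otherwise does the job. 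If $v$ is internal with Alice speaking, $X_v$ splits as $X_v^0\sqcup X_v^1$ while $Y$ is unchanged at both children, and $F_v:=F_{v_0}\vee F_{v_1}$ is $1$ on $X_v$ and $0$ on $Y_v$; if Bob speaks there, $Y_v$ splits and $F_v:=F_{v_0}\wedge F_{v_1}$ works by the dual computation. Then $F_{\mathrm{root}}$ computes $f$ with at most as many leaves as $P$ and depth at most that of $P$; optimizing over $P$ yields $\size_{F}(f)\le\monorect(\kwgame_{f})$ and $\depth_{F}(f)\le CC(\kwgame_{f})$, which together with the previous paragraph gives the two equalities.

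The two constructions are mostly bookkeeping; the one point that needs a moment's care is the base case of the second direction — checking that a KW-monochromatic leaf rectangle, on a non-empty pair of sides, is genuinely uniform in its labelling coordinate (if $x_i$ took both values over $X_v$, no $y\in Y_v$ could differ from both), so that the literal $x_i$ versus $\neg x_i$ is determined — together with keeping the correspondence ``Alice$\leftrightarrow\vee$, Bob$\leftrightarrow\wedge$'' (and its dual for the $0/1$ roles) consistent through the recursion. One should also record that negation-normal form is free in both leaf count and $\wedge/\vee$-depth, so that the bounds transfer to arbitrary De Morgan formulas.
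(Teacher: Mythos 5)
The paper states this theorem as a citation to \cite{KW90} and gives no proof of its own, so there is nothing internal to compare against; your argument is the standard and correct proof of the Karchmer--Wigderson correspondence (formula-to-protocol simulation via the ``Alice owns $\vee$, Bob owns $\wedge$'' invariant, and the reverse extraction of a formula from the rectangles of a protocol), including the right care at the leaf case and the negation-normal-form convention under which the equalities are exact.
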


Similarly, a monotone version of the game is defined, with the distinction that Alice and Bob must output a coordinate where Alice's input has a 1 and Bob's input has a 0. This version of the game captures the complexity of monotone computation:

\begin{definition}[Monotone KW game \cite{KW90}]\label{def:monotone-kw-game}
Let $\booleanf{f}{n}$ be a monotone Boolean function. The Monotone Karchmer-Wigderson game of $f$, denoted $\kwgame_{f}^{+}$, is the following communication problem: Alice receives $\booleanx{x}{n}$ such that $f(x)=1$, Bob receives $\booleanx{y}{n}$ such that $f(y)=0$ and their goal is to determine a coordinate $i\in[n]$ such that $1=x_{i} \neq y_{i}=0$.
\end{definition}

\begin{theorem}[Communication vs. monotone formula \cite{KW90}]\label{thm:kw-monotone}
Let $\booleanf{f}{n}$ be a monotone Boolean function. Then,
\[
    \text{$\depth_{F}^{+}(f)=CC(\kwgame_{f}^{+})$ and $\size_{F}^{+}(f)=\monorect(\kwgame_{f}^{+})$}.
\]
\end{theorem}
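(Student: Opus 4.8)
The plan is to establish the two equalities by proving, in each case, the ``$\ge$'' and the ``$\le$'' inequality separately, following the classical correspondence of Karchmer and Wigderson \cite{KW90}.

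For the inequalities $\depth_{F}^{+}(f)\ge CC(\kwgame_{f}^{+})$ and $\size_{F}^{+}(f)\ge \monorect(\kwgame_{f}^{+})$, I would take an optimal monotone formula $F$ for $f$ and simulate it by a protocol. Alice holds $x$ with $f(x)=1$ and Bob holds $y$ with $f(y)=0$; they maintain the invariant that the current subformula $g$ (starting at the root of $F$) satisfies $g(x)=1$ and $g(y)=0$. At an $\vee$-gate $g=g_{0}\vee g_{1}$, the fact $g(y)=0$ forces $g_{0}(y)=g_{1}(y)=0$, while $g(x)=1$ forces $g_{b}(x)=1$ for some $b$; Alice sends such a $b$ (one bit) and they recurse into $g_{b}$. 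Dually, at an $\wedge$-gate Bob sends a child index $b$ with $g_{b}(y)=0$. Since $F$ is monotone, the leaf eventually reached is a positive variable $x_{i}$, and the invariant gives $x_{i}=1\neq 0=y_{i}$, a legal answer to $\kwgame_{f}^{+}$. The protocol tree is isomorphic to the tree of $F$, so it has $\size(F)$ leaves and communicates at most $\depth(F)$ bits on any branch; taking $F$ optimal gives both inequalities.

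For the reverse inequalities $\depth_{F}^{+}(f)\le CC(\kwgame_{f}^{+})$ and $\size_{F}^{+}(f)\le \monorect(\kwgame_{f}^{+})$, I would take an optimal protocol $P$ for $\kwgame_{f}^{+}$ and read off a monotone formula. Associate to each node $v$ of the protocol tree its combinatorial rectangle $A_{v}\times B_{v}\subseteq f^{-1}(1)\times f^{-1}(0)$, with $A_{\mathrm{root}}=f^{-1}(1)$ and $B_{\mathrm{root}}=f^{-1}(0)$. Build formulas $F_{v}$ bottom-up: at a leaf $v$ with monochromatic color $i$ set $F_{v}:=x_{i}$; at an Alice node splitting $A_{v}=A_{v_{0}}\sqcup A_{v_{1}}$ (same $B_{v}$) set $F_{v}:=F_{v_{0}}\vee F_{v_{1}}$; at a Bob node splitting $B_{v}=B_{v_{0}}\sqcup B_{v_{1}}$ (same $A_{v}$) set $F_{v}:=F_{v_{0}}\wedge F_{v_{1}}$. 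A straightforward induction shows $F_{v}(x)=1$ for every $x\in A_{v}$ and $F_{v}(y)=0$ for every $y\in B_{v}$; the base case uses that a rectangle monochromatic for color $i$ has $x_{i}=1$ throughout $A_{v}$ and $y_{i}=0$ throughout $B_{v}$, which in turn holds because the monotone KW game outputs $i$ only when Alice's $i$-th bit is $1$ and Bob's is $0$. Hence $F_{\mathrm{root}}$ computes $f$; it uses only $\vee$, $\wedge$ and positive variables, so it is monotone; its depth equals the depth of $P$ and its number of leaves equals $\monorect(P)$. Taking $P$ optimal yields both inequalities, and combining the two directions proves the theorem.

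The argument is essentially bookkeeping, so there is no genuinely hard step; the point that needs the most care is the \emph{monotonicity} of the constructed formula, and this is precisely where the monotone flavour of the KW game is used — requiring the players to output a coordinate on which Alice's bit is $1$ and Bob's is $0$ (rather than merely a coordinate on which they differ) is exactly what forces every leaf of the formula to be a positive literal. One must also be mildly careful that the protocol tree is a full binary tree with nonempty live rectangles, so that leaf colors are well defined, and that the depth of a protocol is counted as bits per branch so it matches formula depth exactly rather than up to a constant.
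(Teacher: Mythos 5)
Your proposal is correct and is exactly the classical Karchmer--Wigderson argument; the paper does not reprove this statement but simply cites \cite{KW90}, and your two-directional simulation (formula $\to$ protocol via the invariant $g(x)=1$, $g(y)=0$, and protocol $\to$ formula via rectangles, with OR at Alice's nodes and AND at Bob's) is the canonical proof. The point you highlight---that the monotone game's requirement $x_i=1\neq 0=y_i$ is what makes every leaf a positive literal---is indeed the crux of the monotone version.
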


\section{KW game of the hazard-derivative}\label{sec:kw-games-of-the-derivative}
In this section, we prove new insights on the $\kwgame^{\u}$ communication matrices of the hazard-derivatives, while relying on the framework of hazard-free KW games restricted to the prime implicants and prime implicates (see \Cref{thm:kw-hazard-free-implicants}). Our main observation is that for every $\booleanf{f}{n}$ and $\booleanx{x}{n}$, the communication matrix of $\kwgame^+_{\dfunc{f}{x}}$ is a submatrix of $\matkwu{f}{}$, that captures the communication complexity of a (potentially larger) submatrix of $\matkwu{f}{}$.

We present a simple example which serves as an illustration throughout this section. Consider $\booleanf{\fprime}{4}$, defined such that for every $k\in\mathbb{N}, 0\le k\le 15$, $\fprime(\text{bin}(k))=1$ if and only if $k$ is a prime number. One can calculate the prime implicants and prime implicates of $\fprime$ and sketch its hazard-free KW game communication matrix, as demonstrated in \Cref{subfig:example-f-communication-matrix}.

\begin{figure}[t]
    \centering
        \begin{subfigure}[t]{0.5\textwidth}
        \centering
        \begin{minipage}{\textwidth}
        $$
        \begin{pNiceMatrix}[first-col,first-row,margin]
        & \u\u 00 & \u 1 \u 0 & 1 \u \u 0 & \u 00 \u & 111\u \\
        001\u & \Block[tikz={preaction={fill=color1!20}, pattern=north west lines, pattern color=white}]{2-3}{} 3 & 2 & 1 & \Block[tikz={fill=color2!50}]{1-1}{} 3 & \Block[tikz={fill=color3!50}]{1-1}{} 1,2 \\
        \u 101 & 4 & 4 & 4 & \Block[tikz={preaction={fill=color2!20}, pattern=north west lines, pattern color=white}]{2-1}{} 2 & \Block[tikz={preaction={fill=color3!20}, pattern=north west lines, pattern color=white}]{3-1}{} 3 \\
        \u 011 & \Block[tikz={fill=color1!50}]{3-3}{} 3,4 & 2,4 & 4 & 3 & 2 \\
        0 \u 11 & 3,4 & 4 & 1,4 & \Block[tikz={fill=color2!50}]{2-1}{} 3 & 1 \\
        01 \u 1 & 4 & 4 & 1,4 & 2 & \Block[tikz={fill=color3!50}]{1-1}{} 1 \\
        \end{pNiceMatrix}
        $$
        \end{minipage}
        \caption{The matrix of $\fprime$.}
        \label{subfig:example-f-communication-matrix}
        \end{subfigure}
        \vfill
        \begin{subfigure}[b]{0.32\textwidth}
        \centering
        \begin{minipage}{\textwidth}
        $$
        \begin{pNiceMatrix}[first-col,first-row,margin]
          & \u\u 00 & \u0\u0 & 0\u\u0 \\
          111\u & \Block[tikz={preaction={fill=color1!20}, pattern=north west lines, pattern color=white}]{2-3}{} 3 & 2 & 1 \\
          \u\u\u 1 & 4 & 4 & 4 \\
        \end{pNiceMatrix}
        $$
        \end{minipage}
        \caption{The matrix of $\dfunc{\fprime}{1100}$.}
        \label{subfig:example-f-1100-communication-matrix}
        \end{subfigure}
        \hfill
        \begin{subfigure}[b]{0.32\textwidth}
        \centering
        \begin{minipage}{\textwidth}
        $$
        \begin{pNiceMatrix}[first-col,first-row,margin]
          & \u 00 \u \\
          \u1\u\u & \Block[tikz={preaction={fill=color2!20}, pattern=north west lines, pattern color=white}]{2-1}{} 2 \\
          \u\u1\u & 3 \\
        \end{pNiceMatrix}
        $$
        \end{minipage}
        \caption{The matrix of $\dfunc{\fprime}{1001}$.}
        \label{subfig:example-f-1001-communication-matrix}
        \end{subfigure}
        \hfill
        \begin{subfigure}[b]{0.32\textwidth}
        \centering
        \begin{minipage}{\textwidth}
        $$
        \begin{pNiceMatrix}[first-col,first-row,margin]
          & 000 \u \\
          \u\u1\u & \Block[tikz={preaction={fill=color3!20}, pattern=north west lines, pattern color=white}]{3-1}{} 3 \\
          \u1\u\u & 2 \\
          1\u\u\u & 1 \\
        \end{pNiceMatrix}
        $$
        \end{minipage}
        \caption{The matrix of $\dfunc{\fprime}{1111}$.}
        \label{subfig:example-f-1111-communication-matrix}
        \end{subfigure}
    \caption{Sketch of $\kwgame^u$ communication matrices for $\fprime$.}
    \Description{A sketch of the $\kwgame^u$ communication matrices for $\fprime$ and three of its hazard-derivatives at inputs $1100$, $1001$, and $1111$. It is observed that each communication matrix of a hazard-derivative is a submatrix of the communication matrix of the original function. Furthermore, the rows of the submatrices are unique, while the rows in the communication matrix of $\fprime$ that are not included are super-rows (see \Cref{def:sub-row-super-row}) in the communication matrix of the corresponding hazard-derivative.}
    \label{fig:example-f}
\end{figure}

Let $\booleanf{f}{n}$ be a non-constant Boolean function and let $\booleanx{x}{n}$ be a fixed Boolean input such that $f(x)=b\in\{0,1\}$. We begin by proving the following connection between the prime implicants and prime implicates of $f$ and $\dfunc{f}{x}$:

\begin{restatable}[The prime implicates of the hazard-derivative]{lemma}{lemprimeimplicatesdf}\label{lem:prime-implicates-df}
\[
\implicates{\dfunc{f}{x}} = \{p_{b} + x : p_{b} \in \implic{f}{b}|_x\}.
\]
\end{restatable}

\begin{restatable}[The prime implicants of the hazard-derivative]{lemma}{lemprimeimplicantsdf}\label{lem:prime-implicants-df}
\[
    \implicants{\dfunc{f}{x}} \subseteq \{\replaceall{(p_{\neg b} + x)}{0}{\u} : \text{$p_{\neg b}\in \implic{f}{\neg b}$}\} \subseteq I_1^{(\dfunc{f}{x})}.
\]
\end{restatable}

To demonstrate the claims consider \Cref{subfig:example-f-1100-communication-matrix}. For the Boolean input $1100$, it holds that $\fprime(1100)=0$. It can be verified that $\implicates{\fprime}|_{1100}=\{\u\u00, \u1\u 0, 1\u\u 0\}$. By \Cref{lem:prime-implicates-df}, the prime implicates of $\dfunc{\fprime}{1100}$ are $\{\u\u00+1100, \u1\u 0+1100, 1\u\u 0+1100\}=\{\u\u00, \u0\u 0, 0\u\u 0\}$. This result makes sense: apart from $0010$, even numbers are not prime. To transform $1100$ to $0010$, all three of its most significant bits must be perturbed. Therefore, any perturbation affecting at most two of these three bits preserves the function’s value. Perturbing the least significant bit yields $1101$, the prime number $13$, so it must  remain unchanged. Following \Cref{lem:prime-implicants-df}, calculating $\replaceall{(p_{1}+1100)}{0}{\u}$ for every $p_1\in\implicants{\fprime}$ yields $\{111\u, \u 111, 1\u 11, 1\u\u1, \u\u\u 1\}$. This set does not consist solely of \emph{prime} implicants, as $\u 111, 1\u 11, 1\u\u1$ are already ``covered'' by $\u\u\u 1$. Furthermore, this is a subset of $I_1^{(\dfunc{\fprime}{1100})}$ as all but $111\u$ perturb the least significant bit of $1100$, and $R(110\u)$ contains $1101$. Similarly, $111\u$ perturbs (at least) the three most significant bits, and $R(\u\u\u0)$ contains $0010$, which is the prime number $2$.

Let us discuss the underlying intuition before providing the full proofs. The hazard-derivative is a monotone function and so it is enough to consider the maximal and the minimal resolutions with respect to $\leq$ (see \Cref{subsec:prime-implicants-monotone} for details).

Consider $q_0\in\implicates{\dfunc{f}{x}}$. The $1$s in $\hat{q_0}$ represent all the coordinates in $x$ that \emph{can} be perturbed while the output $\tilde{f}(x+\u\hat{q_0})$ still equals $f(x)=b$. Since $q_0$ is prime, it should not be possible to add more $\u$s to it, and consequently to $x+\u\hat{q_0}$. This is the same as saying that $x+\u\hat{q_0}=p_b \in \implic{f}{b}|_x$. Since $\u\hat{q_0}=q_0$, it follows that $q_0=x+p_b$. The other direction is more intuitive: for $p_b \in \implic{f}{b}|_x$, the $0$'s in $x + p_b$ encode exactly the stable coordinates in $p_b$. This is a minimal subset of coordinates that must \emph{not} be perturbed in $x$ if we wish to get the output $f(x)$.

Consider $q_1\in \implicants{\dfunc{f}{x}}$. The $1$s in $\check{q_1}$ represent all the coordinates in $x$ that \emph{must} be perturbed in order for $\tilde{f}(x+\u\check{q_1})$ to be equal $\u$. Hence, there must be a resolution $z\in R(x+\u\check{q_1})$ for which $f(z)\neq f(x)$. We prove in \Cref{lem:prime-implicants-df-aux} that for every $q_1\in \implicants{\dfunc{f}{x}}$ such $z$ is unique among $R(x+\u\check{q_1})$. Specifically, $z$ is the resolution that differs from $x$ in the the maximal number of coordinates possible, that is, in all coordinates for which $\check{q_1}$ equals $1$. Since $f(z)=\neg b$ there is some $p_{\neg b}\in \implic{f}{\neg b}$ such that $p_{\neg b}\precequ z$. We then show that $q_1=\replaceall{(p_{\neg b}+x)}{0}{\u}$, relying on the fact that $z$ is unique.

\begin{proof}[Proof of \Cref{lem:prime-implicates-df}]
$\subseteq$: Let $q_{0} \in \implicates{\dfunc{f}{x}}$. Let $\hat{q_0}$ be the maximal resolution of $q_0$ as in \Cref{def:max-and-min-res}. By \Cref{lem:derivative-monotone} and \Cref{cor:prime-monotone}, we have that $q_0\in \{0,\u\}^n$ and therefore, $q_0=\u \hat{q_0}$. 
Following \Cref{lem:prime-implicants-implicates-monotone}, $q_{0} \in \implicates{\dfunc{f}{x}}$ if and only if:
\begin{enumerate}
    \item $\df{f}(x;\hat{q_0}))=0$. \label{item:prime-implicates-df-1}
    \item For every coordinate $i\in[n]$ such that $(q_0)_i\neq \u$, it holds that $\df{f}(x;\replace{\hat{q_0}}{i}{1}))=1$. \label{item:prime-implicates-df-2}
\end{enumerate}

Assumption \ref{item:prime-implicates-df-1} implies that $\tilde{f}(x+\u\hat{q_0})=f(x)=b$, therefore, $\tilde{f}(x+q_0)=b$. By assumption \ref{item:prime-implicates-df-2}, for every $i\in [n]$ that is a stable coordinate in $q_0$, we get  $\tilde{f}(x+\u(\replace{\hat{q_0}}{i}{1}))=\tilde{f}(x+\replace{(\u\hat{q_0})}{i}{\u})=\tilde{f}(\replace{(x+\u\hat{q_0})}{i}{\u})=\tilde{f}(\replace{(x+q_0)}{i}{\u})=\u$.
As a coordinate $i$ is stable in $x+q_0$ if and only if it is stable in $q_0$, it follows that $(x+q_0) \in \implic{f}{b}$. 

Denote $p_b:=x+q_0$, therefore $p_b+x=(x+q_0)+x=q_0$ and $p_b \precequ x$ as required.

$\supseteq$: Let $p_{b} \in \implic{f}{b}|_x$. Denote $q:= p_b+x$. As $x\in R(p_b)$, we have for every $i\in [n]$:
\begin{equation}\label{eq:df-prime-implicate}
    (q)_{i}=\begin{cases}
        0, & (p_{b})_{i}\neq \u,\\
        \u, & (p_{b})_{i}=\u.
    \end{cases}
\end{equation}
We show that $q \in P_0^{\dfunc{f}{x}}$ by proving the conditions of \Cref{lem:prime-implicants-implicates-monotone} hold. First, following \eqref{eq:df-prime-implicate} and \Cref{def:max-and-min-res}, we have $x+\u\hat{q}=x+q=p_b$. Since $p_b\in P_b^{(f)}$, we have $\tilde{f}(x+\u\hat{q})=\tilde{f}(p_b)=f(x)=b$ and therefore $\df{f}(x;\hat{q})=0$. 

Next, let $i\in [n]$ be a stable coordinate in $q$. By \eqref{eq:df-prime-implicate}, it is also a stable coordinate in $p_b$. Since $x+\u(\replace{\hat{q}}{i}{1})=\replace{(x+\u\hat{q})}{i}{\u}=\replace{p_b}{i}{\u}$, and $p_b\in P_b^{(f)}$, we have $\tilde{f}(x+\u(\replace{\hat{q}}{i}{1}))=\tilde{f}(\replace{p_b}{i}{\u})=\u$ and therefore $\df{f}(x;\replace{\hat{q}}{i}{1})=1$.  
\end{proof}

Before giving a full proof for \Cref{lem:prime-implicants-df}, we prove the following auxiliary lemma:

\begin{lemma}\label{lem:prime-implicants-df-aux}
Let $\booleanf{f}{n}$ be a Boolean function, and let $\booleanx{x}{n}$ be a fixed Boolean input such that $f(x)=b\in\{0,1\}$. Then, $q_1\in \implicants{\dfunc{f}{x}}$ if and only if $q_1\in \{1,\u\}^n$ and the following $z\in  R(x+\u\check{q_1})$ is the only resolution for which $f(z)=\neg b$:
\[
    z\quad :=\quad \forall i\in [n], \quad (z)_{i}=\begin{cases}
        x_{i}, & (\check{q_1})_{i} = 0, \\
        \neg x_{i}, & (\check{q_1})_{i} = 1.
    \end{cases}
\]
\end{lemma}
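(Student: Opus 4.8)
\textbf{Proof plan for Lemma~\ref{lem:prime-implicants-df-aux}.}
The plan is to mimic the structure of the proof of Lemma~\ref{lem:prime-implicates-df}, but working with prime implicants (the ``$1$-side'') instead of prime implicates, and using the minimal resolution $\check{q_1}$ in place of the maximal resolution. First I would note that by Lemma~\ref{lem:derivative-monotone} the function $\dfunc{f}{x}$ is monotone, so by Corollary~\ref{cor:prime-monotone}\eqref{item:prime-monotone-1} any $q_1 \in \implicants{\dfunc{f}{x}}$ automatically lies in $\{1,u\}^n$; this disposes of one half of the claimed equivalence and lets me restrict attention to vectors $q_1 \in \{1,u\}^n$, for which $q_1 = 1\check{q_1}$ (i.e. $q_1$ is obtained from $\check{q_1}$ by turning its $0$-coordinates into $u$). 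Then I would apply Lemma~\ref{lem:prime-implicants-implicates-monotone}\eqref{item:prime-implicants-implicates-monotone-1} to $\dfunc{f}{x}$: with $S := \{i : (\check{q_1})_i \neq u\} = [n]$ (since $q_1 \in \{1,u\}^n$ means $\check{q_1} \in \{0,1\}^n$ is already stable), the vector $q_1$ is a prime implicant of $\dfunc{f}{x}$ if and only if
\begin{enumerate}
    \item $\df{f}(x;\check{q_1}) = 1$, and
    \item for every $i \in [n]$ with $(q_1)_i \neq u$, it holds that $\df{f}(x;\replace{\check{q_1}}{i}{0}) = 0$.
\end{enumerate}

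Next I would unwind these two conditions using Definition~\ref{def:hazard-derivative}. Condition (1) says $\tilde{f}(x + u\check{q_1}) = u$, i.e. $f$ is not constant on $R(x + u\check{q_1})$; since $f(x) = b$ and $x \in R(x+u\check{q_1})$, this is equivalent to the existence of some resolution in $R(x+u\check{q_1})$ on which $f$ equals $\neg b$. Condition (2) says that for every $i$ that is stable in $q_1$ (equivalently, $(\check{q_1})_i = 1$, since coordinates with $(\check{q_1})_i = 0$ correspond to $(q_1)_i = u$ --- wait, I must be careful here: $(q_1)_i \neq u$ exactly when $(q_1)_i = 1$, which by $q_1 = 1\check{q_1}$ means $(\check{q_1})_i = 1$), we have $\tilde{f}\bigl(x + u(\replace{\check{q_1}}{i}{0})\bigr) = f(x) = b$. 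Computing, $x + u(\replace{\check{q_1}}{i}{0}) = \replace{(x+u\check{q_1})}{i}{x_i}$, so condition (2) states that fixing coordinate $i$ back to $x_i$ in $x + u\check{q_1}$ makes $f$ constantly equal to $b$ on all resolutions. The point of the explicitly constructed $z$ is that it is the ``extreme'' resolution of $x + u\check{q_1}$: on the coordinates where $\check{q_1}$ is $1$ (the perturbed coordinates) it flips $x_i$ to $\neg x_i$, and everywhere else it agrees with $x$. I would then argue that conditions (1)+(2) together are equivalent to ``$z$ is the \emph{unique} resolution in $R(x+u\check{q_1})$ with $f(z) = \neg b$'': condition (2) forces that any resolution $y \in R(x+u\check{q_1})$ with $f(y) = \neg b$ must differ from $x$ in every perturbed coordinate (if it agreed with $x$ on some perturbed coordinate $i$, then $y$ would be a resolution of $\replace{(x+u\check{q_1})}{i}{x_i}$ witnessing $f = \neg b$ there, contradicting (2)) --- hence the only candidate is $z$ itself; and condition (1) guarantees that such a resolution exists, so it must be $z$.

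For the converse direction I would simply reverse this reasoning: if $q_1 \in \{1,u\}^n$ and $z$ is the unique resolution of $x + u\check{q_1}$ with $f(z) = \neg b$, then existence of $z$ gives condition (1) (since $f(x) = b$ and $f(z) = \neg b$ with both in $R(x+u\check{q_1})$ forces $\tilde{f}(x+u\check{q_1}) = u$), and uniqueness gives condition (2): for any perturbed coordinate $i$, every resolution of $\replace{(x+u\check{q_1})}{i}{x_i}$ agrees with $x$ at coordinate $i$, hence is $\neq z$, hence has $f$-value $b$, so $\tilde{f}(\replace{(x+u\check{q_1})}{i}{x_i}) = b$ and $\df{f}(x;\replace{\check{q_1}}{i}{0}) = 0$. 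Applying Lemma~\ref{lem:prime-implicants-implicates-monotone}\eqref{item:prime-implicants-implicates-monotone-1} again then yields $q_1 \in \implicants{\dfunc{f}{x}}$. The main thing to get right --- and the only place requiring genuine care rather than bookkeeping --- is the translation of the hazard-derivative conditions into statements about resolutions, in particular the identity $x + u(\replace{\check{q_1}}{i}{0}) = \replace{(x+u\check{q_1})}{i}{x_i}$ and the claim that the constructed $z$ is the unique ``maximally-flipped'' resolution; once those are in hand the equivalence is a direct rewrite. I expect no serious obstacle beyond keeping the roles of $0$/$u$ in $\check{q_1}$ versus $q_1$ straight.
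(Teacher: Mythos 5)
Your proposal is correct and follows essentially the same route as the paper's proof: both directions reduce to the characterization of prime implicants of the monotone function $\dfunc{f}{x}$ via Lemma~\ref{lem:prime-implicants-implicates-monotone}, translate the two resulting conditions into statements about $\tilde{f}$ on $x+u\check{q_1}$ and on $x+u(\replace{\check{q_1}}{i}{0})=\replace{(x+u\check{q_1})}{i}{x_i}$, and identify $z$ as the unique maximally-flipped resolution. The only blemish is your parenthetical $S:=\{i:(\check{q_1})_i\neq u\}=[n]$, which misreads the lemma ($S$ is the set of stable coordinates of $q_1$, not of $\check{q_1}$), but since you immediately state and use the correct condition ``for every $i$ with $(q_1)_i\neq u$'', this does not affect the argument.
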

\begin{proof}
$\Rightarrow$ By \Cref{lem:prime-implicants-implicates-monotone} it holds that: 
\begin{enumerate}
    \item $\tilde{f}(x+\u\check{q_1})=\u$. \label{item:prime-implicants-df-aux-1}
    \item For every $i\in [n]$ such that $(q_1)_i\neq \u$, $\tilde{f}(x+\u(\replace{\check{q_1}}{i}{0}))=f(x)=b$. \label{item:prime-implicants-df-aux-2}
\end{enumerate}
Assume towards contradiction there exists $z'\in R(x+\u\check{q_1})$ such that $z'\neq z$ and $f(z')=\neg b$. There exists $j\in [n]$ such that $z_j\neq (z')_j$. For every $i\in [n]$, when $(\check{q_1})_i=0$ we have that $z_i=(z')_{i}=x_i$. Hence, $(\check{q_1})_j=1$, $(z')_{j}=x_{j}$, $z_j=\neg x_j$ and $(q_1)_j=1$. Therefore, $z'\in R(x+\u(\replace{\check{q_1}}{j}{0}))$ and by assumption \ref{item:prime-implicants-df-aux-2}, $f(z')=b$ in contradiction. As $\tilde{f}(x+\u\check{q_1})=\u$, it must be the case that  $f(z)=\neg b$. \Cref{cor:prime-monotone} completes the proof. 

$\Leftarrow$ Since $x,z\in R(x+\u\check{q_1})$, it holds that $\tilde{f}(x+\u\check{q_1})=\u$, implying $\df{f}(x;\check{q_1})=1$. Let $i\in [n]$ such that $(q_1)_i\neq \u$, so $(q_1)_i=1$ and $(\check{q_1})_i=1$. Since $R(x+\u(\replace{\check{q_1}}{i}{0}))\subsetneq R(x+\u\check{q_1})$ and $z\not\in R(x+\u(\replace{\check{q_1}}{i}{0}))$ we get from our assumption that $\tilde{f}(x+\u(\replace{\check{q_1}}{i}{0}))=f(x)=b$. Hence, $\df{f}(x;\replace{\check{q_1}}{i}{0})=0$. Following \Cref{lem:prime-implicants-implicates-monotone}, we get $q_1\in \implicants{\dfunc{f}{x}}$.
\end{proof}

\begin{proof}[Proof of \Cref{lem:prime-implicants-df}]
Let $q_{1} \in \implicants{\dfunc{f}{x}}$. Following \Cref{lem:prime-implicants-df-aux}, we have $q_1\in\{1,\u\}^n$ and $z\in R(x+\u\check{q_1})$ such that $f(z)=\neg b$ as described. Let $p_{\neg b} \in \implic{f}{\neg b}$ such that $p_{\neg b} \precequ z$.
By \Cref{lem:prime-implicants-df-aux}, for $i\in [n]$ we have that:
\[
    (p_{\neg b}+x)_{i}=\begin{cases}
        \u, & (p_{\neg b})_{i}=\u,\\
        z_{i} + x_{i}=x_i+x_i=0, & (p_{\neg b})_{i} \neq \u, (\check{q_1})_{i}=0, \\
        z_{i} + x_{i}=\neg x_{i} + x_{i}=1, & (p_{\neg b})_{i} \neq \u, (\check{q_1})_{i}=1.
    \end{cases}
\]
This implies that if $(p_{\neg b}+x)_{i}=1$ then $(\check{q_1})_{i}=1$. We next show that this is an if and only if statement: Assume towards contradiction that there exists $i\in[n]$ such that $(p_{\neg b}+x)_{i}=\u$ and $(\check{q_1})_{i}=1$. Then $z'=\replace{z}{i}{x_{i}}$ satisfies $p_{\neg b} \precequ z'$ so $f(z')=\neg b$. As $(\check{q_1})_i=1$ we have $z_i=\neg x_i$ and so $(z')_i\neq z_i$. Since $z'\in R(x+\u\check{q_1})$ and $z'\neq z$ we get a contradiction to \Cref{lem:prime-implicants-df-aux}.

Consequently, $(p_{\neg b}+x)_{i}=1 \iff (\check{q_1})_{i}=1 \iff (q_{1})_{i}=1$. Since $q_{1}\in\{1,\u\}^n$, we get $\replaceall{(p_{\neg b} + x)}{0}{\u}=q_{1}$. This holds for every $q_{1} \in \implicants{\dfunc{f}{x}}$, so we get the first containment.

Let $p_{\neg b}\in \implic{f}{\neg b}$ and denote $p:=\replaceall{(p_{\neg b}+x)}{0}{\u}$. We prove that $p\in I_{1}^{(\dfunc{f}{x})}$. As $\dfunc{f}{x}$ is monotone, following \Cref{lem:implicants-implicates-monotone} it is enough to show that $\check{p}$ satisfies $\df{f}(x;\check{p})=1$. Therefore, we prove that $\tilde{f}(x+\u\check{p})=\u$. It holds that:
\[
\forall i\in [n], \quad (x+\u\check{p})_i=\begin{cases}
        \u, & (p_{\neg b})_{i}=\neg x_i, \\
        x_i, & \text{$(p_{\neg b})_{i}=x_i$ or $(p_{\neg b})_{i}=\u$}.
    \end{cases}
\]
Let $\booleanx{\tilde{z}}{n}$ such that:
\[
\forall i\in [n], \quad \tilde{z}_i:=\begin{cases}
        \neg x_i, & (p_{\neg b})_{i}=\neg x_i, \\
        x_i, & \text{$(p_{\neg b})_{i}=x_i$ or $(p_{\neg b})_{i}=\u$}.
    \end{cases}
\]
Clearly, $\tilde{z}\in R(p_{\neg b})$, therefore $f(x')=\neg b$. Moreover, $\tilde{z}, x\in R(x+\u\check{p})$ and so $\tilde{f}(x+\u\check{p})=\u$.
\end{proof}

\begin{remark}
Note that unlike \Cref{lem:prime-implicates-df}, in \Cref{lem:prime-implicants-df} we do not get a complete characterization of the prime implicants of the hazard-derivative. There may be $p_{\neg b}\in\implic{f}{\neg b}$ such that for $p:=\replaceall{(p_{\neg b}+x)}{0}{\u}$ there are multiple $z\in R(x+\u\check{p})$ such that $f(z)\neq f(x)$. In this case, we prove that $p$ is an implicant of $\dfunc{f}{x}$ (but not prime). See \Cref{ex:prime-implicants-df}.
\end{remark}

\begin{example}[In \Cref{lem:prime-implicants-df} equality does not necessarily hold]\label{ex:prime-implicants-df} 
We present a simple example where $\implicants{\dfunc{f}{x}}\neq \{\replaceall{(p_{\neg b} + x)}{0}{\u} : \text{$p_{\neg b}\in \implic{f}{\neg b}$}\}$.  
Consider $\booleanf{f}{3}$ such that $f^{-1}(1)=\{000,001,011,111\}$. It is not hard to see that $\implicants{f}=\{\u 11,0 \u 1,00 \u\}$. When setting $x:=110$, we get $b:=f(x)=0$ and:
\[
\{\replaceall{(p_{\neg b} + x)}{0}{\u} : \text{$p_{\neg b}\in \implic{f}{\neg b}$}\} =\{\u\u 1,1 \u 1,11 \u\}.
\]
Clearly, $1 \u 1$ is not a prime implicant of $\dfunc{f}{x}$, since $\u\u 1$ is also an implicant. Moreover, both $001,011\in R(\u\u 1)$. Therefore, $\{\u\u 1,1 \u 1,11 \u\}\neq \implicants{\dfunc{f}{x}}$.
\end{example}

We showed how to infer $\implicants{\dfunc{f}{x}}$ and $\implicates{\dfunc{f}{x}}$ from $\implicants{f}$ and $\implicates{f}$. In fact, every $q_1\in \implicants{\dfunc{f}{x}}$ is produced by one (or more) $p_{\neg b}\in \implic{f}{\neg b}$, and every $q_0\in \implicates{\dfunc{f}{x}}$ is produced by exactly one $p_b\in P_b^{(f)}$. 
This translation preserves the set of stable and different coordinates. As a result, these sets are the equal for $p_{\neg b},p_b$ and $q_1,q_0$.

\begin{restatable}{lemma}{lemmdfsubmatrixmf}\label{lem:mdf-submatrix-mf} 
Let $\booleanx{x}{n}$ such that $f(x)=b\in\{0,1\}$. Let $q_1\in\implicants{\dfunc{f}{x}}$ and $q_0\in\implicates{\dfunc{f}{x}}$. According to \Cref{lem:prime-implicants-df} and \Cref{lem:prime-implicates-df} there exist $p_b\in\implic{f}{b}|_x$ and $p_{\neg b} \in \implic{f}{\neg b}$ such that $q_1=\replaceall{(p_{\neg b}+x)}{0}{\u}$ and $q_0=p_b+x$.
Then, for every $i\in [n]$ the following holds:
\[
(q_1+q_0)_i=1 \iff (p_{\neg b}+p_b)_i=1.
\]
\end{restatable}
\begin{proof}
$(q_1+q_0)_i=1$ iff\footnote{By \Cref{cor:prime-monotone}, $\dfunc{f}{x}$ is a monotone function.} $(q_{1})_{i}=1, (q_{0})_{i}=0$ iff $(p_{\neg b})_{i}= \neg x_{i}, (p_{b})_{i}=x_{i}$ iff\footnote{By our assumptions we have that $p_{b} \precequ x$.} $(p_{\neg b})_{i}, (p_{b})_{i} \neq \u$ and $(p_{b})_{i} \neq (p_{\neg b})_{i}$ iff $(p_{\neg b}+p_b)_i=1$.
\end{proof}

\begin{figure}[t]
        \centering
        \begin{tikzpicture}[overlay, decoration={brace,amplitude=10pt,mirror}]
            \draw[decorate] (0.1,-2.15) -- node[right=10pt, font=\large] {$M_x$} (0.1,-0.05);
        \end{tikzpicture}
        \begin{tikzpicture}[overlay]
            \node[font=\small] at (-2.30,-1.1) {$\implic{f}{\neg b}$};
            \node[font=\tiny] at (-0.9,0.2) {$P_b^{(f)}|_x$};
            \node[font=\tiny] at (0.8,0.2) {$P_b^{(f)}\setminus P_b^{(f)}|_x$};
        \end{tikzpicture}
        \begin{minipage}{\textwidth}
        $$
        \begin{pNiceArray}[margin]{ccccc|ccccc}
          \Block[tikz={preaction={fill=color1!20}, pattern=north west lines, pattern color=white}]{2-5}<\small>{\matkwu{\dfunc{f}{x}}{}} & & & & & \\
          & & & & & & & & & \\
          \Block[tikz={fill=color1!50}]{3-5}{} & & & & & & \\
          & & & & & & & & & \\
          & & & & & & & & & \\
        \end{pNiceArray}
        $$
        \end{minipage}
\caption{$\matkwu{f}{}$, $M_x$ and $\matkwu{\dfunc{f}{x}}{}$.}
\label{subfig:matrics-demonstrations-a}
\Description{A sketch of $\matkwu{f}{}$ and the placement of $M_x$ and $\matkwu{\dfunc{f}{x}}{}$: $\matkwu{\dfunc{f}{x}}{}$ is a submatrix of $M_x$, which in turn is a submatrix of $\matkwu{f}{}$.}
\end{figure}
A concrete example of \Cref{lem:mdf-submatrix-mf} is shown in Figures~\ref{subfig:example-f-communication-matrix} and~\ref{subfig:example-f-1100-communication-matrix}, where the same colors as in \Cref{subfig:matrics-demonstrations-a} are used for $M_{1100}$ and $\matkwu{\dfunc{\fprime}{1100}}{}$. It can be verified that $\u1\u0 \in \implicates{\fprime}$ yields $\u0\u0\in \implicates{\dfunc{\fprime}{1100}}$ and $001\u \in \implicants{\fprime}$ yields $111\u\in \implicants{\dfunc{\fprime}{1100}}$ with the corresponding entries both equal to $\{2\}$.

Let $\matkwu{f}{}$ and $\matkwu{\dfunc{f}{x}}{}$ denote the communication matrix of the hazard-free KW game of $f$ and $\dfunc{f}{x}$, respectively. Let $M_x$ be the submatrix of $\matkwu{f}{}$ with labels $\implic{f}{\neg b} \times P_b^{(f)}|_x$. We prove the following, as illustrated in \Cref{subfig:matrics-demonstrations-a}:
\begin{restatable}{proposition}{propmdfsubmatrixmf}\label{prop:mdf-submatrix-mf}
Let $\booleanf{f}{n}$ be a non-constant Boolean function. For every $\booleanx{x}{n}$, $\matkwu{\dfunc{f}{x}}{}$ is a submatrix of $\matkwu{f}{}$.
\end{restatable}
\begin{proof}
Fix $\booleanx{x}{n}$ such that $f(x)=b \in \{0,1\}$. W.l.o.g, assume $b=0$. Denote the rows labels of $\matkwu{\dfunc{f}{x}}{}$ by $\implicants{\dfunc{f}{x}}=\{q_{1}^{(1)}, \dots, q_{1}^{(k)}\}$ and the column labels by $\implicates{\dfunc{f}{x}}=\{q_{0}^{(1)}, \dots, q_{0}^{(\ell)}\}$. Similarly, denote the row labels of $\matkwu{f}{}$ by $\implic{f}{\neg b}=\{p_{\neg b}^{(1)}, \dots, p_{\neg b}^{(K)}\}$ and the column labels by $\implic{f}{b}=\{p_{b}^{(1)}, \dots, p_{b}^{(L)}\}$.

To prove the claim, we show that for every $i \in [k]$ there exists $\pi(i)\in [K]$ and for every $j\in [\ell]$ there exists $\sigma(j) \in [L]$ so that $(\matkwu{\dfunc{f}{x}}{})_{i,j}=(\matkwu{f}{})_{\pi(i), \sigma(j)}$. In other words, if we only consider the set $I=\{\pi(i):i\in [k]\}$ and the set $J=\{\sigma(j):j\in [\ell]\}$ then we get that, up to reordering,  $\matkwu{\dfunc{f}{x}}{}={\matkwu{f}{}}|_{I\times J}$.

Let $i \in [k]$. \Cref{lem:prime-implicants-df} implies that there exists $\pi(i)\in [K]$ such that $q_{1}^{(i)}=\replaceall{(p_{\neg b}^{(\pi(i))}+x)}{0}{\u}$. Similarly, let $j\in [\ell]$, then by \Cref{lem:prime-implicates-df}, there exists $\sigma(j)\in [L]$ such that $q_{0}^{(j)}=p_{b}^{(\sigma(j))}+x$. By \Cref{lem:mdf-submatrix-mf}, we have
$(\matkwu{\dfunc{f}{x}}{})_{i,j}=(\matkwu{f}{})_{\pi(i), \sigma(j)}$, which implies the claim.

Note that $\pi$ must be injective, since if $q_1^{(i)}\neq q_1^{(i')}$ then $p_{\neg b}^{(\pi(i))}\neq p_{\neg b}^{(\pi(i'))}$, otherwise, $q_1^{(i)}$ and $q_1^{(i')}$ would have been equal. Similarly, $\sigma$ must be injective.
\end{proof}

\begin{remark}\label{rem:mdf-submatrix-mf}
Observe that in the proof of \Cref{prop:mdf-submatrix-mf}, for every $j\in[\ell]$ it holds that $p_{b}^{(\sigma(j))}\in \implic{f}{b}|_x$. Thus, $\matkwu{\dfunc{f}{x}}{}$ is actually a submatrix of $M_x$.
\end{remark}

\Cref{prop:mdf-submatrix-mf} can be viewed as an alternative proof for \Cref{thm:derivative-lower-bound} for formulas:

\begin{corollary}\label{cor:derivative-lower-bound}
Let $\booleanf{f}{n}$ be a non-constant Boolean function and let $\booleanx{x}{n}$ be a fixed Boolean input. It holds that:
\[
\size_{F}^{+}(\dfunc{f}{x})\le \size_{F}^{\u}(f).
\]
\end{corollary}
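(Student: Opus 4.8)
The plan is to reduce both sides of the inequality to leaf counts of optimal protocols for the associated hazard-free Karchmer--Wigderson games, and then read off the conclusion from \Cref{prop:mdf-submatrix-mf}. For the right-hand side, \Cref{thm:kw-hazard-free} (with \Cref{thm:kw-hazard-free-implicants}, which lets us play on prime implicants and implicates) gives $\size_{F}^{u}(f)=\monorect(\matkwu{f}{})$. For the left-hand side, observe first that $\dfunc{f}{x}$ is a monotone Boolean function by \Cref{lem:derivative-monotone}; hence \Cref{thm:kw-monotone} gives $\size_{F}^{+}(\dfunc{f}{x})=\monorect(\kwgame_{\dfunc{f}{x}}^{+})$, and \Cref{thm:kw-hazard-free-monotone} (which states that $\kwgame_{\dfunc{f}{x}}^{u}$ and $\kwgame_{\dfunc{f}{x}}^{+}$ are equivalent, and reductions in both directions preserve $\monorect$) identifies this with $\monorect(\matkwu{\dfunc{f}{x}}{})$. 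So it remains only to prove $\monorect(\matkwu{\dfunc{f}{x}}{})\le \monorect(\matkwu{f}{})$.

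This is exactly what \Cref{prop:mdf-submatrix-mf} delivers, once we note the elementary fact that passing to a submatrix cannot increase $\monorect$. By \Cref{prop:mdf-submatrix-mf}, $\matkwu{\dfunc{f}{x}}{}$ is, up to reordering rows and columns, a submatrix of $\matkwu{f}{}$ on some row set $I$ and column set $J$. Take an optimal protocol $P$ for $\matkwu{f}{}$, inducing a partition of the full index set into $\monorect(\matkwu{f}{})$ monochromatic combinatorial rectangles. Intersecting each such rectangle $\rec = A\times B$ with $I\times J$ yields $(A\cap I)\times(B\cap J)$, again a combinatorial rectangle; since the entries of the submatrix are literally those of $\matkwu{f}{}$ at the corresponding positions, each restricted rectangle is still monochromatic with a valid common answer, and these restricted rectangles partition $I\times J$. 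Discarding the empty ones leaves at most $\monorect(\matkwu{f}{})$ rectangles, so restricting $P$'s decisions to inputs labelling $I$ and $J$ yields a protocol for $\matkwu{\dfunc{f}{x}}{}$ with at most $\monorect(\matkwu{f}{})$ leaves, i.e.\ $\monorect(\matkwu{\dfunc{f}{x}}{})\le \monorect(\matkwu{f}{})$.

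Chaining the three identities with this inequality gives $\size_{F}^{+}(\dfunc{f}{x})=\monorect(\matkwu{\dfunc{f}{x}}{})\le \monorect(\matkwu{f}{})=\size_{F}^{u}(f)$, which is the claim; in particular this re-establishes the formula half of \Cref{thm:derivative-lower-bound} purely through the KW-game framework, without the circuit-manipulation argument of \cite{IKL+19}. There is no genuine obstacle remaining: the substantive work was already carried out in \Cref{prop:mdf-submatrix-mf}, and the only auxiliary ingredient—monotonicity of $\monorect$ under restriction to a submatrix—is immediate from the definition of a protocol-induced rectangle partition.
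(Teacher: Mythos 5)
Your proof is correct and takes essentially the same route as the paper: both reduce the two sizes to leaf counts of the corresponding KW games via \Cref{thm:kw-hazard-free}, \Cref{thm:kw-hazard-free-monotone} and \Cref{thm:kw-monotone}, and then use \Cref{prop:mdf-submatrix-mf} together with the observation that restricting a protocol to a submatrix cannot increase the number of (non-empty) leaves. Your explicit remark that the restricted partition is realized by simply running the original protocol on the restricted inputs correctly handles the subtlety that not every rectangle partition is protocol-induced, which the paper treats only in a footnote.
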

\begin{proof}
Fix $\booleanx{x}{n}$. Say $P$ is a protocol that solves $\kwgame_{f}^{\u}$. 
We get that $P$ induces a partition of $\matkwu{f}{}$ to $\kwgame_{f}^{\u}$--monochromatic rectangles. We proved in \Cref{prop:mdf-submatrix-mf} that $\matkwu{\dfunc{f}{x}}{}$ is a submatrix of $\matkwu{f}{}$, and so the partition that $P$ induces on $\matkwu{f}{}$ is also a valid\footnote{A partition that can be induced by a protocol.} partition of $\matkwu{\dfunc{f}{x}}{}$. Note that $\dfunc{f}{x}$ is a monotone Boolean function according to \Cref{lem:derivative-monotone}. The claim follows since:
\begin{align}\label{eq:relating-size}
  \size_{F}^{\u}(f) & \stackrel{\Cref{thm:kw-hazard-free}}{=}\monorect(\kwgame_{f}^{\u}) \geq \monorect(\kwgame_{\dfunc{f}{x}}^{\u}) \\
   & \stackrel{\Cref{thm:kw-hazard-free-monotone}}{=} \monorect(\kwgame_{\dfunc{f}{x}}^{+})\stackrel{\Cref{thm:kw-monotone}}{=}\size_{F}^{+}(\dfunc{f}{x}).\qedhere
\end{align}
\end{proof}

Next, we make the simple but important observation. Certain rows, which we refer to as ``super-rows'', can be removed from a communication matrix without effecting its complexity. A \emph{super-row} is a row $r$ in the matrix such that there exists another row $r'$, called a \emph{sub-row}, whose every entry is a subset of the matching entry in $r$ (see \Cref{subsec:simplification-lem} for details). 

It is not too difficult to see that the rows in $M_x$ that are associated with $p_{\neg b}\in \implic{f}{\neg b}$ such that $\replaceall{(p_{\neg b}+x)}{0}{\u}\in I_1^{(\dfunc{f}{x})}\setminus \implicants{\dfunc{f}{x}}$ are super-rows in $M_x$. Implicants that are not prime, by definition, contain ``extra'' stable bits that can be replaced with $\u$ while preserving the output of the function. Therefore, these rows contain possibly more valid answers to the $\kwgame^\u$ on $\dfunc{f}{x}$ than the ones associated with $\implic{\dfunc{f}{x}}{1}$. By \Cref{lem:mdf-submatrix-mf}, we conclude that these rows are equal to those of $M_x$, and thus we are done. Figures~\ref{subfig:example-f-communication-matrix} and~\ref{subfig:example-f-1100-communication-matrix} demonstrate this claim, as the super-rows in~\ref{subfig:example-f-communication-matrix} correspond to $\{\u 011, 0\u11,01\u1\}$, which yielded $\{\u 111, 1\u 11, 1\u\u1\}$ in our previous calculation.

Following our observation, the super-rows can be removed without changing the complexity of $M_x$. Most importantly, removing them results in $\matkwu{\dfunc{f}{x}}{}$. Hence, $M_x$ and $\matkwu{\dfunc{f}{x}}{}$ are equivalent in terms of communication complexity. This highlights the importance of the hazard-derivative, which captures the communication complexity of $M_x$, which is a potentially larger submatrix of $\matkwu{}{}$.

\begin{restatable}{proposition}{propsubmatrixequivderivative}\label{prop:submatrix-of-derive-p-equiv-derivative}
Let $\booleanf{f}{n}$ and $\booleanx{x}{n}$ such that $f(x)=b\in\{0,1\}$. We denote by $M_x$ the submatrix of $\matkwu{f}{}$ with labels $\implic{f}{\neg b}\times \implic{f}{b}|_x$. The following holds:
\[
\monorect(M_x)=\monorect(\matkwu{\dfunc{f}{x}}{}), \quad
CC(M_x)=CC(\matkwu{\dfunc{f}{x}}{}).
\]
\end{restatable}
\begin{proof}
W.l.o.g, assume $b=0$. Following \Cref{lem:prime-implicates-df}, we denote the column labels of $M_x$ by $\implic{f}{b}|_{x}:=\{p_b^{(1)},\dots,p_b^{(\ell)}\}$ and the column labels of $\matkwu{\dfunc{f}{x}}{}$ by $\implicates{\dfunc{f}{x}}:=\{p_b^{(1)}+x,\dots,p_b^{(\ell)}+x\}$.
We prove that the assumptions of \Cref{lem:simplification-lemma} hold for $M:=M_x$ ,$M':=\matkwu{\dfunc{f}{x}}{}$, $k:=|\implic{f}{\neg b}|, k':=|\implicants{\dfunc{f}{x}}|, \ell:=|\implic{f}{b}|_x|$ and the $\kwgame^\u_f$ relation.
By \Cref{rem:mdf-submatrix-mf}, we get that $\matkwu{\dfunc{f}{x}}{}$ is a submatrix of $M_x$, so assumption \ref{item:2} of \Cref{lem:simplification-lemma} holds. 

We now prove that for every $i\in [k]$, $(M_x)_i$ is a super-row in $\matkwu{\dfunc{f}{x}}{}$.
Let $i\in[k]$ and let $p_{\neg b} \in 
\implic{f}{\neg b}$ be the associated row label in $M_x$. Following \Cref{lem:prime-implicants-df} we have $\replaceall{(p_{\neg b}+x)}{0}{\u}\in I_1^{(\dfunc{f}{x})}$, so there exists $q_1\in\implicants{\dfunc{f}{x}}$ such that $q_1\precequ \replaceall{(p_{\neg b}+x)}{0}{\u}$. We prove that the row associated with $q_1$ in $\matkwu{\dfunc{f}{x}}{}$ is a sub-row of $(M_x)_i$. By the first inclusion in \Cref{lem:prime-implicants-df}, there exists $\widetilde{p_{\neg b}}\in\implicants{f}$ such that $q_1=\replaceall{(\widetilde{p_{\neg b}}+x)}{0}{\u}$. We denote by $\tilde{i}\in [k]$ the row index of $\widetilde{p_{\neg b}}$ in $M_x$. 
According to \Cref{lem:mdf-submatrix-mf}, the row associated with $\widetilde{p_{\neg b}}$ in $M_x$ and the row associated with $q_1$ in $\matkwu{\dfunc{f}{x}}{}$ are equal, so it is enough to show that $(M_x)_{\tilde{i}}$ is a sub-row of $(M_x)_i$. Let $j\in [\ell]$ and $p_b^{(j)}\in \implic{f}{b}|_{x}$ be the associated column label in $M_x$. Recall that $p_b^{(j)}\precequ x$. Let a coordinate $d\in[n]$. Since $\replaceall{(\widetilde{p_{\neg b}}+x)}{0}{\u} \precequ\replaceall{(p_{\neg b}+x)}{0}{\u}$, we have $(\widetilde{p_{\neg b}})_d=\neg x_d \Rightarrow (p_{\neg b})_d=\neg x_d$, and so:
\begin{align*}
&d\in (M_x)_{\tilde{i},j}\Rightarrow(\widetilde{p_{\neg b}}+p_b^{(j)})_d=1\Rightarrow (\widetilde{p_{\neg b}})_d=\neg x_d, (p_b^{(j)})_d=x_d\\ 
&\Rightarrow(p_{\neg b})_d=\neg x_d, (p_b^{(j)})_d=x_d \Rightarrow (p_{\neg b}+p_b^{(j)})_d=1\Rightarrow d\in (M_x)_{i,j}.
\end{align*}
As for every $j\in [\ell]$, $(M_x)_{\tilde{i},j}\subseteq (M_x)_{i,j}$, $(M_x)_{i'}$ is a sub-row of $(M_x)_i$, therefore, assumption \ref{item:1} of \Cref{lem:simplification-lemma} holds.
\end{proof}

\section{Tightness of the hazard-derivative method}\label{sec:tightness-of-hdm}
In this section, we extend the hazard-derivative lower bound method of \cite{IKL+19} when applied to hazard-free formulas. We show that the hazard-derivative method yields \emph{exact} bounds if and only if the function is unate, thus proving \Cref{thm:kw-f-not-breaks-if}. Hence, whether a function is \emph{unate} or not, can serve as a criterion to achieving a monotone gap of 1, which in particular answers \Cref{que:monotone-gap-criterion}.

We prove \Cref{prop:kw-f-not-break-if} by applying \Cref{prop:submatrix-of-derive-p-equiv-derivative}, which immediately shows that the hazard-derivative lower bound method is tight for unate functions.

\begin{proposition}[Unate functions have monotone gap of 1]\label{prop:kw-f-not-break-if}
Let $\booleanf{f}{n}$ be a non-constant unate Boolean function. Then, there exists $\booleanx{x}{n}$ such that:
\[
\size_{F}^{+}(\dfunc{f}{x}) = \size_{F}^{\u}(f).
\]
\end{proposition}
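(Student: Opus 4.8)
The plan is to reduce the claim to the machinery of \Cref{sec:tightness-of-hdm}. By \Cref{cor:derivative-lower-bound} we already have $\size_{F}^{+}(\dfunc{f}{x}) \le \size_{F}^{u}(f)$ for \emph{every} Boolean $x$, so it suffices to produce one point $x^{\star}\in f^{-1}(0)$ for which the reverse inequality holds. By \Cref{prop:submatrix-of-derive-p-equiv-derivative}, for any Boolean $x$ with $f(x)=0$ one has $\monorect(\matkwu{\dfunc{f}{x}}{}) = \monorect(M)$, where $M$ is the submatrix of $\matkwu{f}{}$ on the label set $\implicants{f}\times\implicates{f}|_{x}$. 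Hence the whole statement collapses to finding $x^{\star}\in f^{-1}(0)$ with $\implicates{f}|_{x^{\star}} = \implicates{f}$, i.e.\ a $0$-input that is derived by \emph{every} prime implicate; for then $M$ is literally $\matkwu{f}{}$.

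The right point is dictated by the unate directions of $f$. For each coordinate $i\in[n]$ I would fix a sign $\epsilon_i\in\{+,-\}$ such that $f$ is $\epsilon_i$-unate in $x_i$ (possible since $f$ is unate; if both signs are valid, either choice works), and set $x^{\star}_i := 0$ if $\epsilon_i=+$ and $x^{\star}_i := 1$ if $\epsilon_i=-$. Then one checks $p_0 \precequ x^{\star}$ for every $p_0\in\implicates{f}$: when $\epsilon_i=+$, \Cref{lem:f-unate-implicates}\eqref{item:f-unate-implicates-1} forces $(p_0)_i\ne 1$, so $(p_0)_i\in\{0,u\}$ and $(p_0)_i\precequ 0 = x^{\star}_i$; when $\epsilon_i=-$, \Cref{lem:f-unate-implicates}\eqref{item:f-unate-implicates-2} forces $(p_0)_i\ne 0$, so $(p_0)_i\in\{1,u\}$ and $(p_0)_i\precequ 1 = x^{\star}_i$. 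Thus $\implicates{f}|_{x^{\star}} = \implicates{f}$. Since $f$ is non-constant, $\implicates{f}\neq\emptyset$; taking any $p_0$ in it we have $x^{\star}\in R(p_0)$, and since $p_0$ is an implicate (so $\tilde f(p_0)=0$) it follows that $f(x^{\star})=0$, so $x^{\star}$ meets the hypothesis of \Cref{prop:submatrix-of-derive-p-equiv-derivative}.

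Finally I assemble the equality from the Karchmer--Wigderson characterizations. Applying \Cref{prop:submatrix-of-derive-p-equiv-derivative} at $x^{\star}$ gives $\monorect(\matkwu{\dfunc{f}{x^{\star}}}{}) = \monorect(M) = \monorect(\matkwu{f}{})$, the last step because $M = \matkwu{f}{}$. As $\dfunc{f}{x^{\star}}$ is monotone by \Cref{lem:derivative-monotone}, combining \Cref{thm:kw-hazard-free} (for $f$), \Cref{thm:kw-hazard-free-monotone} (to pass from $\kwgame^{u}_{\dfunc{f}{x^{\star}}}$ to $\kwgame^{+}_{\dfunc{f}{x^{\star}}}$) and \Cref{thm:kw-monotone} yields
\[
\size_{F}^{u}(f) = \monorect(\matkwu{f}{}) = \monorect(\matkwu{\dfunc{f}{x^{\star}}}{}) = \monorect(\kwgame^{+}_{\dfunc{f}{x^{\star}}}) = \size_{F}^{+}(\dfunc{f}{x^{\star}}),
\]
which is the claimed identity.

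The only step needing genuine thought is the middle paragraph: reading $x^{\star}$ off the unate directions and checking simultaneously that every prime implicate is $\precequ x^{\star}$ and that $f(x^{\star})=0$. Everything else is a direct invocation of results already established; in particular \Cref{prop:submatrix-of-derive-p-equiv-derivative} absorbs the comparison of $\matkwu{\dfunc{f}{x^{\star}}}{}$ with the relevant submatrix of $\matkwu{f}{}$. One could argue symmetrically via prime implicants and a point of $f^{-1}(1)$; I phrase it with implicates only to match the hypothesis $f(x)=0$ of \Cref{prop:submatrix-of-derive-p-equiv-derivative}.
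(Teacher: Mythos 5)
Your proposal is correct and follows essentially the same route as the paper: both arguments use the unateness of $f$ together with \Cref{lem:f-unate-implicates} to produce a Boolean point $x^{\star}$ derived by every prime implicate, so that the submatrix $\implicants{f}\times\implicates{f}|_{x^{\star}}$ is all of $\matkwu{f}{}$, and then invoke \Cref{prop:submatrix-of-derive-p-equiv-derivative} and the KW characterizations. Your write-up is slightly more explicit than the paper's (you construct $x^{\star}$ coordinate-by-coordinate from the unate signs and verify $f(x^{\star})=0$, which the paper leaves implicit), but the underlying argument is identical.
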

\begin{proof}
A unate function is positive or negative unate in each of it's variables. By \Cref{lem:f-unate-implicates}, it is not possible for a unate function to have two prime implicates with \emph{stable and different} coordinates. Hence, there exists $x'\in \{0,1\}^n$ that satisfies $\implicates{f}|_{x'}=\implicates{f}$. In particular, 
$\implicants{f}\times \implicates{f}|_{x'}=\implicants{f}\times \implicates{f}$. Following \Cref{prop:submatrix-of-derive-p-equiv-derivative}, we get:
\[
CC(\matkwu{f}{})=CC(\matkwu{\dfunc{f}{x'}}{}),
\]
\[
\monorect(\matkwu{f}{})=\monorect(\matkwu{\dfunc{f}{x'}}{}).
\]
We conclude, as in Equation~\eqref{eq:relating-size}, that $\size_F^\u(f)=\size_F^+(\dfunc{f}{x'})$.
\end{proof}

\begin{remark}
An alternative proof of \Cref{prop:kw-f-not-break-if} is to generalize \cite[Corollary 4.5]{IKL+19}. A function $f$ is unate if and only if there exists some $\booleanx{c}{n}$ such that $f(x+c)$ is monotone. Therefore, it can be shown by \cite[Lemma 4.4]{IKL+19}, that $f(x+c)=\df{f}(c;x)$.
\end{remark}

The proof of the other direction requires more work. Consider a non-unate function $f$ and a coordinate $d\in [n]$ that demonstrates that $f$ is not unate. As before, let $\booleanx{x}{n}$ be a Boolean input such that $f(x)=b\in\{0,1\}$. 
Let $s\in\{0,\u,1\}$, and denote by $P_1^{d=s}$ ($P_0^{d=s}$) the subset of $P_1^{(f)}$ ($P_0^{(f)}$) for which the $d$'th coordinate equals $s$. By examining the prime implicants and the prime implicates of $f$ we observe that $P_{b}^{d=x_d},P_{b}^{d=\neg x_d},P_{\neg b}^{d=x_d},P_{\neg_b}^{d=\neg x_d}\neq \emptyset$. 
We actually prove the following stronger claim, as illustrated in \Cref{fig:non-unate-f}:

\begin{restatable}{proposition}{proprecsnonunate}\label{prop:recs-non-unate}
Let $\booleanf{f}{n}$ be a non-unate Boolean function. Then, there exists a coordinate $d\in[n]$, $p_1,q_1\in \implicants{f}$ and $p_0,q_0\in \implicates{f}$ such that:
\begin{enumerate}
    \item $(p_1)_d=0,(p_0)_d=1$ and $d$ is the only different and stable coordinate between $p_1$ and $p_0$.
    \item $(q_1)_d=1,(q_0)_d=0$ and $d$ is the only different and stable coordinate between $q_1$ and $q_0$.
\end{enumerate} 
\end{restatable}
\begin{proof}
As $f$ is non-unate, there exists $d\in [n]$ such that $f$ is not positive or negative unate in $x_{d}$. Hence, there exist $\booleanx{s,s'}{n}$ such that:
\[
    f(\replace{s}{d}{0}) > f(\replace{s}{d}{1}), f(\replace{s'}{d}{1}) > f(\replace{s'}{d}{0}).
\]
Therefore, there exist $\implicant, q_1 \in \implicants{f}$ such that $\implicant \precequ \replace{s}{d}{0}$ and $q_1 \precequ \replace{s'}{d}{1}$.
By \Cref{prop:implicant-and-neg-implicate}, there exist $p_0,q_0\in \implicates{f}$ such that only the $d$'th coordinate in $p_1,p_0$ and in $q_1,q_0$ is stable and different. As $0=(p_1)_d\neq (q_1)_d=1$, we conclude that $1=(p_0)_d\neq (q_0)_d=0$ and the claim follows.
\end{proof}


\begin{figure}[t]
    \centering
\vspace{1cm}
\begin{tikzpicture}[overlay, decoration={brace,amplitude=10pt,mirror}]
    \draw[decorate] (-2.9,-0.5) -- node[left=10pt] {$P_1^{d=1}$} (-2.9,-1.5);
    \draw[decorate] (-2.9,-1.55) -- node[left=10pt] {$P_1^{d=0}$} (-2.9,-2.55);
    \draw[decorate] (-2.9,-2.6) -- node[left=10pt] {$P_1^{d=\u}$}(-2.9,-3.25);
    
    \draw[decorate] (-0.25,-0.2) -- node[above=10pt] {$P_0^{d=1}$} (-2.15,-0.2);
    \draw[decorate] (1.85,-0.2) -- node[above=10pt] {$P_0^{d=0}$} (-0.2,-0.2);
    \draw[decorate] (2.55,-0.2) -- node[above=10pt] {$P_0^{d=\u}$}(1.9,-0.2);
\end{tikzpicture}
\begin{minipage}{\textwidth}
        $$
        \begin{pNiceMatrix}[first-col,first-row,margin]
        & p_0 & \dots & q_0 & \dots & \dots \\
        q_1 & & & \Block[tikz={fill=color4!40, draw=black}]{2-2}{} d & d,\dots & \\
        \vdots & & & d,\dots & d,\dots & \\
        p_1 & \Block[tikz={fill=color4!40, draw=black}]{2-2}{} d & d,\dots & & & \\
        \vdots & d,\dots& d,\dots & & & \\
        \vdots & & & & & \\
        \end{pNiceMatrix}
        $$
\end{minipage}
\caption{Sketch of $\matkwu{f}{}$ for a non-unate $f$. Empty entries do not contain the $d$'th coordinate. We use the notation $'d,\ldots'$ when there may be stable coordinates other than $d$ on which the row labels and column labels differ.}
\Description{The communication matrix of the hazard-free KW game for a non-unate Boolean function imposes a special structure, as per \Cref{prop:recs-non-unate}: in any partition of $\matkwu{f}{}$ to monochromatic rectangles, there must exist two monochromatic rectangles that are colored with $d$.}
\label{fig:non-unate-f}
\end{figure}

\begin{figure}[t]
    \centering
    \begin{tikzpicture}[overlay]
        \node[font=\tiny] at (-2.35,-0.4) {$P_{\neg b}^{d=\neg x_d}$};
        \node[font=\tiny] at (-2.35,-1.2) {$P_{\neg b}^{d=x_d}$};
        \node[font=\tiny] at (-2.35,-1.9) {$P_{\neg b}^{d=\u}$};
         \node[font=\tiny] at (-0.9,0.2) {$P_b^{(f)}|_x$};
        \node[font=\tiny] at (0.8,0.2) {$P_b^{(f)}\setminus P_b^{(f)}|_x$};
    \end{tikzpicture}
    \begin{minipage}{\textwidth}
    $$
    \begin{pNiceArray}[margin]{ccccc|ccccc}
      \Block[tikz={fill=color1!50}]{5-5}<\large>{M_x} & & & & & \\
      & & & & & & & & & \\
      & & & & & & & & & \\
      & & & & & & & & & \\
      & & & & & & & & & \\
    \end{pNiceArray}
    $$
    \end{minipage}
    \begin{tikzpicture}[overlay]
        \draw[fill=color4!40, draw=black] (0.7,1.425) rectangle node[font=\small]{$\rec$} (1.4,0.725);

        \draw[dashed] (-1.8,1.5) -- (1.9,1.5);
        \draw[dashed] (-1.8,0.65) -- (1.9,0.65);
    \end{tikzpicture}
\caption{$\matkwu{f}{}$ for a non-unate $f$.}
\label{subfig:matrics-demonstrations-b}
\Description{A depiction the communication matrix for a non-unate function: a $d$-uniform monochromatic rectangle $\rec$ is a subset of the entries in $P_{\neg b}^{d = x_d} \times (\implic{f}{b} \setminus \implic{f}{b}|_x)$.}
\end{figure}

By \Cref{prop:recs-non-unate}, we conclude that for every protocol solving $\kwgame_f^\u$, there exists a monochromatic rectangle $\rec \subseteq P_{\neg b}^{d=x_d}\times P_{b}^{d=\neg x_d}$ in the partition induced by the protocol that must be colored with $d$ and is disjoint from $M_x$, as illustrated in \Cref{subfig:matrics-demonstrations-b}. Removing the leaf associated with $\rec$ from the protocol tree results in a protocol for $M_x$ that has one less leaf. This implies that there must be some gap in the number of leaves in the optimal protocols, thus proving the following proposition:


\begin{restatable}
{proposition}{propkwfbreaksif}\label{prop:kw-f-breaks-if}
Let $\booleanf{f}{n}$ be a non-unate Boolean function. Then for every $\booleanx{x}{n}$, it holds that:
\[
    \size_{F}^{+}(\dfunc{f}{x}) < \size_{F}^{\u}(f).
\]
\end{restatable}
\begin{proof}
Let $\booleanx{x}{n}$ and denote $f(x)=b\in\{0,1\}$. 
W.l.o.g, assume $b=0$. Let $d\in [n]$, $p_{\neg b}\in P_{\neg b}^{d=x_d}$ and $p_{b}\in P_{ b}^{d=\neg x_d}$ as in \Cref{prop:recs-non-unate}.
According to \Cref{prop:mdf-submatrix-mf} and \Cref{rem:mdf-submatrix-mf}, $\matkwu{\dfunc{f}{x}}{}$ is a submatrix of $M_x$, the submatrix of $\matkwu{f}{}$ with labels $\implic{f}{\neg b}\times \implic{f}{b}|_x \subseteq \implic{f}{\neg b} \times (P_{b}^{d=x_d}\cup P_{b}^{d=\u})$. Let $P$ be a protocol for $\matkwu{f}{}$. As the $d$'th coordinate in $p_{\neg b},p_b$ is the only coordinate that is both stable and different, there exists a \emph{$d$-uniform} rectangle $\rec$ as in \Cref{def:d-colored-combinatorial-rec}, which contains $(p_{\neg b},p_b)$, in the partition induced by $P$. Moreover, $\rec$ has to be a subset of $P_{\neg b}^{d=x_b} \times P_b^{d=\neg x_d}$, since all entries of $\rec$ has to include the coordinate $d$. 

Next, consider the following protocol $P'$ for $\matkwu{\dfunc{f}{x}}{}$: Alice and Bob follow $P$, but since $P'$ operates only on $\implic{f}{\neg b} \times \implic{f}{b}|x$, we may get empty leaves which can be removed. We must have removed the leaf associated with $\rec$, so:
\[
\monorect(P')<\monorect(P).
\]
As before, the claim follows from \Cref{thm:kw-hazard-free}, \Cref{thm:kw-hazard-free-monotone} and  \Cref{thm:kw-monotone}. 
\end{proof}

The proof of \Cref{thm:kw-f-not-breaks-if} follows immediately from \Cref{prop:kw-f-not-break-if} and \Cref{prop:kw-f-breaks-if}.

\subsection{The monotone gap of Andreev's function}\label{subsec:andreev}
In this section, we prove that Andreev's function gives monotone gap of $\Omega\left(\frac{n^2}{\log n}\right)$.

\begin{definition}[Andreev's function \cite{And87}]\label{def:andreev-function}
Andreev's function, with parameters $k, m \in \mathbb{N}$, is a Boolean function over $n=2^k+km$ variables:
\[
\text{ANDREEV}_{k,m} : \{0,1\}^{2^k} \times \{0,1\}^{km} \rightarrow \{0,1\}.
\]
Andreev's function receives as inputs $\booleanx{f}{2^k}$ and $\booleanx{X}{km}$, a truth table of a Boolean function over $k$ bits and a Boolean $k\times m$ matrix, respectively: 
\[
\text{ANDREEV}_{k,m}(f, X) := (f \diamond \xor_m)(X),
\]
where $\booleanf{\xor_m}{m}$ is the parity function over $m$ bits:
\[
\xor_m(x):=x_1+\dots+x_m.
\]
\end{definition}

In the rest of this section, we shall use the following parameters $n:=2^k+km$, 
$k:=\log(n/2)$, and $m=\frac{n}{2\log(n/2)}$.\footnote{We can take $k$ and $n$ to be powers of $2$ so we ignore ceiling and floors for readability.} 

H{\aa}stad was the first to prove a near-cubic lower bound on the formula complexity of Andreev's function \cite{Has98} using the random restriction technique. Later, Tal \cite{Tal14} improved the analysis to obtain the currently best lower bound of $\Omega\left(\frac{n^3}{\log^2n\log\log n}\right)$. This lower bound is tight up to the $\log \log n$ factor.

\begin{theorem}[Andreev's function requires cubic formulas {\cite[Theorem 7.2]{Tal14}}]\label{thm:andreev-lower-bound}
\[
\Omega\left(\frac{n^3}{\log^2n\log\log n}\right)\le \size_F(\text{ANDREEV}_{k,m})\leq O\left(\frac{n^3}{\log^2n}\right).
\]
\end{theorem}

As $\size_F^\u(f)\ge \size_F(f)$, this lower bound also holds in the hazard-free setting. We next bound the complexity of the hazard-derivatives of Andreev's function.

\begin{fact}[Parity has linear hazard-derivatives]\label{fact:parity-der}
For every $\booleanx{x,y}{m}$, it holds that 
\[
\df{\xor_m}(x;y)=\bigvee_{i=1}^{m}y_i.
\]
\end{fact}

We denote $\df{\xor_m}(y):=\df{\xor_m}(x;y)$, since the hazard-derivative of $\xor_m$ w.r.t to $x$ does not depend on $x$.

\begin{claim}[Andreev's function has near linear derivatives]\label{claim:andreev-der}
For every $\booleanx{f}{2^k}$ and $\booleanx{X}{km}$, it holds that:
\[
\size_F^+(\df{\text{ANDREEV}_{k,m}(f,X;\cdot,\cdot)})=(k+1)2^k+km=O(n\log(n)).
\]
\end{claim}
\begin{proof}
By \Cref{def:mux}, we have:
\[
\text{ANDREEV}_{k,m}(f;X)=(f\diamond \xor_m)(X)=f(\xor_m[X])=\mux_k(f,\xor_m[X]).
\]

Let $\booleanx{f}{2^k}$ and $\booleanx{X}{km}$. We describe a monotone formula for the hazard-derivative of $\text{ANDREEV}_{k,m}$ w.r.t to $f$ and $X$, denoted $\dfunc{\text{ANDREEV}_{k,m}}{f,X}$. Let $\booleanx{t}{2^k}$ be a perturbation for $f$ and $\booleanx{Y}{km}$ be a perturbation for $X$. A direct consequence of \Cref{fact:parity-der} is $\xor_m(x+\u y)=\xor(x)+\u\df{\xor_m}(x;y)=\xor(x)+\u\df{\xor_m}(y)$, hence:
\[
\xor_m[X+\u Y]=\xor_m[X]+\u \df{\xor_m}[Y]\]

Therefore:
\begin{align*}
&\df{\text{ANDREEV}_{k,m}}(f,X;t,Y)=1 \iff \\
&\widetilde{\mux_k}(f+\u t,\xor_m[X+\u Y])=\u \iff \\
&\widetilde{\mux_k}(f+\u t,\xor_m[X]+\u \df{\xor_m}[Y])=\u \iff \\
&\df{\mux_k}(f,\xor_m[X];t,\df{\xor_m}[Y])=1.
\end{align*}

We can use a monotone formula for $\df{\xor_m}[Y]$, which is a simply $k$ disjoint $\text{OR}_m$ formulas, that feed into the monotone formula for $\df{\mux_k}(f, \xor_m[X];t,y)$, as described in \cite[Proposition 1]{IK23}.
\end{proof}

\begin{corollary}[Andreev's function has near quadratic monotone gap]
    \[
    \monogap(\text{ANDREEV}_{k,m})=\Omega\left(\frac{n^2}{\log n}\right).
    \]
\end{corollary}
\begin{proof}
Immediate result of \Cref{thm:andreev-lower-bound} and \Cref{claim:andreev-der}.
\end{proof}

\section{A universal formula upper bound for most Boolean functions}\label{sec:universal}
In this section, we study the hazard-free formula complexity of random Boolean functions. 

\begin{definition}\label{def:random-boolean-function}
We say that $\booleanf{f}{n}$ is a \emph{random function} if it is sampled uniformly at random among all Boolean functions on $n$ variables. Alternatively, we can think as generating $f$ by sampling, for every $\booleanx{x}{n}$, a value $f(x)\in \{0,1\}$ uniformly and independently.
\end{definition}

It is well known that the formula complexity of a random function is maximal. 
\begin{theorem}[{\cite{RS42,Sha49}}]\label{thm:almost-all-formulas-are-large}
For any $\epsilon>0$, almost all Boolean functions on $n$ variables require formula size of at least $\frac{(1-\epsilon)2^n}{\log n}$.
\end{theorem}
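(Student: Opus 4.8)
The plan is to run the classical Shannon--Riordan counting argument: the number of distinct De Morgan formulas of bounded size is far smaller than the total number $2^{2^n}$ of Boolean functions on $n$ variables, so all but a vanishing fraction of functions admit no small formula. Since pushing negations to the leaves (via De Morgan's laws) never increases the number of leaves, I would work with formulas built from fan-in-$2$ $\wedge$ and $\vee$ gates over the $2n$ literals $x_1,\neg x_1,\dots,x_n,\neg x_n$.

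First I would bound the number of such formulas with exactly $t$ leaves. Such a formula is determined by the shape of a full binary tree with $t$ leaves, of which there are $C_{t-1}\le 4^{t-1}$ (the $(t-1)$st Catalan number), by a labeling of the $t-1$ internal gates with $\{\wedge,\vee\}$ (at most $2^{t-1}$ choices), and by a labeling of the $t$ leaves by literals (at most $(2n)^t$ choices). Hence there are at most $4^{t-1}2^{t-1}(2n)^t\le(16n)^t$ formulas with $t$ leaves, and summing the geometric series, the number of Boolean functions $\booleanf{g}{n}$ with $\size_{F}(g)\le s$ is at most $\sum_{t=1}^{s}(16n)^t\le 2(16n)^s$.

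Next I would set $s:=\bigl\lfloor(1-\epsilon)2^n/\log n\bigr\rfloor$ and compare with $2^{2^n}$. Taking logarithms, $\log\bigl(2(16n)^s\bigr)\le s\log(32n)\le\frac{(1-\epsilon)2^n}{\log n}(\log n+\log 32)=(1-\epsilon)\bigl(1+\tfrac{\log 32}{\log n}\bigr)2^n$, and once $n$ is large enough that $\frac{\log 32}{\log n}<\tfrac{\epsilon}{2}$ this is at most $(1-\epsilon)(1+\tfrac{\epsilon}{2})2^n<(1-\tfrac{\epsilon}{2})2^n$. Thus the fraction of Boolean functions computable by a formula of size at most $s$ is at most $2^{(1-\epsilon/2)2^n}/2^{2^n}=2^{-\epsilon 2^{n-1}}\to 0$, so almost all Boolean functions on $n$ variables require formula size strictly larger than $(1-\epsilon)2^n/\log n$, as claimed.

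I do not expect a genuine obstacle here; the only delicate points are the clean bound $C_{t-1}\le 4^{t-1}$ on the number of tree shapes and the fact that $n$ must be taken sufficiently large as a function of $\epsilon$ so that the additive constant $\log 32$ is dominated by $\log n$ in the denominator. If one prefers to avoid Catalan numbers altogether, the cruder observation that a size-$s$ formula can be encoded by a binary string of length $O(s\log n)$ feeds into the same comparison and yields the same bound.
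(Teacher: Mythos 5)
Your counting argument is correct: the bound of at most $(16n)^t$ formulas with $t$ leaves (tree shapes via Catalan numbers, gate labels, literal labels), the comparison with $2^{2^n}$, and the handling of the additive $\log 32$ term for large $n$ are all sound, and this is exactly the classical Riordan--Shannon/Shannon argument that the paper invokes by citation without reproducing a proof.
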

Since $\size_F^\u(f)\ge \size_F(f)$, an immediate consequence of \Cref{thm:almost-all-formulas-are-large} is that with high probability, a random Boolean function over $n$ variables requires hazard-free formulas of size at least $\Omega(\frac{2^n}{\log n})$.

In this section, we provide two proofs for \Cref{thm:universal-ub}, which greatly improves upon the best-known upper bound for arbitrary functions, currently $O(3^n)$ (see Sections ~\ref{subsec:random-function-dnf} and ~\ref{subsec:hazard-derivative-upper-bound}). The two proofs utilize the same observation: the probability that a random function is constant on any large Boolean subcube is small. Therefore, it tends to have prime implicants that consist mostly of stable bits, while its hazard-derivatives tend to be nearly constant and equal to $1$. The second proof consists of proving \Cref{thm:derivative-upper-bound}, which gives a weak converse to \Cref{thm:derivative-lower-bound} and offers insights of independent interest.

Another consequence of this observation is that the hazard-derivatives of random Boolean functions are significantly simpler than the functions themselves. This reveals a considerable gap between the hazard-free complexity of a random function and the monotone complexity of its hazard-derivatives. Consequently, random functions achieve a monotone gap that is exponential in $n$.

\subsection{Hazard-free DNF formula of a random Boolean function}\label{subsec:random-function-dnf}
We show that, with high probability, all the prime implicants of a random function consist mostly of stable bits. Hence, their number is $2^{(1+o(1))n}$ and a ``brute force'' DNF formula construction yields \Cref{thm:universal-ub}. 

\begin{lemma}[$\implicants{f}$ of a random $f$ tend to have more stable bits]\label{lem:weight-of-random-pi-f}
Let $\ternaryx{p_{1}}{n}$ be fixed such that $|p_1|_\u=k$. The probability that $p_{1}\in \implicants{f}$, for a random Boolean function $\booleanf{f}{n}$, is at most $(\frac{1}{2})^{2^k}$.
\end{lemma}
\begin{proof}
The probability that $p_1$ is a prime implicant of $f$ is at most the probability that it is an implicant of $f$. Furthermore, $p_1\in I_1^{(f)}$ if and only if for every $z\in R(p_1)$, it holds that $f(z)=1$. Since $|R(p_1)|=2^k$, we get that $p_1\in I_1^{(f)}$ with probability $\left(\frac{1}{2}\right)^{2^k}$.
\end{proof}

\begin{lemma}[Hazard-free DNF formula for a random function]\label{lem:dnf-for-random-f}
Let $\booleanf{f}{n}$ be a random Boolean function. Then, with high probability, $f$ can be calculated by a hazard-free DNF formula of size at most $\frac{n^{\log n}}{\log n ^{(1-o(1))\log n}}\cdot 2^n$.
\end{lemma}
\begin{proof}
By \Cref{lem:weight-of-random-pi-f}, and the union bound, we get:
\begin{align*}
&\Pr[\exists p_1\in \implicants{f}, |p_1|_{\u}>\log n]\leq \sum_{\log{n}<k\leq n}\Pr[\exists p_1\in \implicants{f}, |p_1|_{\u}=k]\\
&\leq\sum_{\log{n}<k\leq n}{\binom{n}{k}\cdot 2^{n-k}\cdot\left(\frac{1}{2}\right)^{2^k}}< 2^n \cdot \sum_{\log{n}<k\leq n}{\binom{n}{k}\left(\frac{1}{2}\right)^{2^k}}<2^n \cdot \sum_{\log{n}<k\leq n}{n^k\left(\frac{1}{2}\right)^{2^k}}\\
&=\sum_{\log{n}<k\leq n}{\left(\frac{1}{2}\right)^{2^k-(k\log{n}+n)}}\leq\footnotemark n\cdot\left(\frac{1}{2}\right)^{2^{\log{n}+1}-((\log{n}+1)\log{n}+n)}=\frac{n}{2^{\Omega(n)}}.
\end{align*}
\footnotetext{Since $g(k)=2^k-(k\log{n}+n)$ is monotone increasing in k.}

Hence, with high probability, for every $p_1\in \implicants{f}$ we have $|p_1|_{\u}\leq\log n$.
The following DNF formula is a hazard-free formula for $f$:
\[
f(x)=\bigvee_{p_1 \in \implicants{f}}\left(\bigwedge_{(p_1)_i=1}{x_i}\land \bigwedge_{(p_1)_i=0}{\neg x_i}\right).
\]
Each term contains at most $n$ literals. The number of prime implicants is upper bounded by:
\[\sum_{k=1}^{\log{n}}{\binom{n}{k}\cdot2^{n-k}}\leq\log n \cdot \binom{n}{\log n} \cdot 2^{n-\log n}\leq \log n 
\cdot \left(\frac{en}{\log n} \right)^{\log{n}}\cdot 2^{n-\log n}= \frac{n^{\log n}}{\log n ^{(1-o(1))\log n}}\cdot \frac{2^n}{n},
\]
where in the last inequality we used the well known estimate $\binom{a}{b}\leq (ea/b)^b$ where $e$ is the basis of the natural log.
Therefore, we get a formula of size $\frac{n^{\log n}}{\log n ^{(1-o(1))\log n}}\cdot 2^n$.
\end{proof}

\subsection{The hazard-derivative of a random Boolean function}\label{subsec:hd-random-function}
An argument similar to the one in the previous section, shows that the hazard-derivatives of a randomly sampled function have low monotone complexity. Hazard-derivatives indicate whether the function’s output changes when perturbing a fixed input. As more bits in the input are perturbed, the probability that the function's output would not change gets smaller, since there are a lot of resolutions. Consequently, the derivative's output is likely to be 1, making the hazard-derivative nearly constant. 

\begin{lemma}[$\implicants{\dfunc{f}{x}}$ for a random $f$ tend to have more unstable bits]\label{lem:weight-of-random-pi-df}
Let $\booleanx{x}{n}$ and let $\ternaryx{q_{1}}{n}$ be fixed such that  $|q_{1}|_{1}=k$. The probability that $q_{1}\in \implicants{\dfunc{f}{x}}$, for a random Boolean function  $\booleanf{f}{n}$, is $(\frac{1}{2})^{2^k-1}$.
\end{lemma}
\begin{proof}
By \Cref{lem:prime-implicants-df-aux}, $q_{1}\in \implicants{\dfunc{f}{x}}$ if and only if $q_1\in \{1,\u\}^n$ and there exists exactly one $z\in R(x+\u\check{q_1})$ such that $f(z)=\neg f(x)$. The probability of this event is $2\cdot (\frac{1}{2})^{|R(x+\u\check{q_1})|}=2\cdot (\frac{1}{2})^{2^k}=(\frac{1}{2})^{2^k-1}$.
\end{proof}

\begin{restatable}{lemma}{lemrandomder}\label{lem:random-der}
A random Boolean function $\booleanf{f}{n}$, sampled uniformly at random, 
satisfies that with high probability, for every $\booleanx{x}{n}$:
\[
\size_F^{+}(\dfunc{f}{x})\le\frac{n^{\log n}}{\log n ^{(1-o(1))\log n}}.
\]
\end{restatable}
\begin{proof}
We omit some calculations as they closely follow the proof of \Cref{lem:dnf-for-random-f}. Fix $\booleanx{x}{n}$. Taking the union bound over all $\ternaryx{q_1}{n}$ with $1$-weight of at least $\log{n}$ and using \Cref{lem:weight-of-random-pi-df}, we get:
\begin{align*}
&\Pr[\exists q_1\in \implicants{\dfunc{f}{x}}, |q_1|_1>\log{n}]\le\sum_{\log{n}<k\leq n}{\Pr[\exists q_1\in \implicants{\dfunc{f}{x}}, |q_1|_1=k]}\leq\\
&\leq\sum_{\log{n}<k\leq n}{ \binom{n}{k}\left(\frac{1}{2}\right)^{2^k-1}}.
\end{align*}
Taking the union bound over all Boolean inputs of length $n$, we have:
\begin{align*}
&\Pr[\exists x, q_1\in \implicants{\dfunc{f}{x}}, |q_1|_1>\log{n}]\leq 2^n \cdot\sum_{\log{n}<k\leq n}{ \binom{n}{k}\left(\frac{1}{2}\right)^{2^k-1}}=\frac{n}{2^{\Omega(n)}}.
\end{align*}

We have that with high probability, all prime implicants of all hazard-derivatives of $f$ are of $1$-weight at most $\log{n}$.
Fix $\booleanx{x}{n}$. It holds that: 
\[
\df{f}(x;y)=\bigvee_{p_1 \in \implicants{\dfunc{f}{x}}}\bigwedge_{(p_1)_i=1}{y_i}.
\]
This is a monotone DNF formula for $\dfunc{f}{x}$, where each term is a prime implicant. Since the weight of all prime implicants is at most $\log{n}$, each term contains at most $\log{n}$ literals. The number of prime implicants is upper bounded by:
\[\sum_{k=1}^{\log{n}}{\binom{n}{k}}\leq \frac{n^{\log n}}{\log n ^{(1-o(1))\log n}}.
\]
Hence, we get a monotone formula of size $\frac{n^{\log n}}{\log n ^{(1-o(1))\log n}}$.
\end{proof}

The proof of \Cref{prop:random-function-gap}, which we restate for ease of reading, follows immediately.
\proprandomfunctiongap*
\begin{proof}
By \Cref{thm:almost-all-formulas-are-large}, for the majority of Boolean functions it holds that 
$\size_{F}^{\u}(f)\ge \size_{F}(f)=\Omega(\frac{2^n}{\log n})$. 
Combining this with \Cref{lem:random-der}, we concludes the proof.
\end{proof}

\subsection{Hazard-derivatives upper bound}\label{subsec:hazard-derivative-upper-bound}
In this section, we prove \Cref{thm:derivative-upper-bound}, which upper bounds the hazard-free formula size in terms of the monotone formula size of several of its hazard-derivatives.
Building on the analysis of $\matkwu{f}{}$ discussed in \Cref{sec:kw-games-of-the-derivative}, we prove that $\matkwu{f}{}$ can be decomposed into submatrices, each with communication complexity bounded from above by that of a specific hazard-derivative. When given a subset $\{x^{(1)},\dots,x^{(k)}\}$ of boolean inputs in $X_b\subseteq \{0,1\}^n$ for some $b\in\{0,1\}$ that are enough to cover $P_b$ (i.e., for every $p\in P_b$ there exists $x\in X_b$ such that $p\precequ x$), we decompose $\matkwu{f}{}$ to $k$ (potentially overlapping) $M_{x}$'s with labels $\implic{f}{\neg b}\times\implic{f}{b}|_x$, as illustrated in \Cref{fig:matrics-decomposition-demonstrations} (see \Cref{fig:example-f} for a concrete example).

We leverage \Cref{thm:derivative-upper-bound} to prove \Cref{thm:ub-tight}. Thus, we provide an alternative proof for the improved upper bound on the hazard-free formula complexity of random Boolean functions (see \Cref{thm:universal-ub}).
Furthermore, \Cref{thm:ub-tight} demonstrates that the upper bound in \Cref{thm:derivative-upper-bound} is not too far from the truth for random Boolean functions. In \Cref{subsection:tight-lower-bound-range} we provide an example of an explicit function for which \Cref{thm:derivative-upper-bound} yields a tight bound. 

To ease the reading, we repeat the statement of \Cref{thm:derivative-upper-bound}.

\thmderivativesupperbound*

\begin{figure}[t]
\centering
\begin{tikzpicture}[overlay]
    \node[font=\small] at (-2.75,-1.1) {$P_{\neg b}^{(f)}$};
    
    \draw[decorate, decoration={brace, mirror}] (-2.12,-2.25) -- node[below=1pt, font=\tiny] {$\implic{f}{b}|_{x^{(1)}}$} (-0.37,-2.25);

    \draw[decorate, decoration={brace}] (-2.11,0) -- node[above=1pt, font=\tiny] {$\implic{f}{b}|_{x^{(k)}}$} (-2.11+0.4,0);
    
    \draw[decorate, decoration={brace}] (-0.72,0) -- node[above=1pt, font=\tiny] {$\implic{f}{b}|_{x^{(2)}}$} (0.35,0);

    \draw[decorate, decoration={brace}] (1.02,0) -- node[above=1pt, font=\tiny] {$\implic{f}{b}|_{x^{(k)}}$} (2.1,0);
\end{tikzpicture}
\begin{minipage}{\textwidth}
$$
\begin{pNiceArray}[margin]{ccccccccccc}
  \Block[tikz={fill=red!50}]{5-5}<\small>{M_{x^{(1)}}} & & & &\Block[tikz={opacity=0.7, fill=teal!50}]{5-3}<\small>{M_{x^{(2)}}} & & & & \Block[tikz={opacity=0.5, fill=black!50}]{5-3}<\small>{M_{x^{(k)}}} & & \\
  &&&&&&&&&& \\
  &&&&&&& \dots &&& \\
  &&&&&&&&&& \\
  &&&&&&&&&& \\
\end{pNiceArray}
$$
\vspace{2pt}
\end{minipage}

\begin{tikzpicture}[overlay]
\draw[opacity=0.5, fill=black!50, draw=none] (-2.07,2.33) rectangle node[font=\small]{} (-1.67,0.22);
\end{tikzpicture}
\vspace{5pt}
\caption{Decomposition of $\matkwu{f}{}$.}
\Description{A sketch of a decomposition of $\matkwu{f}{}$ to $k$ (potentially overlapping) $M_{x}$'s with labels $\implic{f}{\neg b}\times\implic{f}{b}|_x$.}
\label{fig:matrics-decomposition-demonstrations}
\end{figure}

\begin{proof}
All hazard-derivatives are monotone Boolean functions (\Cref{lem:derivative-monotone}), so following \Cref{thm:kw-hazard-free-monotone}, it is enough to prove:
\[
\size_{F}^\u(f)\le \sum_{x\in X_b}\size_{F}^{\u}(\dfunc{f}{x}).
\]
We design a protocol $P$ for $\kwgame_f^\u$ that has at most $\sum_{x\in X_b}\monorect(\matkwu{\dfunc{f}{x}}{})$ many leaves.

W.l.o.g, assume that $b=0$, so the column labels of $\matkwu{f}{}$  are $\implic{f}{b}$. We denote $X_b=\{x^{(1)},\dots, x^{(k)}\}$ and partition $\implic{f}{b}$ to $\le k$ distinct subsets $A_1,\dots, A_k \subseteq \implic{f}{b}$ according to $X_b$:
\[
A_1:=\implic{f}{b}|_{x^{(1)}},
\]
\[
\forall i\in [k], i>1, \quad A_i:=\implic{f}{b}|_{x^{(i)}} \setminus \bigcup_{j=1}^{i-1}A_{j}.
\]
It is easy to see that: 
\[
\implic{f}{b}=\bigcup_{i=1}^kA_{i}, \quad \text{and} \quad  \forall i\in [k], \quad A_i\subseteq \implic{f}{b}|_{x^{(i)}}.
\]
We denote by $M_i$ the submatrices of $\matkwu{f}{}$ with labels $\implic{f}{\neg b} \times A_i$. We also denote by $M_{x^{(i)}}$ the submatrices of $\matkwu{f}{}$ with labels $\implic{f}{\neg b} \times \implic{f}{b}|_{x^{(i)}}$. Consider the protocol $P$ which begins by partitioning $\matkwu{f}{}$ into the distinct submatrices $M_1,\dots, M_k$, we get:
\[
\monorect(\matkwu{f}{})\le \sum_{i=1}^{k}\monorect(M_i).
\]
Since $M_i$ is a submatrix of $M_{x^{(i)}}$, it follows from \Cref{prop:submatrix-of-derive-p-equiv-derivative} that 
\[
\monorect(M_i)\le \monorect(M_{x^{(i)}})=\monorect(\matkwu{\dfunc{f}{x^{(i)}}}{}),
\]
and the claim follows.
\end{proof}

As a corollary, we get a universal upper bound on the hazard-free formula complexity of random Boolean functions, which in particular proves \Cref{thm:universal-ub}.

\begin{theorem}[A universal upper bound for most Boolean functions]
    \label{thm:ub-tight}
For a random Boolean function $f$, we get that with high probability, for a minimally sized $X$ such that 
\[
\implicates{f}=\bigcup_{x\in X}\implicates{f}|_x,
\]
the following holds:
\[
\left(\sum_{x\in X}\size_{F}^{+}(\dfunc{f}{x})\right)^{1-o(1)}\leq \size_{F}^\u(f)\le \sum_{x\in X}\size_{F}^{+}(\dfunc{f}{x}).
\]
In particular, for a random Boolean function $f$, with high probability, we get that:
\[
\frac{2^n}{\log n}\leq \size_{F}^\u(f)  \leq 2^n\cdot \frac{n^{\log n}}{\log n ^{(1-o(1))\log n}}.
\]
\end{theorem}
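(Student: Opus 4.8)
The plan is to combine the ingredients already assembled: the upper bound direction is exactly \Cref{thm:derivative-upper-bound}, so the only real work is the matching lower bound $\left(\sum_{x\in X}\size_{F}^{+}(\dfunc{f}{x})\right)^{1-o(1)}\leq \size_{F}^u(f)$ for a \emph{minimally sized} covering set $X$, together with plugging in the concrete numbers from \Cref{lem:random-der} and \Cref{thm:almost-all-formulas-are-large}. For the lower bound, first I would bound the size of a minimal cover $X$. By \Cref{lem:random-der}, with high probability every prime implicate of $\dfunc{f}{x}$ has weight at most $\log n$; dually (applying the weight argument to $\neg f$, or symmetrically to implicates) every prime implicate of $f$ has at most $\log n$ stable coordinates with high probability, so each $\implicates{f}|_x$ is nonempty and one checks that a minimal cover has size $|X| \le 2^n$ trivially, while $\size_F^+(\dfunc{f}{x}) \le \frac{n^{\log n}}{\log n^{(1-o(1))\log n}}$ for every $x$ by \Cref{lem:random-der}. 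Hence $\sum_{x\in X}\size_F^+(\dfunc{f}{x}) \le 2^n \cdot \frac{n^{\log n}}{\log n^{(1-o(1))\log n}} = 2^n \cdot n^{(1-o(1))\log n}$, which is the stated universal upper bound once fed through \Cref{thm:derivative-upper-bound}.

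For the lower bound I would argue $\size_F^u(f) \ge \left(\sum_{x\in X}\size_F^+(\dfunc{f}{x})\right)^{1-o(1)}$ by showing the left side is at least $\Omega(2^n/\log n)$ (via \Cref{thm:almost-all-formulas-are-large} and $\size_F^u(f)\ge \size_F(f)$) while the right side is at most $2^n \cdot n^{(1-o(1))\log n}$. Taking logarithms, $\log \size_F^u(f) \ge n - \log\log n + O(1)$ while $\log\!\left(\sum_{x\in X}\size_F^+(\dfunc{f}{x})\right) \le n + (1-o(1))(\log n)^2 = n(1+o(1))$, since $(\log n)^2 = o(n)$. Thus $\log\!\left(\sum_{x\in X}\size_F^+(\dfunc{f}{x})\right) = n(1+o(1)) = (1+o(1))\log\size_F^u(f)$, which rearranges to $\left(\sum_{x\in X}\size_F^+(\dfunc{f}{x})\right)^{1-o(1)} \le \size_F^u(f)$ as claimed (absorbing the gap between $n$ and $n-\log\log n$ into the $o(1)$ exponent). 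The final ``In particular'' inequality then follows by substituting $\Omega(2^n/\log n)$ on the left (rounded to $2^n/\log n$) and the explicit bound $\sum_{x\in X}\size_F^+(\dfunc{f}{x}) \le 2^n \cdot \frac{n^{\log n}}{\log n^{(1-o(1))\log n}}$ on the right into \Cref{thm:derivative-upper-bound}.

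The main obstacle I anticipate is the bookkeeping around ``minimally sized $X$'': the lower bound $\left(\sum_{x\in X}\size_F^+(\dfunc{f}{x})\right)^{1-o(1)}\le \size_F^u(f)$ must hold for the minimal cover, and one needs the crude per-term bound $\size_F^+(\dfunc{f}{x}) \le n^{(1-o(1))\log n}$ to be uniform over all $x$ with high probability — which is precisely what the union bound inside the proof of \Cref{lem:random-der} delivers — and then the trivial bound $|X| \le 2^n$ suffices because the exponent $(1-o(1))\log n$ already dominates any polynomial slack from $|X|$. No delicate control of $|X|$ is actually needed: the point is only that $\log$ of the sum is $n(1+o(1))$, and since $\size_F^u(f)$ is sandwiched between $2^n/\log n$ and $2^n\cdot n^{(1-o(1))\log n}$, both its logarithm and the logarithm of the sum equal $n(1+o(1))$, forcing the power relation. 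I would double-check that the $o(1)$ terms compose correctly when passing from the logarithmic statement back to the multiplicative one, but this is routine.
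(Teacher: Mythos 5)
Your proposal is correct and follows essentially the same route as the paper, whose proof of \Cref{thm:ub-tight} is simply to combine \Cref{thm:almost-all-formulas-are-large}, \Cref{lem:random-der}, and \Cref{thm:derivative-upper-bound}; your fleshing-out of the logarithmic comparison that yields the $(1-o(1))$ exponent, and the trivial bound $|X|\le 2^n$ (since $X\subseteq f^{-1}(0)$), is exactly the intended argument. The side remarks about implicate weights and nonemptiness of $\implicates{f}|_x$ are unnecessary but harmless.
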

\begin{proof}
The claims follow immediately from \Cref{thm:almost-all-formulas-are-large}, \Cref{lem:random-der} and \Cref{thm:derivative-upper-bound}. 
\end{proof}

\subsubsection{Tight bound for range functions}\label{subsection:tight-lower-bound-range}
Here, we prove that \Cref{thm:derivative-upper-bound} can give exact bounds. We consider the special case where there exists $\booleanx{x^*}{n}$ such that $\implicates{f}=\implicates{f}|_{x^*}\cup \implicates{f}|_{\neg x^*}$. The key observation is that in this case, the prime implicates are covered by $x^*$ and $\neg x^*$. Since $x^*$ and $\neg x^*$ are negated, all stable coordinates of $\implic{f}{0}|_{x^*}$ and $\implic{f}{0}|_{\neg x^*}$ are different, making the row entries of $M_{x^*}$ and $M_{\neg x^*}$ disjoint. Therefore, there is an optimal protocol that begins by dividing the communication matrix to $M_{x^*}$ and $M_{\neg x^*}$.
Later, we discuss a family of explicit functions that satisfy the above property.

\begin{proposition}\label{prop:lower-bound-partition-x}
Let $\booleanf{f}{n}$ be a non-constant Boolean function. If there exists $\booleanx{x^*}{n}$ such that $\implicates{f}=\implicates{f}|_{x^*}\cup \implicates{f}|_{\neg x^*}$, then:
\[
\size_{F}^\u(f)=\size_{F}^+(\dfunc{f}{x^*})+\size_{F}^+(\dfunc{f}{\neg x^*}).
\]
\end{proposition}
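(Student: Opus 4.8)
The plan is to establish the two inequalities separately. The upper bound $\size_F^u(f)\le\size_F^+(\dfunc{f}{x_0})+\size_F^+(\dfunc{f}{\neg x_0})$ is immediate from \Cref{thm:derivative-upper-bound} applied with $X_0=\{x_0,\neg x_0\}$: whenever $\implicates{f}|_{x}\neq\emptyset$, any prime implicate $p_0\precequ x$ satisfies $x\in R(p_0)$ and hence $f(x)=0$, so in the non-degenerate case (which I will assume, and which is the one arising for range functions) both $x_0,\neg x_0\in f^{-1}(0)$, and the hypothesis $\implicates{f}=\bigcup_{x\in X_0}\implicates{f}|_x$ of \Cref{thm:derivative-upper-bound} holds by assumption. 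So the real content is the matching lower bound $\size_F^u(f)\ge\size_F^+(\dfunc{f}{x_0})+\size_F^+(\dfunc{f}{\neg x_0})$, which I would prove in terms of the communication matrix $\matkwu{f}{}$, whose columns are labeled by $\implicates{f}$.

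The first observation is that $\implicates{f}|_{x_0}$ and $\implicates{f}|_{\neg x_0}$ \emph{partition} $\implicates{f}$: their union is $\implicates{f}$ by hypothesis, and they are disjoint because a $p_0$ with $p_0\precequ x_0$ and $p_0\precequ\neg x_0$ would have every coordinate lying below two distinct stable values, forcing $p_0=(u,\dots,u)$, which is not a prime implicate of the non-constant $f$. The key lemma, which I expect to be the crux, is a ``no straddling'' statement: in the rectangle partition induced by \emph{any} protocol for $\kwgame_f^u$, every $\kwgame_f^u$-monochromatic rectangle $\rec=A\times B$ has $B$ contained entirely in $\implicates{f}|_{x_0}$ or entirely in $\implicates{f}|_{\neg x_0}$. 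Indeed, if $d$ is the color of $\rec$, then for all $a\in A$, $b\in B$ the coordinate $d$ is stable in both and $b_d\neq a_d$, hence $b_d=\neg a_d$; thus $b_d$ is a single stable value independent of $b\in B$, but $b\precequ x_0$ forces $b_d=(x_0)_d$ while $b\precequ\neg x_0$ forces $b_d=\neg(x_0)_d$, and these cannot both occur.

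Given this, I would finish as follows. Let $P$ be any protocol solving $\kwgame_f^u$. Restricting $P$ to rows $\implicants{f}$ and columns $\implicates{f}|_{x_0}$ yields a protocol $P_0$ for ${\matkwu{f}{}}|_{\implicants{f}\times\implicates{f}|_{x_0}}=:M_{x_0}$, and likewise $P_1$ for $M_{\neg x_0}$. Since every input to $\kwgame_f^u$ has Bob's coordinate in $\implicates{f}|_{x_0}$ or $\implicates{f}|_{\neg x_0}$, every non-empty leaf of $P$ is reached by some input, and by the no-straddling lemma such a leaf is reached either only by inputs relevant to $P_0$ or only by inputs relevant to $P_1$; hence $\monorect(P)\ge\monorect(P_0)+\monorect(P_1)\ge\monorect(M_{x_0})+\monorect(M_{\neg x_0})$. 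Finally, since $f(x_0)=f(\neg x_0)=0$, \Cref{prop:submatrix-of-derive-p-equiv-derivative} gives $\monorect(M_{x_0})=\monorect(\matkwu{\dfunc{f}{x_0}}{})$ and similarly for $\neg x_0$, and the chain of identities used in the proof of \Cref{cor:derivative-lower-bound} (via \Cref{thm:kw-hazard-free}, \Cref{thm:kw-hazard-free-monotone} and \Cref{thm:kw-monotone}) turns this into $\size_F^u(f)=\monorect(\kwgame_f^u)\ge\size_F^+(\dfunc{f}{x_0})+\size_F^+(\dfunc{f}{\neg x_0})$, completing the proof. The main obstacle is the no-straddling lemma together with the careful leaf-counting that converts it into the additive lower bound; everything else is bookkeeping on top of results already proved in the paper.
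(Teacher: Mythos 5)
Your proposal is correct and follows essentially the same route as the paper's proof: the same no-straddling lemma for monochromatic rectangles (proved by the same stable-coordinate contradiction), the same leaf-count decomposition across $\implicates{f}|_{x_0}$ and $\implicates{f}|_{\neg x_0}$, the same appeal to \Cref{prop:submatrix-of-derive-p-equiv-derivative} for the lower bound, and \Cref{thm:derivative-upper-bound} for the upper bound. Your explicit handling of disjointness and of the degenerate case where one of $\implicates{f}|_{x_0},\implicates{f}|_{\neg x_0}$ could be empty is slightly more careful than the paper's write-up, but the substance is identical.
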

\begin{proof}
Let $P$ be an optimal protocol for $\kwgame_f^\u$ and let $M:=\matkwu{f}{}$. Let $\rec=A\times B$ be a $\kwgame_f^\u$-monochromatic rectangle in the partition induced by $P$ (in particular, $A\subseteq \implicants{f}$ and $B\subseteq \implicates{f}$).
We first prove that $\rec\subseteq \implicants{f}\times \implicates{f}|_{x^*}$ or $\rec\subseteq \implicants{f}\times \implicates{f}|_{\neg x^*}$.
Indeed, assume towards contradiction that $B\cap \implicates{f}|_{x^*} \neq \emptyset$ and $B\cap \implicates{f}|_{\neg x^*}\neq\emptyset$. Hence, there exist $p_1\in \implicants{f}$, $p_0\in \implicates{f}|_{x^*}$ and $q_0 \in \implicates{f}|_{\neg x^*}$ such that $(p_1,p_0),(p_1,q_0)\in \rec$. Since $\rec$ is $\kwgame_f^\u$-monochromatic there must exist $d\in [n]$ such that $d\in M_{p_1,p_0} \cap M_{p_1,q_0}$. Meaning $(p_1)_d\neq (p_0)_d$ and $(p_1)_d\neq (q_0)_d$ and $(p_1)_d,(p_0)_d,(q_0)_d$ are stable. Therefore, $(p_0)_d=(q_0)_d$. This is a contradiction, since $p_0$ derives $x^*$ and $q_0$ derives $\neg x^*$, their stable coordinates must be different.

Next, we describe a protocol $P'$ for $\kwgame_f^\u$ that begins with partitioning $\implicates{f}$ to $\implicates{f}|_{x^*}$ and $\implicates{f}|_{\neg x^*}$, without increasing the number of leaves. Bob starts by partitioning $\implicates{f}$ to $\implicates{f}|_{x^*}$ and $\implicates{f}|_{\neg x^*}$, this step requires $1$ bit of communication. Then, Alice and Bob follow $P$ on the inputs in $\implicants{f}\times \implicates{f}|_{x^*}$ or $\implicants{f}\times \implicates{f}|_{\neg x^*}$, according to the first bit, and return the obtained answer.

We denote by $P_{x^*}$ the protocol $P$ played only on inputs in $\implicants{f}\times \implicates{f}|_{x^*}$, and by $P_{\neg x^*}$ the protocol $P$ played only on inputs in $\implicants{f}\times \implicates{f}|_{\neg x^*}$. Since every monochromatic rectangle in the partition induced by $P$ is contained in exactly one of these inputs subsets, we can eliminate leaves in $P_{x^*}$ and in $P_{\neg x^*}$ that the players would never reach and get:

\begin{equation}\label{eq:exact-bound-range-a}
\monorect(P')=\monorect(P_{x^*})+\monorect(P_{\neg x^*})=\monorect(P).
\end{equation}

Notice that $P_{x^*}$ and $P_{\neg x^*}$ are protocols for the submatrices of $\matkwu{f}{}$ with labels $\implicants{f}\times \implicates{f}|_{x^*}$ and $\implicants{f}\times \implicates{f}|_{\neg x^*}$, respectively. By \Cref{prop:submatrix-of-derive-p-equiv-derivative}, these submatrices are equivalent in terms of communication complexity to $\matkwu{\dfunc{f}{x^*}}{}$ and $\matkwu{\dfunc{f}{\neg x^*}}{}$. According to \Cref{thm:kw-hazard-free-monotone} and \Cref{thm:kw-monotone}, we get:
\begin{gather}
\monorect(P_{x^*})\ge \monorect(\matkwu{\dfunc{f}{x^*}}{})=\size_F^+(\dfunc{f}{x^*}), \label{eq:exact-bound-range-b} \\ \monorect(P_{\neg x^*}) \ge \monorect(\matkwu{\dfunc{f}{\neg x^*}}{}) =\size_F^+(\dfunc{f}{\neg x^*}). \label{eq:exact-bound-range-c}
\end{gather}
Following \eqref{eq:exact-bound-range-a}, \eqref{eq:exact-bound-range-b} and \eqref{eq:exact-bound-range-c}, we have:

\begin{equation}\label{eq:exact-bound-range-d}
\size_F^\u(f)=\monorect(P')\ge \size_F^+(\dfunc{f}{x^*})+ \size_F^+(\dfunc{f}{\neg x^*}).
\end{equation}

\Cref{thm:derivative-upper-bound} implies that the lower bound in \eqref{eq:exact-bound-range-d} is tight.
\end{proof}

We next present an example of an explicit function that satisfies the assumption of \Cref{prop:lower-bound-partition-x}.

\begin{definition}[Threshold and Range functions]\label{def:range-functions}
Let $k\in [n]$. A \emph{threshold-$k$} function $\booleanf{T_k^n}{n}$ is defined such that $T_k^n=1$ iff $x_1+\dots+x_n\ge k$.

Let $a,b\in [n]$ such that $0<a<b\le n$. We call $R_{a,b}^n$ an \emph{$a$-$b$-range function}, if for every $\booleanx{x}{n}$, $R_{a,b}^n(x)=1$ iff $a\leq x_1+\dots+x_n<b$. In other words, $R_{a,b}^n(x)=T^n_a(x) \land \neg T^n_b(x)$.
\end{definition}

\begin{lemma}[Prime implicants and prime implicates of range functions]\label{lem:prime-of-range-functions}
Let $\booleanf{R_{a,b}^n}{n}$ be an $a$-$b$-range function. Let $\ternaryx{p}{n}$, then:
\begin{enumerate}
    \item $p\in \implicants{R_{a,b}^n}$ if and only if $|p|_1=a$ and $|p|_0=n-(b-1)$. \label{item:prime-of-range-functions-1}
    \item $p\in \implicates{R_{a,b}^n}$ if and only if $|p|_0=0$ and $|p|_1=b$ \emph{or} $|p|_1=0$ and $|p|_0=n-(a-1)$. \label{item:prime-of-range-functions-2}
\end{enumerate}
\end{lemma}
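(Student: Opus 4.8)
The plan is to reduce everything to a statement about the \emph{weights} of resolutions. For $\ternaryx{p}{n}$ write $n_1=|\{i:p_i=1\}|$, $n_0=|\{i:p_i=0\}|$, and $n_u=n-n_0-n_1$. Since $R_{a,b}^n(x)$ depends only on $|x|$, and the weights realized by resolutions $y\in R(p)$ are exactly the integers in the interval $[n_1,\,n_1+n_u]$, we get $p\in I_1^{(R_{a,b}^n)}$ iff $[n_1,\,n_1+n_u]\subseteq[a,\,b-1]$, i.e. iff $n_1\ge a$ and $n_1+n_u\le b-1$; and $p\in I_0^{(R_{a,b}^n)}$ iff $[n_1,\,n_1+n_u]\cap[a,\,b-1]=\emptyset$. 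For the latter, since both are intervals of consecutive integers and $[a,b-1]\neq\emptyset$, disjointness is equivalent to $n_1+n_u\le a-1$ \emph{or} $n_1\ge b$ (a straddling interval would contain a point of $[a,b-1]$), and these two options are mutually exclusive, since $n_1+n_u\le a-1$ together with $n_1\ge b$ would force $b\le n_1\le a-1<b$.

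For item \eqref{item:prime-of-range-functions-1} I would apply \Cref{def:prime-implicant}: a prime implicant is an implicant $p$ with $\replace{p}{i}{u}\notin I_1^{(R_{a,b}^n)}$ for every stable coordinate $i$. Replacing a stable $1$ by $u$ sends $(n_1,n_u)\mapsto(n_1-1,n_u+1)$, which — given that $p$ is an implicant — fails to be one exactly when $n_1-1<a$, i.e. $n_1\le a$; replacing a stable $0$ by $u$ sends $(n_1,n_u)\mapsto(n_1,n_u+1)$, which fails exactly when $n_1+n_u\ge b-1$. Hence a prime implicant containing a stable $1$ satisfies $n_1=a$, and one containing a stable $0$ satisfies $n_1+n_u=b-1$. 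I then argue that a prime implicant must contain both: if $n_1=0$ then $0\ge a$ contradicts $a>0$, and if $n_0=0$ then $n_1+n_u=n\le b-1<b\le n$ is impossible. Therefore $n_1=a$ and $n_1+n_u=b-1$, so $n_u=b-1-a\ge0$ and $n_0=n-(b-1)$; the converse is a direct check (such a $p$ is an implicant, and both ``break'' conditions hold because $n_1=a\ge1$ and $n_0=n-(b-1)\ge1$).

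For item \eqref{item:prime-of-range-functions-2} I would run the analogous argument on the two disjoint families of implicates. If $n_1\ge b$: flipping a stable $1$ to $u$ keeps the implicate property (since then $n_1+n_u\le a-1$ remains false) unless $n_1-1<b$, so $n_1=b$; flipping a stable $0$ to $u$ always leaves $n_1\ge b$ intact, hence never destroys the property, forcing $n_0=0$ (note $n_0=0$ indeed lands in this family, as $n_1+n_u=n>a-1$). This yields $n_0=0$, $n_1=b$, $n_u=n-b\ge0$. If instead $n_1+n_u\le a-1$: flipping a stable $0$ to $u$ keeps the property unless $n_1+n_u+1>a-1$, so $n_1+n_u=a-1$; flipping a stable $1$ to $u$ always leaves $n_1+n_u\le a-1$ intact, forcing $n_1=0$. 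This yields $n_1=0$, $n_u=a-1$, $n_0=n-(a-1)\ge1$. Both converses are immediate verifications using $0<a<b\le n$.

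The computations are entirely routine; the only place needing care is the bookkeeping for the boundary cases — confirming that a prime implicant really must contain both a stable $0$ and a stable $1$ (so that both extremal constraints are forced), that the two families of implicates are genuinely disjoint, and that replacing a stable $0$ (resp.\ stable $1$) by $u$ never turns an ``upper'' (resp.\ ``lower'') implicate into a non-implicate. Each of these relies on the hypothesis $0<a<b\le n$ and should be spelled out explicitly.
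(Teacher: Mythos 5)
Your proof is correct and follows essentially the same route as the paper's: both reduce the question to the set (an interval) of weights realized by the resolutions of $p$ and then identify the prime elements among the implicants/implicates. The paper treats only item (2) explicitly and dismisses the minimality step with a single sentence, whereas you verify the primality condition coordinate-by-coordinate and flag the boundary checks coming from $0<a<b\le n$ — a more careful write-up of the same idea.
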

\begin{proof}
We prove Property \ref{item:prime-of-range-functions-2}, since Property \ref{item:prime-of-range-functions-1} can be proved similarly.

It is clear that if an input satisfies Property \ref{item:prime-of-range-functions-2} then it is an implicate. To prove that these are the only prime implicates consider $p\in \implicates{R_{a,b}^n}$. In order for $p$ to be an implicate of $R_{a,b}^n$, every resolution must have at most $a-1$ $1$s or at least $b$ $1$s. Therefore, we have to set at least $n-(a-1)$ coordinates of $p$ to $0$ or at least $b$ coordinates of $p$ to $1$. Naturally, the minimal implicates with respect to $\precequ$ are those with $|p|_0=0$ and $|p|_1=b$ \emph{or} $|p|_1=0$ and $|p|_0=n-(a-1)$. 
\end{proof}

\begin{lemma}\label{lem:derivatives-of-range-functions}
Let $\booleanf{R_{a,b}^n}{n}$ be an $a$-$b$-range function, then:
\[
\dfunc{R_{a,b}^n}{\bar{0}}=T_{a}^n,
\]
\[
\dfunc{R_{a,b}^n}{\bar{1}}=T_{n-(b-1)}^n.
\]
\end{lemma}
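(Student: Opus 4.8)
The plan is to compute both derivatives directly from \Cref{def:hazard-derivative}, exploiting the fact that $R_{a,b}^n(x)$ depends only on the weight $|x|$. The standing hypothesis $0<a<b\le n$ already pins down the two base points: $R_{a,b}^n(\bar{0})=0$ since $|\bar{0}|=0<a$, and $R_{a,b}^n(\bar{1})=0$ since $|\bar{1}|=n\ge b$. Hence in both cases $\dfunc{R_{a,b}^n}{x}(y)=1$ precisely when $\widetilde{R_{a,b}^n}(x+uy)$ is unstable, so it suffices, for each Boolean $y$, to decide whether $R_{a,b}^n$ is constant on the resolution set $R(x+uy)$.

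For $x=\bar{0}$, fix $y$ with $|y|=k$. The word $\bar{0}+uy$ carries the symbol $u$ exactly on the support of $y$ and $0$ elsewhere, so $R(\bar{0}+uy)$ is the set of Boolean vectors supported inside the support of $y$, and the weights occurring among them are exactly $0,1,\dots,k$. Since $R_{a,b}^n$ outputs $1$ precisely on weights in $[a,b)$, it is non-constant on $R(\bar{0}+uy)$ iff some attainable weight lies in $[a,b)$ and some does not. Weight $0$ is always attainable and yields $0$ (here one uses $a\ge 1$); a weight in $[a,b)$ is attainable iff $k\ge a$, in which case weight $a$ works (here one uses $a\le b-1$, i.e. $a<b$). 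This gives $\widetilde{R_{a,b}^n}(\bar{0}+uy)=u\iff k\ge a$, hence $\dfunc{R_{a,b}^n}{\bar{0}}=T_{a}^n$.

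The case $x=\bar{1}$ is symmetric: $\bar{1}+uy$ carries $u$ on the support of $y$ and $1$ elsewhere, so the weights occurring in $R(\bar{1}+uy)$ are $n-k,\,n-k+1,\dots,n$. Weight $n$ is always attainable and yields $0$ (using $n\ge b$); a weight in $[a,b)$ is attainable iff $n-k\le b-1$, equivalently $k\ge n-(b-1)$, in which case weight $b-1$ works (using $b-1\le n$ and $a\le b-1$). Therefore $\widetilde{R_{a,b}^n}(\bar{1}+uy)=u\iff k\ge n-(b-1)$, i.e. $\dfunc{R_{a,b}^n}{\bar{1}}=T_{n-(b-1)}^n$.

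I do not foresee a genuine obstacle: the whole argument is a weight count on resolutions, and the only points needing care are the boundary inequalities — that $a\ge 1$ forces the zero-weight resolution to evaluate to $0$, that $b\le n$ forces the full-weight resolution to evaluate to $0$, and that $a<b$ guarantees the extremal attainable weight ($a$ in the first case, $b-1$ in the second) actually lies in $[a,b)$ — all immediate from $0<a<b\le n$. (One could alternatively derive the same formulas from $R_{a,b}^n=T_a^n\wedge\neg T_b^n$ via the chain rule, \Cref{lem:chain-rule}, together with the derivatives of threshold functions, but the direct count is shorter and self-contained.)
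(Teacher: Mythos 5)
Your proof is correct, but it takes a different route from the paper's. The paper proves this lemma through the machinery it has built: it first determines the prime implicants and prime implicates of $R_{a,b}^n$ (\Cref{lem:prime-of-range-functions}), then pushes them through \Cref{lem:prime-implicants-df} and \Cref{lem:prime-implicates-df} to obtain the prime implicants and implicates of $\dfunc{R_{a,b}^n}{\bar{0}}$ and $\dfunc{R_{a,b}^n}{\bar{1}}$, and finally identifies these with those of the corresponding threshold functions. You instead evaluate the derivative pointwise from \Cref{def:hazard-derivative} by counting which weights occur among the resolutions of $\bar{0}+uy$ (namely $0,\dots,|y|$) and of $\bar{1}+uy$ (namely $n-|y|,\dots,n$), and checking when one of them lands in $[a,b)$. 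Your boundary checks ($a\ge 1$, $b\le n$, $a<b$) are exactly the points that need care, and they all hold. What each approach buys: yours is shorter, fully self-contained, and avoids a mild subtlety in the paper's argument (the containment in \Cref{lem:prime-implicants-df} is only one-sided, and the paper's final identification with $T_a^n$ is left as ``not difficult to prove''); the paper's version, on the other hand, explicitly exhibits the prime implicants and implicates of the derivatives, which is the currency used by \Cref{prop:lower-bound-partition-x} in the surrounding application and keeps the section stylistically uniform. Either proof suffices for \Cref{prop:tight-lower-bound-range}.
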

\begin{proof}
Note that $R_{a,b}^n(\bar{0})=0$ and $R_{a,b}^n(\bar{1})=0$. Considering \Cref{lem:prime-of-range-functions}, we calculate the prime implicates of $\dfunc{R_{a,b}^n}{\bar{0}}$.
Let $p\in \implicates{R_{a,b}^n}$. Following \Cref{lem:prime-implicates-df}, $p$ derives $\bar{0}$ iff $|p|_1=0, |p|_0=n-(a-1)$. Therefore, $\ternaryx{q_0}{n}$ is a prime implicate of $\dfunc{R_{a,b}^n}{\bar{0}}$ iff $|q_0|_0=n-(a-1)$. It is not difficult to prove that those are the prime implicates of $T_a^n$. 

Similarly, the prime implicates of $\dfunc{R_{a,b}^n}{\bar{1}}$ have exactly $b$ $0$s. 
\end{proof}

\begin{proposition}[Tight lower bound for range functions]\label{prop:tight-lower-bound-range}
Let $\booleanf{f}{n}$ be an $a$-$b$-range function, then:
\[
\size_F^\u(f)=\size_F^+(T_a^n)+\size_F^+(T_{n-(b-1)}^n).
\]
\end{proposition}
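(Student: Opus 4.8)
The plan is to derive this proposition by a direct appeal to \Cref{prop:lower-bound-partition-x}, instantiated with the antipodal pair $x_0=\bar{0}$ and $\neg x_0=\bar{1}$, followed by rewriting the two hazard-derivatives that appear there via \Cref{lem:derivatives-of-range-functions}. The only nontrivial thing to verify is therefore the hypothesis of \Cref{prop:lower-bound-partition-x}, namely that $\implicates{R_{a,b}^n}=\implicates{R_{a,b}^n}|_{\bar{0}}\cup\implicates{R_{a,b}^n}|_{\bar{1}}$.

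To check this, I would recall from \Cref{lem:prime-of-range-functions} that a ternary point $p$ is a prime implicate of $R_{a,b}^n$ precisely when either $n_0=0$ and $n_1=b$, or $n_1=0$ and $n_0=n-(a-1)$, where $n_0$ and $n_1$ count the stable $0$'s and $1$'s of $p$. In the first case $p$ has no stable $0$ coordinate, hence $p\precequ\bar{1}$, i.e. $p\in\implicates{R_{a,b}^n}|_{\bar{1}}$; in the second case $p$ has no stable $1$ coordinate, hence $p\precequ\bar{0}$, i.e. $p\in\implicates{R_{a,b}^n}|_{\bar{0}}$. Since these two families exhaust all prime implicates, the required equality holds (only the $\subseteq$ inclusion is needed, the reverse being immediate). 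A small point worth stating at the outset is that $R_{a,b}^n$ is non-constant for $0<a<b\le n$, since it rejects both $\bar{0}$ and $\bar{1}$ but accepts every weight-$a$ input, so \Cref{prop:lower-bound-partition-x} genuinely applies.

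With the hypothesis in hand, \Cref{prop:lower-bound-partition-x} yields $\size_F^u(R_{a,b}^n)=\size_F^+(\dfunc{R_{a,b}^n}{\bar{0}})+\size_F^+(\dfunc{R_{a,b}^n}{\bar{1}})$, and substituting $\dfunc{R_{a,b}^n}{\bar{0}}=T_a^n$ and $\dfunc{R_{a,b}^n}{\bar{1}}=T_{n-(b-1)}^n$ from \Cref{lem:derivatives-of-range-functions} gives exactly the claimed identity. I do not expect a genuine obstacle here: all the real content sits in the already-established \Cref{lem:prime-of-range-functions}, \Cref{lem:derivatives-of-range-functions}, and \Cref{prop:lower-bound-partition-x}, and the only glue needed is the elementary observation that a prime implicate with no stable $0$'s derives $\bar{1}$ while one with no stable $1$'s derives $\bar{0}$, which follows straight from the definitions of $\precequ$ and of $P_0^{(f)}|_x$.
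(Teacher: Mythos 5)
Your proposal is correct and follows exactly the paper's route: verify the covering hypothesis $\implicates{R_{a,b}^n}=\implicates{R_{a,b}^n}|_{\bar{0}}\cup\implicates{R_{a,b}^n}|_{\bar{1}}$ via \Cref{lem:prime-of-range-functions}, then apply \Cref{prop:lower-bound-partition-x} and substitute the derivatives from \Cref{lem:derivatives-of-range-functions}. You merely spell out in more detail the (immediate) observation that a prime implicate with no stable $0$'s derives $\bar{1}$ and one with no stable $1$'s derives $\bar{0}$, which the paper leaves implicit.
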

\begin{proof}
 \Cref{lem:prime-of-range-functions} implies the following for $x^*:=\bar{0}$:
\[
\implicates{R_{a,b}^n}=\implicates{R_{a,b}^n}|_{x^*} \cup \implicates{R_{a,b}^n}|_{\neg x^*}.
\]
We derive the claim following \Cref{prop:lower-bound-partition-x} and \Cref{lem:derivatives-of-range-functions}.
\end{proof}

\section{Composition of Boolean functions in hazard-free setting}\label{sec:composition-hazard-free}
In this section, we prove \Cref{prop:lower-bound-direct-sum}, which establishes a lower bound on the communication complexity of the hazard-free KW game for the composition of a monotone function with an arbitrary inner function. We follow the proof idea of the next lemma:
\begin{lemma}[{\cite[Lemma 4]{KRW95}}]\label{lem:composition-of-monotone-relations}
Let $\booleanf{f}{n}$ and $\booleanf{g}{m}$ be monotone Boolean functions. Then:
\[
\kwgame_{f}^+\otimes\kwgame_{g}^+\leq \kwgame^+_{f\diamond g}.
\]
\end{lemma}
The proof of \Cref{lem:composition-of-monotone-relations} uses two properties of $\kwgame^+_{f\diamond g}$: First, the game can be played equivalently on the prime implicants and the prime implicates of the function, and those of $f\diamond g$ are constructed of those of $f$ and those of $g$. Second, for an input $\ternaryx{X,Y}{n\times m}$, a valid answer $(i,j)$ to $\kwgame^+_{f\diamond g}$ must be a row $i$ for which $g(X_i)=1, g(Y_i)=0$ and so $i$ is a valid answer to the KW game of $f$ and $j$ is a valid answer to the KW game of $g$.
The lower bound is then derived through a reduction to the direct sum of functions, as shown in \cite[Corollary 5]{KRW95}. We prove that these properties also hold in the hazard-free KW game of $f\diamond g$ if $f$ is a monotone function, while $g$ can be any function. 

We prove in \Cref{prop:prime-implicants-composition} that the prime implicants (and implicates) of the composed function, $f\diamond g$, can be derived from the prime implicants (and implicates) of $f$ and $g$. 

\begin{proposition}[Prime implicants of composition]\label{prop:prime-implicants-composition}
Let $\booleanf{f}{n}$ and $\booleanf{g}{m}$ be Boolean functions. Then $P\in \implicants{f\diamond g} \subseteq \{0,\u,1\}^{nm}$ if and only if:
\begin{enumerate}
    \item $\tilde{g}[P] \in \implicants{f}$. \label{item:prime-implicants-composition-1}
    \item For every $i\in[n]$, if $\tilde{g}(P_i)=b\in\{0,1\}$ then $P_i\in P_{b}^{(g)}$. Otherwise, $P_i=\u^m$. \label{item:prime-implicants-composition-2}
\end{enumerate}
\end{proposition}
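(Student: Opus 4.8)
Throughout write $h:=f\diamond g$. The plan is to first reduce the statement to the value identity
\[
\tilde{h}(P)=\tilde{f}\bigl(\tilde{g}[P]\bigr)\qquad\text{for every }P\in\{0,u,1\}^{nm},
\]
and then extract the characterization of $\implicants{h}$ from it by relaxing one entry of $P$ at a time. The identity itself is a direct computation: in a block composition the inner copies of $g$ act on the disjoint rows of $P$, so $R(P)=R(P_1)\times\dots\times R(P_n)$ and hence $\{g[z]:z\in R(P)\}=\prod_{i=1}^n\{g(y):y\in R(P_i)\}$; for each row $\{g(y):y\in R(P_i)\}$ equals $\{\tilde{g}(P_i)\}$ when $\tilde{g}(P_i)$ is stable and $\{0,1\}$ when $\tilde{g}(P_i)=u$, i.e.\ it equals $R(\tilde{g}(P_i))$ by definition of the hazard-free extension. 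Thus $\{g[z]:z\in R(P)\}=R(\tilde{g}[P])$, so for $b\in\{0,1\}$ we get $\tilde{h}(P)=b$ iff $f(w)=b$ for all $w\in R(\tilde{g}[P])$, i.e.\ iff $\tilde{f}(\tilde{g}[P])=b$.

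For the ``if'' direction, assume (1) and (2). Since $\tilde{g}[P]\in\implicants{f}\subseteq\tilde{f}^{-1}(1)$, the identity gives $\tilde{h}(P)=1$, so $P\in I^{(h)}_{1}$. For primality, let $(i,j)$ be any stable entry of $P$. By (2), $\tilde{g}(P_i)=b\in\{0,1\}$ and $P_i\in P_{b}^{(g)}$; primality of $P_i$ (a prime implicant or prime implicate of $g$) gives $\tilde{g}(\replace{P_i}{j}{u})=u$, so the matrix obtained from $P$ by relaxing entry $(i,j)$ has $\tilde{g}$-image $\replace{\tilde{g}[P]}{i}{u}$. Coordinate $i$ is the stable value $b$ in $\tilde{g}[P]$, so primality of $\tilde{g}[P]\in\implicants{f}$ forces $\tilde{f}(\replace{\tilde{g}[P]}{i}{u})=u$, i.e.\ $\tilde{h}$ evaluates to $u$ on the relaxed matrix. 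Hence $P\in\implicants{h}$.

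For the ``only if'' direction, assume $P\in\implicants{h}$, so $\tilde{h}(P)=1$ and $\tilde{g}[P]\in I^{(f)}_{1}$ by the identity. We obtain both conditions by contraposition against primality of $P$, using $\precequ$-monotonicity of $\tilde{g}$. For (2): fix $i$. If $\tilde{g}(P_i)=u$ and $P_i$ had a stable entry $j$, then $\replace{P_i}{j}{u}\precequ P_i$ gives $\tilde{g}(\replace{P_i}{j}{u})\precequ u$, hence $=u$; relaxing $(i,j)$ then leaves $\tilde{g}[P]$, and so $\tilde{h}$, unchanged, contradicting primality of $P$. Thus $P_i=u^m$. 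If $\tilde{g}(P_i)=b$ is stable, then $P_i\in I^{(g)}_{b}$, and for each stable entry $j$ of $P_i$ monotonicity gives $\tilde{g}(\replace{P_i}{j}{u})\in\{u,b\}$; the value $b$ is ruled out by the same relaxation argument, so $\tilde{g}(\replace{P_i}{j}{u})=u$, i.e.\ $P_i\in P_{b}^{(g)}$. For (1): it remains to upgrade $\tilde{g}[P]\in I^{(f)}_{1}$ to $\tilde{g}[P]\in\implicants{f}$. For a stable coordinate $i$ of $\tilde{g}[P]$, condition (2) gives $P_i\in P_{b}^{(g)}$ with $b=\tilde{g}(P_i)$; since $g$ is non-constant, $P_i$ has a stable entry $j$, and primality of $P_i$ gives $\tilde{g}(\replace{P_i}{j}{u})=u$, so relaxing $(i,j)$ turns $\tilde{g}[P]$ into $\replace{\tilde{g}[P]}{i}{u}$; primality of $P$ then forces $\tilde{f}(\replace{\tilde{g}[P]}{i}{u})=u$. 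As this holds for every stable $i$, $\tilde{g}[P]\in\implicants{f}$.

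The main obstacle is the value identity $\tilde{h}(P)=\tilde{f}(\tilde{g}[P])$: this is the one point that genuinely uses that the inner blocks occupy disjoint variables (composition of hazard-free objects is not hazard-free in general), so it should be set up carefully, and it already subsumes the easy half of the argument. The remainder is bookkeeping whose only delicate points are the systematic use of $\precequ$-monotonicity of $\tilde{g}$ to control $\tilde{g}(P_i)$ under a single-entry relaxation, the handling of rows equal to $u^m$, and the degenerate case of a constant inner function $g$ (where the statement as written fails and must be excluded, as it already is in the intended applications). Finally, the companion statement for prime implicates follows by applying the proposition to $(\neg f)\diamond g=\neg h$, since the prime implicates of any function are the prime implicants of its negation and $\implicants{\neg f}=\implicates{f}$, together with the fact that condition (2) is invariant under swapping the roles of $0$ and $1$ in $g$.
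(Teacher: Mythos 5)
Your proof is correct and follows essentially the same route as the paper: you first establish $\widetilde{f\diamond g}=\tilde f\diamond\tilde g$ via $R(\tilde g[P])=\{g[Z]:Z\in R(P)\}$ (this is exactly the paper's \Cref{lem:hazard-free-extension-composition} and its auxiliary lemma), and then both directions proceed by relaxing a single stable entry of $P$ and invoking primality, as in the paper. Your write-up is in fact slightly more careful in two places the paper compresses: you explicitly verify in the ``only if'' direction that $\tilde g[P]$ is a \emph{prime} implicant of $f$ (the paper asserts this ``follows from \Cref{def:prime-implicant}''), and you correctly observe that a constant inner function $g$ must be excluded --- an implicit hypothesis of the statement, since for constant $g$ the all-$u$ matrix is a prime implicant of the (constant) composition while $\tilde g[P]$ need not be a prime implicant of $f$.
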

In order to prove \Cref{prop:prime-implicants-composition} we would need the following lemma, which shows that the hazard-free extension of the block composition of two Boolean functions is the same as the block composition of their hazard-free extensions. We define the block composition of ternary functions exactly as in the Boolean case.
\begin{restatable}[Hazard-free extension of composition]{lemma}{lemhazardcompositionnatural}\label{lem:hazard-free-extension-composition}
Let $\booleanf{f}{n}$ and $\booleanf{g}{m}$ be Boolean functions. Then:
\[
\widetilde{f\diamond g} = \tilde{f} \diamond \tilde{g}.
\]
\end{restatable}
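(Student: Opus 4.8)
The plan is to prove the two ternary functions $\widetilde{f\diamond g}$ and $\tilde f\diamond\tilde g$ agree on every input $X\in\{0,u,1\}^{nm}$ by a direct case analysis on the output value in $\{0,u,1\}$, using only the definition of the hazard-free extension (\Cref{def:hazard-free-extension}) together with the fact that a resolution of $X$ is a Boolean matrix and that each $\tilde g(X_i)$ is the value $g$ takes on all resolutions of the $i$-th row $X_i$ when that value is constant, and $u$ otherwise. First I would fix $X$ and write $y:=\tilde g[X]\in\{0,u,1\}^n$, so that $(\tilde f\diamond\tilde g)(X)=\tilde f(y)$. The key combinatorial observation I would isolate as the crux is the following statement about resolutions: for every Boolean resolution $z\in R(y)$ there is a Boolean resolution $Z\in R(X)$ with $g[Z]=z$, and conversely for every $Z\in R(X)$ the Boolean vector $g[Z]$ lies in $R(y)$. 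The forward direction holds because $y_i=u$ exactly when $X_i$ has two resolutions on which $g$ differs, so we may pick a resolution of $X_i$ realizing the desired bit $z_i$, while for $y_i\in\{0,1\}$ every resolution of $X_i$ already gives $g$-value $y_i=z_i$; the backward direction is immediate since $z_i=g(Z_i)$ is one of the values $g$ takes on a resolution of $X_i$, hence $\preceq$-above $y_i$. Thus $\{(f\diamond g)(Z):Z\in R(X)\}=\{f(z):z\in R(y)\}$ as subsets of $\{0,1\}$.

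Given this equality of value-sets, the three cases of \Cref{def:hazard-free-extension} fall out mechanically. If $\tilde f(y)=1$ then $f(z)=1$ for all $z\in R(y)$, hence $(f\diamond g)(Z)=1$ for all $Z\in R(X)$, so $\widetilde{f\diamond g}(X)=1$; symmetrically for $0$; and if $\tilde f(y)=u$ then $R(y)$ contains resolutions $z,z'$ with $f(z)=0\neq1=f(z')$, which lift to $Z,Z'\in R(X)$ with $(f\diamond g)(Z)=0$ and $(f\diamond g)(Z')=1$, giving $\widetilde{f\diamond g}(X)=u$. Since the three cases are exhaustive and mutually exclusive, we conclude $\widetilde{f\diamond g}(X)=\tilde f\diamond\tilde g(X)$ for all $X$.

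The only real obstacle is the forward direction of the resolution lifting when some $y_i=u$: one must be careful that the choices of a resolution for distinct rows $X_i$ are independent (they are, since $R(X)=\prod_i R(X_i)$ and $g$ is applied row-wise), and that when $y_i=u$ the function $g$ genuinely attains both $0$ and $1$ on $R(X_i)$, which is precisely the definition of $\tilde g(X_i)=u$. Everything else is bookkeeping. I would state the resolution-lifting fact as a one-line sub-claim, prove it in two sentences, and then close with the three-case argument; no computation beyond unwinding definitions is needed, and in particular I would not need \Cref{prop:prime-implicants-composition} or the chain rule here.
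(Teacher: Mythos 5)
Your proposal is correct and follows essentially the same route as the paper: the ``resolution-lifting'' sub-claim you isolate is precisely the paper's auxiliary \Cref{lem:hazard-free-extension-composition-aux}, stating $R(\tilde{g}[P])=\{g[Z]:Z\in R(P)\}$, proved there by the same row-wise choice of resolutions in the forward direction and the same $\precequ$-comparison in the backward direction. The concluding case analysis on the output value matches the paper's chain of equivalences, so there is nothing to add.
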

We begin by proving \Cref{prop:prime-implicants-composition} and later prove \Cref{lem:hazard-free-extension-composition}.
\begin{proof}[Proof of \Cref{prop:prime-implicants-composition}]
Following \Cref{lem:hazard-free-extension-composition}, we refer to $\tilde{f}\diamond\tilde{g}$ and $\widetilde{f\diamond g}$ interchangeably.

$\subseteq:$ Let $P\in \implicants{f\diamond g}$. Clearly, $\widetilde{f\diamond g}(P)=\tilde{f}\diamond\tilde{g}(P)=\tilde{f}(\tilde{g}[P])=1$. Condition \ref{item:prime-implicants-composition-1} now follows from \Cref{def:prime-implicant}.

Let $i\in [n]$. We prove that condition \ref{item:prime-implicants-composition-2} is satisfied. Let $j\in [m]$ such that $P_{i,j}$ is stable. $P$ is a prime implicant of $f\diamond g$, hence, $\tilde{f}\diamond\tilde{g}(\replace{P}{i,j}{\u})=\u$. We conclude that $\tilde{g}(P_i)=b\in\{0,1\}$ and $\tilde{g}(\replace{P_i}{j}{\u})=\u$, otherwise, the output of $\tilde{f}\diamond\tilde{g}$ would not change. Since $P_{i,j}$ is stable if and only if $(P_i)_j$ is stable, we get that every $P_i$ with some stable coordinate is a prime implicant (if $\tilde{g}(P_i)=1$) or implicate (if $\tilde{g}(P_i)=0$) of $g$. 

$\supseteq:$ Let $P\in \{0,\u,1\}^{nm}$  that satisfies conditions \ref{item:prime-implicants-composition-1} and \ref{item:prime-implicants-composition-2}. Condition \ref{item:prime-implicants-composition-1} implies that $\tilde{f}\diamond \tilde{g}(P)=\tilde{f}(\tilde{g}[P])=1$, and so $P$ is an implicant of $f\diamond g$. Let $(i,j)\in [n] \times [m]$ such that $P_{i,j}$ is stable. Denote $P'=\replace{P}{(i,j)}{\u}$. We conclude from condition \ref{item:prime-implicants-composition-2} that there exists $b\in\{0,1\}$ such that $P_i\in P_{b}^{(g)}$. Therefore, $\tilde{g}(P'_i)=\u$ and $\tilde{f}\diamond \tilde{g}(P')=\tilde{f}(\tilde{g}[P'])=\tilde{f}(\replace{\tilde{g}[P]}{i}{\u})=\u$ since $\tilde{g}[P]$ is a prime implicant of $f$.
\end{proof}

For the proof of \Cref{lem:hazard-free-extension-composition} we require the next simple lemma.

\begin{restatable}{lemma}{lemhazardfreeextension}\label{lem:hazard-free-extension-composition-aux}
Let $\booleanf{g}{m}$ be a ternary function. Let $\ternaryx{P}{nm}$ be a ternary $n\times m$ matrix. Then:
\[
R(\tilde{g}[P])=\{g[Z]: Z\in R(P)\}.
\]    
\end{restatable}
\begin{proof}
$\subseteq:$ Let $z\in R(\tilde{g}[P])=R((\tilde{g}(P_1),\dots,\tilde{g}(P_n)))$. Let $i\in [n]$. We show that there exists $Z_i\in R(P_i)$ such that $g(Z_i)=z_i$. Indeed, if $\tilde{g}(P_i)=z_i$, we can choose any $Z_i\in R(P_i)$.
Otherwise, $\tilde{g}(P_i)=\u$. Therefore, we can choose $Z_i\in R(P_i)$ such that $g(Z_i)=z_i$.
Hence, we get that the matrix $Z$, whose rows are $Z_1,\dots,Z_n$, satisfies that $Z\in R(P)$ and $g[Z]=z$ as required.

$\supseteq:$ Let $Z\in R(P)$. Assume towards contradiction that $g[Z]\not\in R(\tilde{g}[P])$. Since $g[Z]$ is Boolean, there must exist some $b\in \{0,1\}$ and $i\in [n]$ such that $(\tilde{g}[P])_i=\tilde{g}(P_i)=b$ and $(g[Z])_i=g(Z_i)=\neg b$. As $Z_i\in R(P_i)$, by \Cref{def:hazard-free-extension} we have that $g(Z_i)=b$, a contradiction.
\end{proof}

\begin{proof}[Proof of \Cref{lem:hazard-free-extension-composition}]
Let $\ternaryx{P}{nm}$ be and $b\in \{0,1\}$. By \Cref{def:hazard-free-extension} and \Cref{lem:hazard-free-extension-composition-aux}, we have:
\begin{align*}
&\widetilde{f\diamond g}(P)=b \iff \\
    &\forall Z\in R(P), \quad f\diamond g(Z)=b \iff \\
    &\forall Z\in R(P), \quad f(g[Z])=b \iff \\
    &\forall z\in R(\tilde{g}[P]), \quad f(z)=b \iff \\ 
    &\tilde{f}(\tilde{g}[P])=b \iff \\
    &\tilde{f}\diamond \tilde{g}(P)=b. \qedhere
\end{align*}
\end{proof}

Given \Cref{prop:prime-implicants-composition}, if $\implicants{f} \subseteq \{1,\u\}^n$ and $\implicates{f} \subseteq \{0,\u\}^n$, a valid answer to the hazard-free KW game of $f\diamond g$ on the inputs $\ternaryx{X,Y}{n m}$ must lie in a row $i$ such that $X_i\in \implicants{g}$ and $Y_i \in \implicates{g}$. Hence, we have $\tilde{g}(X_i)=1$ and $\tilde{g}(Y_i)=0$, without requiring any additional properties from $g$. 
This is exactly the case if $f$ is monotone. Therefore, we obtain an analog of \cite[Lemma 4]{KRW95}:
\begin{restatable}[Direct sum is reducible to $\kwgame_{f\diamond g}^\u$]{lemma}{lemreductionofdirectsumtocomp}\label{lemma:reduction-of-direct-sum-to-comp}
    Let $\booleanf{f}{n}$ be a \emph{monotone} Boolean function, and let $\booleanf{g}{m}$ be a Boolean function. Then, it holds that:
    \[
    \kwgame_{f}^{+} \otimes \kwgame_{g}^{\u}\le \kwgame_{f\diamond g}^\u.
    \]
\end{restatable}

\begin{proof}
Let $\phi:\{0,\u,1\}^n\times \{0,\u,1\}^m \rightarrow \{0,\u,1\}^{nm}$ be the following function: given $p_f\in \{0,\u,1\}^n$ and $p_g\in \{0,\u,1\}^m$, $\phi(p_f,p_g)$ returns $P\in \{0,\u,1\}^{nm}$ such that:
\[
\forall i\in[n], \quad P_i:=\begin{cases}
    p_g, & (p_f)_i\neq \u, \\
    \u^m, & otherwise.
\end{cases}
\]
\newline Let $\psi:[nm]\rightarrow [n]\times[m]$ be the following function: given $k\in[nm]$, $\psi(k)=(\left\lfloor \frac{k-1}{m} \right\rfloor+1, k -m\cdot (\left\lfloor \frac{k-1}{m} \right\rfloor))$.
We next show that every $k\in[nm]$, $(p_f^{(1)}, p_g^{(1)}) \in \implicants{f} \times \implicants{g}$ and $(p_f^{(0)}, p_g^{(0)})\in \implicates{f} \times \implicates{g}$ satisfy:
\[
(\phi(p_f^{(1)}, p_g^{(1)}), \phi(p_f^{(0)}, p_g^{(0)}), k) \in \kwgame_{f\diamond g}^\u \Rightarrow ((p_f^{(1)}, p_g^{(1)}), (p_f^{(0)}, p_g^{(0)}),\psi(k)) \in \kwgame_{f}^{+} \otimes \kwgame_{g}^{\u}.
\]
\Cref{cor:prime-monotone} implies that $p_f^{(1)}\in\{1,\u\}^n$. From 
\Cref{prop:prime-implicants-composition}, it follows that $\phi(p_f^{(1)}, p_g^{(1)})\in \implicants{f\diamond g}$. Similarly, $\phi(p_f^{(0)}, p_g^{(0)})\in\implicates{f\diamond g}$. Let $k\in[nm]$ be a valid answer for $\kwgame^\u_{f\diamond g}$ on the aforementioned inputs. From the definition of hazard-free $KW$ game, $k$ is a coordinate where $\phi(p_f^{(1)}, p_g^{(1)})$ and $\phi(p_f^{(0)}, p_g^{(0)})$ are both \emph{stable} and \emph{different}. Hence, from the definition of $\phi$, $i=\left\lfloor \frac{k-1}{m} \right\rfloor+1$ is an index of a row  that satisfies $1=(p_f^{(1)})_i\neq (p_f^{(0)})_i=0$. Moreover, the entry corresponding to $k$ in the $i$'th row is $j=k- m(i-1)$, and it must be the case that $(p_g^{(1)})_j\neq (p_g^{(0)})_j$ and they are both stable. Therefore, $\psi(k)$ is a valid answer for $\kwgame_{f}^{+} \otimes \kwgame_{g}^{\u}$, as required.
\end{proof}

We are now ready to prove \Cref{prop:lower-bound-direct-sum}, which we restate to ease the reading.

\proplowerboundviadirectsum*
\begin{proof}
The claim follows from \Cref{lemma:reduction-of-direct-sum-to-comp} and the rank lower bound method in communication complexity (see \cite[Lemma 1.28]{KN97}).
\end{proof}

\Cref{prop:lower-bound-direct-sum} provides a step toward advancing hazard-free lower bound techniques, and we hope it encourages further research in this direction. As an application of \Cref{prop:lower-bound-direct-sum}, we prove a lower bound on the hazard-free formula depth of the set covering function composed with the multiplexer function. The set covering function is also used in \cite[Theorem 13]{KRW95}, where it is composed with itself.
\begin{definition}[Set covering $MC_{k,n}$ {\cite[Example 5.2.1]{Kar89}}]\label{def:set-covering}
Given a bipartite graph $G=(U\cup V,E)$ with $|U|=|V|=n$ we have $MC_{k,n}(G)=1 \iff$ there exists a $U'\subseteq U$ such that $|U'|=k$ and every node in $V$ has a neighbor in $U'$.
\end{definition}

It is not hard to see that $MC_{k,n}$ is monotone (in the edge set of $G$). Moreover, the disjointness  function reduces to $\kwgame_{MC_{k,n}}^+$.
\begin{definition}[The disjointness function]\label{def:disjointness-function}
Let $\mathbf{P}([n])$ denote all subsets of $[n]$, and let $\mathbf{P}_l([n])$ denote all subsets of $[n]$ of size $l$:

\begin{itemize}
    \item Let $I_n:\mathbf{P}([n]) \times \mathbf{P}([n]) \rightarrow \{0,1\}$ where $I_n(S, T) = 0 \iff S \cap T = \emptyset$.
    \item For $l \le n/2$, let $I_{l,n}: \mathbf{P}_l([n]) \times \mathbf{P}_l([n]) \rightarrow \{0,1\}$ where $I_{l,n}(S, T) = 0 \iff S \cap T = \emptyset$.
\end{itemize}
\end{definition}

\begin{theorem}[$I_{k,n}$ is reducible to set covering {\cite{Raz90}, \cite[Theorem 5.2.1]{Kar89}}]\label{thm:disjointness-to-set-covering}
Let $k=c\log n$ for some suitable $c > 0$. Then $I_{k,n}\le \kwgame^+_{MC_{k,n}}$. 
\end{theorem}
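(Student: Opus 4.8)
By the definition of reducibility of communication problems, proving $I_{k,n}\le \kwgame^{+}_{MC_{k,n}}$ amounts to exhibiting three maps: an encoding $\phi_1$ sending a $k$-subset $S\in\mathbf{P}_k([n])$ to a $1$-input of $MC_{k,n}$ (a bipartite graph on $U\cup V$ with $|U|=|V|=n$ whose minimum cover has size $k$), an encoding $\phi_2$ sending $T\in\mathbf{P}_k([n])$ to a $0$-input (here, a minimal edge set whose removal from the complete bipartite graph destroys every cover of size $k$, i.e.\ a prime implicate), and a decoding $\psi$ from the edge set $U\times V$ to $\{0,1\}$, so that every edge $e$ present in $\phi_1(S)$ but absent from $\phi_2(T)$ --- that is, every legal answer of $\kwgame^{+}_{MC_{k,n}}$ on $(\phi_1(S),\phi_2(T))$ --- satisfies $\psi(e)=1$ exactly when $S\cap T\neq\emptyset$. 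The plan is to take $\phi_1(S)$ to be the star forest whose $k$ centers are exactly $S\subseteq U$, with every $v\in V$ joined to a single center chosen by a fixed rule depending on $S$ (the rule taken onto $S$, so that $S$ is the unique minimum cover and $\phi_1(S)$ is a prime implicant); and to take $\phi_2(T)$ to be the anti-cover obtained from a $T$-parametrized family of obstruction sets $\{B_v\subseteq U\}_{v\in V}$ by deleting, for each $v$, the edges from $v$ to $B_v$. The $B_v$ are to be designed so that every $k$-subset of $U$ is contained in some $B_v$ (so that no size-$k$ set covers, hence $\phi_2(T)$ is a $0$-input) and so that, for a center $u$, membership $u\in B_v$ records whether $u\in T$. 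The decoding $\psi$ then takes a disagreement edge $(u,v)$ --- which by construction has $u\in S$ and $u\in B_v$ --- and outputs $1$ iff $u\in T$.

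Correctness splits into an easy part and a hard part. The easy part: $\phi_1(S)$ is a $1$-input by construction, and is a prime implicant once the center-assignment is onto $S$. The hard part: $\phi_2(T)$ must really be a $0$-input, i.e.\ the $n$ obstruction sets $\{B_v\}_{v\in V}$ must cover all $\binom{n}{k}$ of the $k$-subsets of $U$ ``from inside''. A set $B_v$ of size $a$ covers only $\binom{a}{k}$ of them, so one needs $n\binom{a}{k}$ to be of order at least $\binom{n}{k}$, which forces $a=n-\Theta(n)$ with the implied constant controlled by the ratio $k/\log n$; a short probabilistic argument (or an explicit construction) yields such a family precisely when $k=c\log n$ for a suitably chosen constant $c>0$, and this is where the hypothesis on $k$ enters. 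It then remains to check that $\psi$ is well defined --- equivalently, that the disagreement edges arising from disjoint pairs $(S,T)$ form a set disjoint from those arising from intersecting pairs --- which is a routine verification once the obstruction sets are pinned down to encode $T$ as above.

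I expect the obstruction-family design to be the main obstacle: it must simultaneously be a valid non-cover (which the counting above shows is only barely achievable, and which pins down both $k=O(\log n)$ and the correct leading constant) and a faithful encoding of $T$ that keeps $\psi$ consistent across all inputs. This is exactly the construction of \cite[Theorem 5.2.1]{Kar89} (see also \cite{Raz90}); in the write-up I would state the reduction and invoke that result, including the sketch above for the reader's intuition. Together with the standard fact that the disjointness matrix on the $k$-subsets of $[n]$ (with $n\ge 2k$) has full real rank $\binom{n}{k}$, this reduction is precisely what \Cref{prop:lower-bound-direct-sum} consumes to yield \Cref{prop-intro:comp-monotone-mux}.
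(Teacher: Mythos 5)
The paper offers no proof of this statement---it is imported directly from \cite{Raz90} and \cite[Theorem 5.2.1]{Kar89}---and your write-up ultimately does the same, deferring the actual obstruction-family construction to that citation. Your structural sketch of the reduction (star-forest prime implicants, obstruction sets $B_v$ covering all $k$-subsets of $U$, and the counting that pins $k=\Theta(\log n)$) is consistent with the cited construction, so this is essentially the paper's approach plus useful intuition for the reader.
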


\cite{IK23} proved that the $\text{subcube-intersect}_n$ reduces to $\kwgame_{\mux_n}^\u$.

\begin{definition}[subcube-intersect {\cite[p.15]{IK23}}]
\sloppy Let $\text{subcube-intersect}_n:\{0,\u,1\}^n \times \{0,\u,1\}^n\rightarrow \{0,1\}$ be a \emph{function} such that $\text{subcube-intersect}(x,y)=1$ if and only if $x$ and $y$ have a common resolution.
\end{definition}

\begin{lemma}[{\cite[Lemma 5.10]{IK23}}]\label{lem:reduction-sub-inter-mux}
The $\text{subcube-intersect}_n$ communication problem reduces to the communication problem $\kwgame_{\mux_n}^\u$ with no extra cost.
\end{lemma}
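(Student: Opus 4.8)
The plan is to exhibit an explicit reduction, i.e.\ local encoding maps $\phi_1,\phi_2$ and a local decoding map $\psi$ as in the definition of $\le$. The guiding idea is to let each player treat her ternary string as the \emph{selector block} of the multiplexer and then hard-wire the data block so that the input becomes a legal input of the corresponding side of $\kwgame_{\mux_n}^u$. Concretely, writing $bin(S):=\{bin(s'):s'\in S\}$ for a set $S$ of Boolean strings, given $x\in\{0,u,1\}^n$ set $\phi_1(x):=(x,d^A)\in\{0,u,1\}^{n+2^n}$ where $d^A_j:=1$ if $j\in bin(R(x))$ and $d^A_j:=u$ otherwise; given $y\in\{0,u,1\}^n$ set $\phi_2(y):=(y,d^B)$ where $d^B_j:=0$ if $j\in bin(R(y))$ and $d^B_j:=u$ otherwise. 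For the decoding, let $\psi(k):=0$ if $k\le n$ (the answer points inside the selector block) and $\psi(k):=1$ if $k>n$ (the answer points inside the data block).

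First I would verify that $\phi_1(x)$ and $\phi_2(y)$ are always legal inputs for the two sides. Here one uses the description of $\widetilde{\mux_n}$: $\widetilde{\mux_n}(s,d)=b\in\{0,1\}$ exactly when, for every resolution $s'$ of $s$, the bit $d_{bin(s')}$ is the stable value $b$. Since $\phi_1(x)$ sets \emph{every} data bit indexed by a resolution of $x$ to the stable value $1$, we get $\widetilde{\mux_n}(\phi_1(x))=1$; symmetrically $\widetilde{\mux_n}(\phi_2(y))=0$. Thus $\phi_1(x)$ is a valid Alice input and $\phi_2(y)$ a valid Bob input regardless of $x,y$, so $\phi_1,\phi_2$ have the required codomains.

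Next I would carry out the (short) case analysis on a valid answer $k$ of $\kwgame_{\mux_n}^u$ on the pair $(\phi_1(x),\phi_2(y))$; by definition $k$ is a coordinate where the two strings are stable and differ. If $k\le n$, then $x_k,y_k$ are stable and $x_k\ne y_k$, so $x$ and $y$ have no common resolution and $\text{subcube-intersect}_n(x,y)=0=\psi(k)$. If $k>n$, say $k$ indexes data bit $j$, then $d^A_j$ stable forces $j\in bin(R(x))$ and $d^B_j$ stable forces $j\in bin(R(y))$; since $bin$ is a bijection this gives $R(x)\cap R(y)\ne\emptyset$, whence $\text{subcube-intersect}_n(x,y)=1=\psi(k)$. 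In both cases $\psi(k)=\text{subcube-intersect}_n(x,y)$, which is precisely the implication $(\phi_1(x),\phi_2(y),k)\in\kwgame_{\mux_n}^u\Rightarrow(x,y,\psi(k))\in\text{subcube-intersect}_n$, establishing the reduction. Since $\phi_1,\phi_2,\psi$ involve no communication, any protocol (tree) for $\kwgame_{\mux_n}^u$ becomes a protocol for $\text{subcube-intersect}_n$ of the same depth and the same number of leaves, which is the ``no extra cost'' statement; in particular $CC(\text{subcube-intersect}_n)\le CC(\kwgame_{\mux_n}^u)$ and $\monorect(\text{subcube-intersect}_n)\le\monorect(\kwgame_{\mux_n}^u)$.

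The only place where care is genuinely needed is the bookkeeping: pinning down the characterization of $\widetilde{\mux_n}$ so that the hard-wired data blocks really yield stable outputs $1$ and $0$, and keeping the correspondence between a data coordinate of $\mux_n$ and the Boolean string indexing it straight, so that ``$d^A_j$ is stable'' is faithfully read as ``$j$ encodes a resolution of $x$''. Once this is fixed, the two cases above are exhaustive — indeed the type of a valid $k$ is \emph{forced} by the value of $\text{subcube-intersect}_n(x,y)$ (a common resolution rules out a stable-and-differing selector coordinate, and its absence rules out a stable-and-differing data coordinate), and a valid $k$ always exists because $\widetilde{\mux_n}(\phi_1(x))=1\ne0=\widetilde{\mux_n}(\phi_2(y))$ — so the argument is complete.
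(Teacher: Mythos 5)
Your reduction is correct and is essentially the construction behind the cited result \cite[Lemma 5.10]{IK23}, which this paper imports without reproducing: hard-wire Alice's data block to $1$ on the coordinates indexed by resolutions of her selector and to $u$ elsewhere (dually $0$/$u$ for Bob), so that a stable differing coordinate in the data block certifies a common resolution while one in the selector block certifies disjointness, and the maps are local so the protocol tree is reused verbatim. The only bookkeeping point worth adding is that this paper's convention restricts $\kwgame^u_{\mux_n}$ to prime implicants and prime implicates, so one should either observe that $\phi_1(x)$ and $\phi_2(y)$ are in fact prime (they are: turning any stable selector or data bit to $u$ creates a resolution with the opposite output) or simply invoke \Cref{thm:kw-hazard-free-implicants} to pass between the restricted and unrestricted games at no cost.
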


Since both $I_{l,n}$ and $\text{subcube-intersect}_n$ are \emph{functions}, the communication complexity of the associated problem can be lower bounded by the rank of the corresponding matrices.
\begin{theorem}[\cite{kan72}]\label{thm:rank-of-disjointness}
\[
\rank(M_{I_n})=2^n,\rank(M_{I_{l,n}})=\binom{n}{l}.
\]
\end{theorem}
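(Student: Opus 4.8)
The plan is to prove both identities as rank computations that exploit the multiplicative structure of set-disjointness matrices: a Kronecker factorization handles $M_{I_n}$, and a bijective reduction to a set-inclusion matrix handles $M_{I_{l,n}}$.

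For the first identity I would work with the disjointness matrix $D\in\{0,1\}^{2^n\times 2^n}$ whose $(S,T)$ entry is $1$ exactly when $S\cap T=\emptyset$ (the communication matrix in the statement is this $D$ up to the relabeling $0\leftrightarrow1$, which perturbs the rank by at most one). Identifying each subset with its $\{0,1\}$ indicator vector, the $(S,T)$ entry of $D$ is $\prod_{i=1}^n\bigl(1-[i\in S]\,[i\in T]\bigr)$, so $D$ is the $n$-fold Kronecker power of the $2\times 2$ matrix $B$ with $B_{a,b}=1-ab$ for $a,b\in\{0,1\}$. Since $\det B=-1$ we have $\rank(B)=2$, and rank is multiplicative under Kronecker products, so $\rank(D)=2^n$. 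The matrix $M_{I_n}$ differs from $D$ only by the entrywise complement (adding the all-ones matrix, which has rank one), so $\rank(M_{I_n})$ equals $\rank(D)$ up to an additive $\pm1$; the exact constant is a one-line check, using that the all-ones vector is precisely the $T=\emptyset$ column of $D$.

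For the second identity I would restrict to subsets of size exactly $l\le n/2$ and use the identity $[\,S\cap T=\emptyset\,]=[\,S\subseteq\overline T\,]$, where $\overline T=[n]\setminus T$ has size $n-l\ge l$. Since $T\mapsto\overline T$ is a bijection from $\binom{[n]}{l}$ onto $\binom{[n]}{\,n-l\,}$, the matrix $M_{I_{l,n}}$ coincides --- after this relabeling of columns, plus the harmless entrywise complement as above --- with the set-inclusion matrix $W$ indexed by $\binom{[n]}{l}\times\binom{[n]}{\,n-l\,}$ with $W_{S,U}=[\,S\subseteq U\,]$. It then remains to show $\rank(W)=\binom nl$, which is the classical full-rank theorem for inclusion matrices; the standard proof computes $W\,W^{\top}$ and observes that $(W\,W^{\top})_{S,S'}=\binom{\,n-|S\cup S'|\,}{\,(n-l)-|S\cup S'|\,}$ depends only on $|S\cap S'|$, so $W\,W^{\top}$ lies in the Bose--Mesner algebra of the Johnson scheme $J(n,l)$, and its eigenvalues, one for each $j\in\{0,1,\dots,l\}$, are all nonzero precisely because $l\le n-l$; hence $W\,W^{\top}$ is nonsingular and $\rank(W)=\binom nl$. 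As $M_{I_{l,n}}$ is already a square $\binom nl\times\binom nl$ matrix, the complement bookkeeping does not disturb this value.

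The main obstacle is this last step --- the non-vanishing of the Johnson-scheme eigenvalues of $W\,W^{\top}$, equivalently the full-rank statement for inclusion matrices --- which is genuinely the content of \cite{kan72} (a result of this type going back to Gottlieb). In the write-up I would cite it rather than reprove it, noting that self-contained proofs proceed either through the representation theory of the symmetric group (decomposing the permutation module on $l$-subsets into Specht modules) or through the explicit eigenvalue formula for the inclusion operator. Everything else --- the Kronecker factorization, multiplicativity of rank under $\otimes$, the complement--bijection reduction, and the rank-one perturbation accounting --- is routine linear algebra.
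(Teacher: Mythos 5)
The paper does not prove this statement at all: it is imported verbatim as a citation to \cite{kan72}, so there is no internal argument to compare yours against. Your proof sketch is the standard one and is essentially correct: the Kronecker factorization $D=B^{\otimes n}$ with $B=\left(\begin{smallmatrix}1&1\\1&0\end{smallmatrix}\right)$, $\det B=-1$, gives $\rank(D)=2^n$ for the unrestricted disjointness matrix, and the reduction of the $l$-uniform case to the inclusion matrix $W_{S,U}=[S\subseteq U]$ together with Gottlieb's full-rank theorem (equivalently, the non-vanishing of the Johnson-scheme eigenvalues when $l\le n-l$) gives $\rank=\binom{n}{l}$. Citing rather than reproving the inclusion-matrix theorem is appropriate, since that is precisely the content of the reference.

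One small point worth carrying out rather than deferring: the ``one-line check'' for the exact constant in the unrestricted case actually comes out the other way. With the paper's convention ($I_n(S,T)=0$ iff $S\cap T=\emptyset$, so the matrix entry is $1$ iff the sets intersect), the row and column indexed by $\emptyset$ are identically zero, and indeed writing $J-D=D(D^{-1}J-I)$ and using $D e_{\emptyset}=\mathbf{1}$ shows $\mathbf{1}^{\top}D^{-1}\mathbf{1}=1$, so $J-D$ is singular and $\rank(J-D)=2^n-1$. The stated value $2^n$ is the rank of the disjointness matrix $D$ itself, not of its complement; this is an imprecision in the paper's statement rather than in your argument, and it is immaterial downstream since only $\log\rank$ up to an additive constant enters the lower bound. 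In the uniform case your assertion that the complement does not change the rank is true but not because the matrix is square: one checks that the $j=0$ eigenvalue of $J-D_l$ is $\binom{n}{l}-\binom{n-l}{l}\neq 0$ for $l\ge 1$ and the $j\ge 1$ eigenvalues are $\pm\binom{n-l-j}{l-j}\neq 0$ for $n\ge 2l$, so both $D_l$ and its complement have full rank $\binom{n}{l}$.
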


\begin{lemma}[{\cite[Lemma 5.12]{IK23}}]\label{lem:rank-subcube-intersect}
$\rank(M_{\text{subcube-intersect}_n})=3^n$ for every $n\ge1$.
\end{lemma}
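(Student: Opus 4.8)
The plan is to exploit the product structure of $\text{subcube-intersect}_n$ and reduce the rank computation to a single coordinate via a Kronecker product. First I would observe that two ternary strings $x,y\in\{0,u,1\}^n$ have a common resolution if and only if they are compatible in every coordinate, i.e.\ for no $i\in[n]$ do we have $\{x_i,y_i\}=\{0,1\}$: if each pair $(x_i,y_i)$ admits a common stable value $z_i$, then $z=(z_1,\dots,z_n)$ is a common resolution, and conversely a common resolution restricts coordinatewise. Writing $M_1$ for the $3\times 3$ matrix indexed by $\{0,u,1\}\times\{0,u,1\}$ with $(M_1)_{a,b}=1$ iff $a$ and $b$ have a common resolution, this shows that
$$
(M_{\text{subcube-intersect}_n})_{x,y}=\prod_{i=1}^n (M_1)_{x_i,y_i}.
$$
Ordering the rows and columns of $M_{\text{subcube-intersect}_n}$ lexicographically by $(x_1,\dots,x_n)$ and $(y_1,\dots,y_n)$, this identity says precisely that $M_{\text{subcube-intersect}_n}$ is the $n$-fold Kronecker product $M_1^{\otimes n}$.

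Next I would compute $\rank(M_1)$. With the coordinate order $0,u,1$ one has
$$
M_1=\begin{pmatrix}1&1&0\\ 1&1&1\\ 0&1&1\end{pmatrix},
$$
and expanding along the first row gives $\det(M_1)=-1\neq 0$, so $\rank(M_1)=3$. Finally, since matrix rank is multiplicative under Kronecker products, $\rank(M_1^{\otimes n})=\rank(M_1)^n=3^n$, which is the claim. The identity $\rank(A\otimes B)=\rank(A)\cdot\rank(B)$ holds over any field, so it is immaterial whether the rank is taken over $\mathbb{Q}$ or $\mathbb{R}$.

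There is essentially no serious obstacle here; the only points needing care are (i) verifying the coordinatewise characterization of ``having a common resolution'', so that the matrix entries genuinely factor as a product, and (ii) fixing a row/column ordering under which that factorization becomes a literal Kronecker product, after which the standard rank-multiplicativity of tensor products closes the argument. An alternative route, avoiding Kronecker products, would be to exhibit an ordering of $\{0,u,1\}^n$ in which $M_{\text{subcube-intersect}_n}$ is, up to permutation, a near-triangular $\pm1$ matrix of full rank; but the tensor argument is shorter and makes the appearance of the factor $3^n$ transparent.
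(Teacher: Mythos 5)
Your proof is correct. Note that the paper itself gives no proof of this lemma---it is imported verbatim from \cite[Lemma 5.12]{IK23}---and your argument is the standard one for such statements: the coordinatewise characterization of having a common resolution makes $M_{\text{subcube-intersect}_n}$ a Kronecker power $M_1^{\otimes n}$, the $3\times 3$ base matrix has determinant $-1$ and hence full rank, and rank multiplicativity of the Kronecker product gives $3^n$. All three steps check out.
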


\begin{proposition}
    [Composition of the set covering and multiplexer functions]\label{prop:comp-monotone-mux}
Let $\mux_m$ be the multiplexer function as in \Cref{def:mux}. Let $MC_{k,n}$ be the set covering function as in \Cref{def:set-covering}, such that $k=c\log n$ for some suitable constant $c>0$ as in \Cref{thm:disjointness-to-set-covering}. Then,
\[
\depth_{F}^\u(MC_{k,n} \diamond \mux_m) \ge \log(3^m)+\log\binom{n}{c \log n} \ge\log(3)\cdot m +\Omega(c(\log n)^2).
\]
\end{proposition}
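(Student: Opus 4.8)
The plan is to instantiate \Cref{prop:lower-bound-direct-sum} with the monotone outer function $f := MC_{k,n}$ and the inner function $g := \mux_m$, and to feed it the two communication \emph{functions} that have already been shown to reduce to the relevant KW games earlier in the excerpt. Concretely, I would take $\Phi := I_{k,n}$, the set-disjointness function on $k$-subsets of $[n]$, and $\Psi := \text{subcube-intersect}_m$. Both are bona fide functions (single-valued on every input), so the rank lower bound applies to each of their communication matrices, which is exactly the hypothesis \Cref{prop:lower-bound-direct-sum} demands. The only parameter bookkeeping is to take $k = c\log n$ with the constant $c>0$ furnished by \Cref{thm:disjointness-to-set-covering}, noting that $k = c\log n \le n/2$ for all large $n$, so that $I_{k,n}$ is well defined.

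First I would invoke \Cref{thm:disjointness-to-set-covering} to get that $\Phi = I_{k,n}$ reduces to $\kwgame^+_{MC_{k,n}}$, and \Cref{lem:reduction-sub-inter-mux} to get that $\Psi = \text{subcube-intersect}_m$ reduces to $\kwgame^u_{\mux_m}$. Plugging these into \Cref{prop:lower-bound-direct-sum} yields
\[
CC(\kwgame^u_{MC_{k,n}\diamond\mux_m}) \;\ge\; \log\bigl(\rank(M_{I_{k,n}})\bigr) + \log\bigl(\rank(M_{\text{subcube-intersect}_m})\bigr).
\]
I would then substitute the two rank evaluations recalled in the excerpt: \Cref{thm:rank-of-disjointness} gives $\rank(M_{I_{k,n}}) = \binom{n}{k} = \binom{n}{c\log n}$, and \Cref{lem:rank-subcube-intersect} gives $\rank(M_{\text{subcube-intersect}_m}) = 3^m$. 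Finally, \Cref{thm:kw-hazard-free} identifies $\depth^u_F(MC_{k,n}\diamond\mux_m)$ with $CC(\kwgame^u_{MC_{k,n}\diamond\mux_m})$, which gives the first claimed inequality $\depth^u_F(MC_{k,n}\diamond\mux_m) \ge \log(3^m) + \log\binom{n}{c\log n}$.

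For the second inequality, note $\log(3^m) = \log(3)\cdot m$, so it remains to lower bound $\log\binom{n}{c\log n}$. Using the elementary estimate $\binom{a}{b} \ge (a/b)^b$ (each factor $\tfrac{a-i}{b-i}\ge\tfrac{a}{b}$ since $a\ge b$) with $a=n$, $b=c\log n$, I obtain $\log\binom{n}{c\log n} \ge c\log n\cdot\bigl(\log n - \log(c\log n)\bigr) = c\log n\cdot\bigl(\log n - \log c - \log\log n\bigr) = \Omega\bigl(c(\log n)^2\bigr)$, since $\log\log n = o(\log n)$. This completes the chain. I do not expect a real obstacle here: the argument is a direct assembly of results stated earlier, and the only points needing care are verifying that $\Phi$ and $\Psi$ are genuine functions (so the rank method applies) and that the regime $k = c\log n$ is admissible both for \Cref{thm:disjointness-to-set-covering} and for the definition of $I_{k,n}$.
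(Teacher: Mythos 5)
Your proposal is correct and follows essentially the same route as the paper: both invoke \Cref{thm:disjointness-to-set-covering} and \Cref{lem:reduction-sub-inter-mux} to feed $I_{k,n}$ and $\text{subcube-intersect}_m$ into \Cref{prop:lower-bound-direct-sum}, substitute the rank values from \Cref{thm:rank-of-disjointness} and \Cref{lem:rank-subcube-intersect}, and conclude via \Cref{thm:kw-hazard-free} together with the standard estimate $\log\binom{n}{c\log n}\ge c\log n\cdot\log(n/(c\log n))=\Omega(c(\log n)^2)$. No gaps.
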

\begin{proof} 
By \Cref{lem:reduction-sub-inter-mux}, we have $\kwgame_{\mux_m}^\u\ge subcube-intersect_m$.  \Cref{thm:disjointness-to-set-covering} shows that $MC_{k,n}\ge I_{k,n}$. Following \Cref{prop:lower-bound-direct-sum}, we get:
\[
CC(\kwgame_{MC_{k,n} \diamond \mux_m}^\u) \ge \log(\rank(M_{subcube-intersect_m}))+\log(\rank(M_{I_{k,n}})).
\]
By \Cref{lem:rank-subcube-intersect} and \Cref{thm:rank-of-disjointness}, we get:
\begin{align*}
&\log(\rank(M_{subcube-intersect_m}))+\log(\rank(M_{I_{k,n}})) =\\ &\log(3^m)+\log\binom{n}{k} \ge \log(3)\cdot m+k\cdot\log\left(\frac{n}{k}\right)=\log(3)\cdot m+c\log^2 n-c\log(n)\log(c\log n)=\\
&\log(3)\cdot m + (1+o(1))c\log^2 n.
\end{align*}
The claim now follows from \Cref{thm:kw-hazard-free}.
\end{proof}

\begin{remark}
    As $MC_{k,n}$ is monotone, it equals its hazard-derivative at  $\bar{0}$, which is also the ``hardest'' derivative (this can be inferred from the proof of \Cref{thm:derivative-lower-bound}). Hence, a formula for the hazard-derivative of $MC_{k,n} \diamond \mux_m$ can be obtained from composing the monotone formula for $MC_{k,n}$ with that of the hazard-derivative of $\mux_m$ (see the ``chain rule'' in \cite[Lemma 4.7]{IKL+19}).
    As shown in \cite[Proposition 1]{IK23}, every hazard-derivative of $\mux_m$, has a formula of depth $m+\left\lceil \log (m +1)\right\rceil$. Thus, every hazard-derivative of $MC_{k,n} \diamond \mux_m$ has a monotone formula of depth $\left\lceil\log \binom{n}{k}\right\rceil +\left\lceil \log n\right\rceil+\left\lceil \log k\right\rceil+m+\left\lceil \log (m +1)\right\rceil$, 
    which for our setting of parameters would give a lower bounds of $m + (1+o(1))c\log^2 n $, which is worse than what \Cref{prop:comp-monotone-mux} yields. 
\end{remark}

\begin{remark}
    Since we are using a monotone outer function, the advantage that we can get compared to the hazard-derivative method must come from the inner function.
\end{remark}

\section{Open problems}\label{sec:open}
The most obvious open problem is determining whether the universal hazard-free formula size upper bound in the worst case is equal to that of a random function.

\begin{question}
    Is it true that for every Boolean function $\booleanf{f}{n}$, $\size_{F}^\u (f) \leq 2^{(1+o(1))n}$?
\end{question}

Another interesting question is whether \Cref{thm:kw-f-not-breaks-if} also holds for circuits:

\begin{question}
    Let $\booleanf{f}{n}$ be a non-constant Boolean function. Is it true that there exists $\booleanx{x}{n}$ such that:
\[
    \size_{C}^{+}(\dfunc{f}{x}) = \size_{C}^{\u}(f),
\]
if and only if $f$ is a unate function?
\end{question}

The validity of the KRW conjecture in the hazard-free setting is also an intriguing question. Many previous works on the KRW conjecture have considered restricted versions to gain intuition and develop tools, including \cite{KRW95, hw92, EIRS01, GMWW17, km18, Mei20}, among others. We believe that the hazard-free KRW conjecture presents an interesting restricted model to study, with potential applications to all Boolean functions.

\conjhazardfreekrw*

Finally, we consider block compositions with the $\xor$ function. Recall that the hazard-free complexity of $\xor_n$ is equal to its standard formula complexity. Additionally, its prime implicants and prime implicates are $f^{-1}(1)$ and $f^{-1}(0)$, respectively, meaning its hazard-free and the classic KW games are equivalent. This naturally leads to the question of whether the lower bound proof of Dinur and Meir  \cite{DM18} can be adapted to the hazard-free setting.
Specifically, we wonder if, in the following theorem, one can replace $\size_F$ with $\size_F^{\u}$.

\begin{theorem}[Composition over parity {\cite[Theorem 3.1]{DM18}}]\label{thm:parity-composition}
Let $\booleanf{f}{k}$ be a non-constant Boolean function. Then:
\[
\size_F(f\diamond \xor_m)\ge \frac{\size_F(f)\cdot \size_F(\xor_m)}{2^{\tilde{O}(\sqrt{k+\log m})}},
\]
where $\tilde{O}(t):=O(t\cdot \log^{O(1)}t)$. 
\end{theorem}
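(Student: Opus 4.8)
The plan is to transport everything to the Karchmer--Wigderson world via \Cref{thm:kw-classic}, writing $\size_F(\cdot)=\monorect(\kwgame_{\cdot})$ for the minimum number of leaves in a protocol tree, so that the goal becomes
$$
\monorect(\kwgame_{f\diamond\xor_m})\ \ge\ \frac{\monorect(\kwgame_f)\cdot\monorect(\kwgame_{\xor_m})}{2^{\tilde{O}(\sqrt{k+\log m})}}.
$$
The matching inequality $\monorect(\kwgame_{f\diamond\xor_m})\le\monorect(\kwgame_f)\cdot\monorect(\kwgame_{\xor_m})$ is the routine composition of protocols: Alice and Bob privately form the parity vectors $a=\xor_m[X]\in f^{-1}(1)$ and $b=\xor_m[Y]\in f^{-1}(0)$, run an optimal $\kwgame_f$-protocol on $(a,b)$, and whenever that protocol wants to test a coordinate $i$ they invoke an optimal $\kwgame_{\xor_m}$-protocol on the rows $(X_i,Y_i)$; the leaf count of the resulting tree is the product. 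Hence the theorem asserts that this trivial strategy is essentially optimal, and all the content lies in the lower bound.

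For the lower bound, fix an optimal protocol $\Pi$ for $\kwgame_{f\diamond\xor_m}$ with $L$ leaves. The first move is a \emph{projection} to $\kwgame_f$: hardwire, for each $i\in[k]$ and each $\beta\in\{0,1\}$, a fixed row $r^{\beta}_i\in\xor_m^{-1}(\beta)$ (chosen so that $r^{0}_i$ and $r^{1}_i$ differ somewhere), and embed an input $a\in f^{-1}(1)$ of Alice as the matrix with $i$-th row $r^{a_i}_i$, and likewise Bob's $b\in f^{-1}(0)$. These matrices agree in every row $i$ with $a_i=b_i$ and disagree in the others, so when $\Pi$ outputs an entry $(i,j)$ on such an input the row index $i$ is a legal answer to $\kwgame_f(a,b)$. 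Pruning $\Pi$ to the leaves reached by these embedded inputs therefore yields a valid $\kwgame_f$-protocol, which has at least $\monorect(\kwgame_f)$ leaves; call these the \emph{$f$-leaves}. The second move is to argue that each $f$-leaf, carrying an output row $i$, must still resolve $\kwgame_{\xor_m}$ on that row: starting from inputs that reach the leaf and instead letting the $i$-th row range over $\xor_m^{-1}(1)$ for Alice and $\xor_m^{-1}(0)$ for Bob (updating the value of $f$ accordingly), the part of $\Pi$ sitting below the leaf is a protocol for $\kwgame_{\xor_m}$, so it splits into at least $\monorect(\kwgame_{\xor_m})$ further leaves. If these row sub-trees, one per $f$-leaf, occupied pairwise disjoint sets of leaves of $\Pi$, multiplying the two counts would already give the product bound with no loss.

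They do not occupy disjoint sets, and quantifying the overlap is the crux; this is precisely where the inner function being parity (rather than arbitrary) is used. Concretely, one proves a \emph{multiplicity bound}: a single leaf of $\Pi$ can be charged to at most $2^{\tilde{O}(\sqrt{k+\log m})}$ pairs consisting of an $f$-leaf together with a row-$i$ leaf below it, after which
$$
L\ \ge\ \frac{\monorect(\kwgame_f)\cdot\monorect(\kwgame_{\xor_m})}{2^{\tilde{O}(\sqrt{k+\log m})}}.
$$
The set of matrices reaching a fixed node of $\Pi$ projects, under $X\mapsto\xor_m[X]$, to a product of affine subspaces of $\{0,1\}^k$, and a transcript can encode only a bounded amount of information about \emph{which} of the $k$ rows is the decisive one; making this precise is a hybrid/amortization argument over the $k$ coordinates in which each coordinate is either ``resolved by communication'' (its cost already absorbed into the $\kwgame_f$ count) or ``paid for'' inside the multiplicity, and the optimal trade-off---cheaply revealing roughly $\sqrt{k+\log m}$ coordinates and paying for the rest---produces the square root in the exponent rather than a linear loss. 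This multiplicity estimate is the genuinely new ingredient of Dinur--Meir over the earlier composition theorems for the universal relation, and it is the step I expect to be the main obstacle; the rest is bookkeeping around the KW correspondence, and instantiating $k=\log(n/2)$, $m=n/(2\log(n/2))$ makes the loss $2^{\tilde{O}(\sqrt{\log n})}=n^{o(1)}$, which is exactly why the near-cubic bound of \Cref{thm:andreev-lower-bound} comes out of it.
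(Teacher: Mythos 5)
First, note that the paper does not prove this statement at all: it is quoted verbatim from \cite{DM18} (their Theorem 3.1) in the open-problems section, so there is no internal proof to compare against. Judged on its own terms, your outline correctly identifies the frame -- passing to $\monorect(\kwgame_{f\diamond\xor_m})$ via the KW correspondence, observing that the product upper bound is the trivial protocol composition, projecting a protocol for $\kwgame_{f\diamond\xor_m}$ down to one for $\kwgame_f$ by hardwiring rows, and recognizing that the entire difficulty is in showing the $\kwgame_f$-part and the per-row $\kwgame_{\xor_m}$-parts cannot share too many leaves. But the proof has a genuine gap: the ``multiplicity bound'' you invoke \emph{is} the theorem, and you assert it rather than prove it. The two mechanisms you offer in its support do not hold up. The claim that the set of matrices reaching a node of $\Pi$ projects under $X\mapsto\xor_m[X]$ to a product of affine subspaces of $\{0,1\}^k$ is false: a combinatorial rectangle in matrix space projects to an arbitrary pair of subsets of $f^{-1}(1)\times f^{-1}(0)$, with no linear structure. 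And the ``second move'' is not well-defined as stated: once you let row $i$ range over $\xor_m^{-1}(1)\times\xor_m^{-1}(0)$ you change $a_i$ and hence possibly the value $f(a)$, so the resulting inputs need not be legal for $\kwgame_{f\diamond\xor_m}$; ``updating the value of $f$ accordingly'' is not an operation available to you. This is precisely the obstruction that makes composition with a general outer $f$ hard, as opposed to composition with the universal relation.

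For the record, the actual argument of \cite{DM18} is not a leaf-charging/hybrid argument over the $k$ coordinates. They work with distributions over inputs and a min-entropy (density) potential: the protocol is split into a first stage, during which one tracks how much information the transcript has revealed about the vectors $\xor_m[X]$ and $\xor_m[Y]$, and a second stage in which a structural lemma forces the players to essentially solve a parity problem on some row whose identity is still largely undetermined. The $2^{\tilde{O}(\sqrt{k+\log m})}$ loss arises from optimizing the threshold between the two stages against the per-step entropy loss in that structural lemma -- not from a per-coordinate ``resolved vs.\ paid for'' accounting. So beyond being incomplete, the sketch points at the wrong mechanism for the square root; as a proof it does not go through.
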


\begin{acks}
The first author would like to thank Maya Baruch for her invaluable discussions on various topics, insightful feedback on the proofs, and support in articulating ideas clearly. She is also grateful to Ido Weinstein and Rotem Zluf for their willingness to listen to her early ideas, which contributed to identifying and correcting mistakes in the initial stages of this project.
\end{acks}

\bibliographystyle{ACM-Reference-Format}
\bibliography{paper}

\end{document}